\tikzset{->-/.style={decoration={markings,mark=at position .5 with {\arrow{>}}},postaction={decorate}}}
\tikzset{>=latex'}
\tikzset{vertex/.style={draw,circle,inner sep=0pt,minimum size=15pt}}
\tikzset{medium vertex/.style={draw,circle,inner sep=0pt,minimum size=10pt}}
\tikzset{small vertex/.style={draw,circle,inner sep=0pt,minimum size=8pt}}
\tikzset{box/.style args={#1}{draw=#1,rounded corners,thick,fill=none}}
\tikzset{nobbox/.style args={#1}{draw=none,rounded corners,fill=#1}}
\tikzset{item/.style={circle,inner sep=0pt,minimum size=3pt, fill},>=latex'} 
\newcommand{\tube}[3]{
    \node[draw,circle,minimum size=#3,#2] at #1 {};
}
\newcommand{\ubox}[3]{
    \node[draw,minimum size=#3,#2] at #1 {};
}
\definecolor{rwthblue}{cmyk}{1, .5, 0, 0}
\definecolor{rwthlightblue}{cmyk}{.45, .14, 0, 0}
\definecolor{rwthred}{cmyk}{.15, 1, 1, 0}
\definecolor{rwthgreen}{cmyk}{.7, 0, 1, 0}
\definecolor{rwthorange}{cmyk}{0, .4, 1, 0}
\definecolor{rwthmagenta}{cmyk}{0, 1, .25, 0}
\definecolor{rwthyellow}{cmyk}{0, 0, 1, 0}
\definecolor{rwthpetrol}{cmyk}{1, .3, .5, .3}
\definecolor{rwthturquoise}{cmyk}{1, 0, .4, 0}
\definecolor{rwthmay}{cmyk}{.35, 0, 1, 0}
\definecolor{rwthbordeaux}{cmyk}{.25, 1, .7, .2}
\definecolor{rwthviolet}{cmyk}{.7, 1, .35, .15}
\definecolor{rwthlila}{cmyk}{.6, .6, 0, 0}
\DeclareFontFamily{U}{mathx}{}
\DeclareFontShape{U}{mathx}{m}{n}{<-> mathx10}{}
\DeclareSymbolFont{mathx}{U}{mathx}{m}{n}
\DeclareMathAccent{\widehat}{0}{mathx}{"70}
\DeclareMathAccent{\widecheck}{0}{mathx}{"71}
\newcommand{\level}{\Theta}
\newcommand{\lmin}{\Theta_{\min}}
\newcommand{\Tt}{\mathcal{T}}
\newcommand{\pr}[2]{\langle #1, #2 \rangle}
\newcommand{\bO}{\mathbf{0}}
\newcommand{\rk}{\rho}
\newcommand{\rkmin}{\check{\rho}}
\newcommand{\rkmax}{\hat{\rho}}
\newcommand{\bz}{\mathbf{z}}
\newcommand{\Z}{\mathbb{Z}}
\newcommand{\R}{\mathbb{R}}
\newcommand{\B}{\mathcal{B}}
\newcommand{\Bt}{\mathcal{B^\times}}
\newcommand{\U}{b}
\newcommand{\rng}{[\bO,\U]_\Z}
\newcommand{\DM}{\widecheck{\mathcal{D}}}
\newcommand{\dset}{\mathcal{D}}
\newcommand{\dhat}{\widehat{\mathcal{D}}}
\newcommand{\pmin}{{p_*}}
\newcommand{\pmax}{{p^*}}
\newcommand{\mr}{\theta}
\newcommand{\supp}{\mathrm{supp}}
\newcommand{\I}{\mathcal{I}}
\newcommand{\abs}[1]{\left\lvert #1 \right\rvert}
\newcommand{\Mnat}{\ensuremath{\mathrm{M}^\natural}}
\newcommand{\Lnat}{\ensuremath{\mathrm{L}^\natural}}
\newcommand{\M}{\ensuremath{\mathrm{M}}}
\newcommand{\EO}{\mathrm{ExO}}
\newcommand{\BO}{\mathrm{DO}}
\newcommand{\DRO}{\mathrm{DRO}}
\newcommand{\BB}{{\color{rwthblue}B}}
\newcommand{\RR}{{\color{rwthred}R}}
\newcommand{\GG}{{\color{rwthgreen}G}}
\let\emptyset\varnothing
\let\epsilon\varepsilon
\let\phi\varphi
\DeclareMathOperator{\overd}{overd}
\DeclareMathOperator{\underd}{underd}
\newcolumntype{C}[1]{>{\centering\arraybackslash}p{#1}}
\newcommand\blfootnote[1]{%
  \begingroup
  \renewcommand\thefootnote{}\footnote{#1}%
  \addtocounter{footnote}{-1}%
  \endgroup
}
\begin{document}

\title{Faster Dynamic Auctions via Polymatroid Sum}

\author{
	Katharina Eickhoff\inst{1}\thanks{KE is funded by the Deutsche Forschungsgemeinschaft (DFG, German Research Foundation) – 2236/2.} \and
	Meike Neuwohner\inst{2}\thanks{MN was supported by the Engineering and Physical Sciences Research Council, part of UK Research and Innovation, grant ref. EP/X030989/1.} \and
	Britta Peis\inst{1} \and
	Niklas Rieken\inst{1} \and%
	\\%
	Laura Vargas Koch\inst{3} \and
	László A. Végh\inst{2}\inst{4}\thanks{LAV received funding from the European Research Council (ERC) under the European Union’s Horizon 2020 research and innovation programme (grant agreement no. ScaleOpt– 757481).}
}

\institute{
   School of Business and Economics, RWTH Aachen University \and
   Department of Mathematics, London School of Economics and Political Science \and
   Hausdorff Center for Mathematics and Research Institute of Discrete Mathematics, University of Bonn \and
   Institute for Advanced Studies, Corvinus University
}

\maketitle
\blfootnote{An abstract of this paper appeared in the proceedings of WINE 2023 \cite{eickhoff2023faster}.}\vspace{-3ex}

\begin{abstract}
    We consider dynamic auctions for finding Walrasian equilibria in markets with indivisible items and strong gross substitutes valuation functions.
    Each price adjustment step in these auction algorithms requires finding an inclusion-wise minimal maximal overdemanded set or an inclusion-wise minimal maximal underdemanded set at the current prices.
    Both can be formulated as a submodular function minimization problem.

    We observe that minimizing this submodular function corresponds to a polymatroid sum problem, and using this viewpoint, we give a fast and simple push-relabel algorithm for finding the required sets.
    This improves on the previously best running time of Murota, Shioura and Yang (ISAAC 2013).
    Our algorithm is an adaptation of the push-relabel framework by Frank and Mikl\'os (JJIAM 2012) to the particular setting.
    We obtain a further improvement for the special case of unit-supplies.

	We further show the following monotonicty properties of Walrasian prices: both the minimal and maximal Walrasian prices can only increase if supply of goods decreases, or if the demand of buyers increases.
	This is derived from a fine-grained analysis of market prices.
	We call \emph{``packing prices''} a price vector such that there is a feasible allocation where each buyer obtains a utility maximizing set.
	Conversely, by \emph{``covering prices''} we mean a price vector such that there exists a collection of utility maximizing sets of the buyers that include all available goods.
	We show that for strong gross substitutes valuations, the component-wise minimal packing prices coincide with the minimal Walrasian prices and the component-wise maximal covering prices coincide with the maximal Walrasian prices.
	These properties in turn lead to the price monotonicity results.
\end{abstract}

\section{Introduction}\label{sec:intro}
We study computational and structural aspects of Walrasian market equilibria.
Let us consider a market involving a set of buyers with quasi-linear utilities that are interested in buying a set of goods.
The main objective is to find an allocation of the goods to buyers along with prices for the goods which should fulfill the following properties:
\begin{enumerate*}[label=(\roman*)]
    \item every buyer gets a preferred bundle, i.e., a set of items maximizing her utility at the given prices, \label{goals:preferred}
    \item no good is oversold, \label{goals:packing}
    \item all goods are sold, and \label{goals:covering}
   \item the allocation is Pareto-optimal.\label{goals:pareto}
\end{enumerate*}
Properties~\ref{goals:preferred}--\ref{goals:covering} together are the \emph{Walrasian equilibrium} conditions, which coincidentally imply property~\ref{goals:pareto}; this is often referred to as \emph{First Welfare Theorem}.
In this paper, we are interested in finding such an equilibrium allocation and prices via a \emph{dynamic auction}.
In simple terms, in a dynamic auction, the auctioneer iteratively announces prices for goods and the buyers report their demand correspondence.
If there is a feasible way to satisfy all the demands (i.e. properties~\ref{goals:preferred} and \ref{goals:packing} are satisfied) and all goods are distributed to the buyers (i.e. property~\ref{goals:covering} is satisfied in addition) then the auction terminates with the announced prices and an allocation that is feasible.
Otherwise, prices are adjusted in an intuitive way: goods that are overdemanded increase in price, whereas underdemanded goods decrease in price.

It is easy to see that such a dynamic auction can take longer than a static \emph{sealed-bid auction}, in which the auctioneer receives all the buyers' information at the start instead of incrementally after announcing new prices.
The are several reasons why a dynamic auction may be preferable (see \cite{ausubel2004efficient, cramton1998ascending} for a detailed discussion).
First of all, dynamic auctions as described above are more transparent; buyers observe how some goods increase or decrease in price, whereas in a sealed-bid auction, the auctioneer just sets a price whose origin remains opaque to the buyers as the auctioneer is the only agent with full information.
Typically, the buyers' valuation functions (and hence, the utilities upon receiving a bundle at a given price) are private information and buyers do not want to disclose this to an auctioneer in full if it is not necessary.
Moreover, while one can argue that sealed-bid auctions might run faster given complete information of buyers (see~\cite{leme2020computing}), this is only true under the assumption that there are no communication costs between buyers and auctioneer; after all, a valuation function of a buyer has to specify a value for every subset of goods (i.e. exponentially many in the number of goods) and all these values have to be communicated to the auctioneer.
In fact, one might even argue that the exact valuation functions might not be known to the buyers themselves if they are only implicitly given in form of some combinatorial structure, so some of the computational tasks are just outsourced to the buyers.

As mentioned above, a dynamic auction iteratively adjusts prices by raising prices on underdemanded, or overdemanded sets.
Overdemanded [underdemanded] sets are those sets of items that minimize [maximize] the demand minus the supply.
We will study the following four types of dynamic auctions (see \Cref{sec:auctions} for a precise description), which are known to terminate with the unique component-wise minimal [maximal] Walrasian price vector in auctions with strong gross substitutes, provided that prices are raised [decreased] on the inclusion-wise minimal maximal overdemanded [underdemanded] set in each iteration.
In this paper, we present fast algorithms for the problem of finding such an inclusion-wise minimal maximal overdemanded [underdemanded] set.

\begin{enumerate}[label={(\Roman*)}]
    \item An ascending auction that finds minimal Walrasian prices from any lower bound on prices,\label{auctions:ascending}
    \item a descending auction that finds maximal Walrasian prices from any upper bound on prices,\label{auction:descending}
    \item a two-phase auction, that finds some Walrasian prices from any initial price vector, and\label{auctions:two-phase}
    \item a greedy auction, that finds some Walrasian prices from any initial price vector.
    \label{auctions:greedy}
\end{enumerate}
The auctions~\ref{auctions:ascending}--\ref{auctions:two-phase} can be viewed as efficient implementations of the corresponding auctions by~\citet{ausubel2006efficient}, whereas the auction~\ref{auctions:greedy} is a speeded-up version of the auction by \citet{murota2016time}.

Moreover, we are able to show that the minimal and maximal Walrasian prices have the following properties, where we refer to the component-wise ordering among price vectors.
\begin{enumerate}
    \item Every price vector satisfying~\ref{goals:preferred} and \ref{goals:packing} is at least as large as the minimal Walrasian price vector,
    \item every price vector satisfying~\ref{goals:preferred} and \ref{goals:covering} is at most as large as the maximal Walrasian price vector, and
    \item minimal and maximal Walrasian price vectors react naturally to changes in demand and supply, i.e. they cannot decrease if demand increases or supply decreases and they cannot increase if demand decreases or supply increases.
\end{enumerate}

To start with, we first have a closer look into the economic model.

\subsection{The economic model}
In this paper, we consider the following market.
Let $E = \{1, \ldots, m\}$ be a set of $m$ \emph{item types} (we will just refer to those as \emph{items}), and $N = \{1, \ldots, n\}$ be a set of $n$ \emph{buyers}.
Each item $e \in E$ is available in $\U(e)$ indivisible units for some $\U \in \Z_+^E$, and each buyer $i \in N$ has an individual valuation function $v_i\colon \rng \to \Z_+$ over the set $\rng \coloneqq \{z \in \Z^m \mid \bO \leq z \leq \U\}$ of possible \emph{bundles}.
For $z \in \rng$, the value $z(e)$ represents the number of items of type $e \in E$.
The value $v_i(z) \in \Z_+$ can be interpreted as buyer $i$'s maximum willingness to pay for a bundle $z \in \rng$.
Throughout the paper, we assume that all valuation functions are \emph{non-decreasing}, i.e. for all $z, z' \in \rng$ with $z \leq z'$, we have $v_i(z) \leq v_i(z')$.
We let $B \coloneqq \max_{e \in E} \U(e)$.
We refer to the case where $\U(e) = 1$ for all $e \in E$ as the \emph{unit-supply} case, and to the general problem as the \emph{multi-supply} case.

For a given price vector $p \in \R_+^E$, the \emph{utility function} of buyer $i$ is given by $u_i(z) = v_i(z) - \pr{p}{z}$.
Buyer $i$'s \emph{indirect utility function} is given by
\begin{equation*}
    V_i(p) \coloneqq \max \left\{v_i(z) - \pr{p}{z} \mid z \in \rng \right\}\, .
\end{equation*}
Let us denote by $\dset_i(p)$ the \emph{set of preferred bundles} (also called \emph{demand set}), which are the maximizers of $u_i$,
that is,
\begin{equation*}
    \dset_i(p) \coloneqq \arg\max\left\{v_i(z) - \pr{p}{z} \mid z \in \rng \right\}\hspace{5ex}.
\end{equation*}
Further let $\DM_i(p)$ denote the \emph{set of minimal preferred bundles} and $\dhat_i(p)$ denote the \emph{set of maximal preferred bundles} of buyer $i$ for a given price vector $p$, i.e.,
\begin{equation*}
    \DM_i(p) \coloneqq \{ z \in \dset_i(p) \mid \text{there is no } y \in \dset_i(p)\setminus \{z\} \text{ with } y \leq z\}
\end{equation*}
and
\begin{equation*}
    \dhat_i(p) \coloneqq \{ z \in \dset_i(p) \mid \text{there is no } y \in \dset_i(p)\setminus \{z\} \text{ with } y \geq z\}.
\end{equation*}

An allocation $(z_1, \dots, z_n)$ is called \emph{packing} with respect to the prices $p \in \R_+^E$ if
\begin{equation}
    \tag{packing}
    z_i \in \dset_i(p) \text{ for all } i \in N \hspace{5ex} \text{ and } \hspace{5ex} \sum_{i \in N} z_i(e) \leq \U(e) \text{ for all } e \in E.
\end{equation}
Packing refers to the fact that one preferred bundle per buyer can be chosen such that the sets can be packed in the set of items (no item is assigned too often).
A price vector $p$ that supports a packing allocation is also called \emph{packing}.\footnote{Note that \cite{ben2013ascending} call packing prices \emph{envy-free prices}.}
Note that it is easy to get packing prices by just setting them sufficiently high such that no buyer is interested in any item anymore.

An allocation $(z_1, \dots, z_n)$ is called \emph{covering} with respect to the prices $p \in \R_+^E$ if
\begin{equation}
    \tag{covering}
    z_i \in \dset_i(p) \text{ for all } i \in N \hspace{5ex} \text{ and } \hspace{5ex} \sum_{i \in N} z_i(e) \geq \U(e) \text{ for all } e \in E.
\end{equation}
Here, covering refers to the fact that there is a preferred bundle per buyer such that all items are covered by these sets.
We may also call a price vector \emph{covering} if there is a supporting covering allocation.

A price vector $p \in \R_+^E$ is called a \emph{Walrasian price vector} \citep{walras1874} or a \emph{competitive equilibrium price vector} if there is an allocation $(z_1, \ldots , z_n)$ with
\begin{equation}
    \tag{Walrasian}
    z_i \in \dset_i(p) \text{ for all } i \in N \hspace{5ex} \text{ and } \hspace{5ex} \sum_{i \in N} z_i(e) = \U(e) \text{ for all } e \in E.
\end{equation}
So there is one allocation that is both packing and covering.
We show in \Cref{apx:packing_covering} that in auctions with strong gross substitutes valuations, a price vector is Walrasian if and only if it is both packing and covering, i.e., that there is an allocation with both properties and not just different allocations, one packing and one covering.

Walrasian prices may not always exist (see Example~\ref{example:nongs_nowalrasian} in \Cref{app:Walrasian_prices_dont_exist_without_GS}).
However, Walrasian prices are guaranteed to exist if all valuation functions of the buyers satisfy the \emph{(strong) gross substitutes}
property \cite{kelso1982job} (see \Cref{def:GS}).
Intuitively, gross substitutes expresses the property that the demand of an item does not decrease if only the prices on other items go up.
For strong gross substitutes valuations, the Walrasian price vectors form a lattice w.r.t.\ the component-wise ordering, see \citet{gul1999walrasian} and \citet{ausubel2006efficient}.
Consequently, there exists a unique component-wise minimal and a unique component-wise maximal Walrasian price vector.
We also call the component-wise minimal Walrasian prices \emph{buyer-optimal Walrasian prices} and the component-wise maximal Walrasian prices \emph{seller-optimal Walrasian prices}.

In this paper, we focus on computational aspects of the ascending auctions described by \citet{gul2000english} (for unit-supply) and by \citet{ausubel2006efficient} (for multi-supply) for computing the minimal [maximal] Walrasian price vector in a market with (strong) gross substitutes valuations.
In both cases, the maximal overdemanded [underdemanded] sets are the minimizers of a submodular function.
Thus, the minimal maximal overdemanded [underdemanded] set can be found by general submodular function minimization algorithms.

An exciting connection between market equilibrium computation and discrete optimization was established by \citet{fujishige2003note}, who pointed out the equivalence of the gross substitutes property and \Mnat-concavity, a key concept in the field of \emph{discrete convex analysis}.
This theory combines convex analysis and matroid theory, summarized in the monograph by \citet{murota2003book}.
Thus, powerful tools from discrete convex analysis became available to obtain more efficient equilibrium computations.
In particular, \citet{murota2013computing} obtained faster algorithms for the minimal maximal overdemanded [underdemanded] set computation by exploiting the special structure of the submodular function and taking advantage of demand oracle queries; this will be discussed in Section~\ref{sec:related-work}.

\subsection{Our contributions}
Our contributions are twofold: we provide faster auction algorithms, and we present structural results on price monotonicity.

In the same vein as \citet{murota2013computing}, we exploit the special structure of the \emph{Lyapunov function} to obtain even faster running times.
We view the submodular minimization problem from a dual viewpoint as a \emph{matroid union problem} (for the unit-supply case) or a \emph{polymatroid sum problem} (for the multi-supply case),
and use a fast (and also simple) push-relabel algorithm for this problem.
Let us denote by $\BO$ the time needed for each \emph{demand oracle query} and by $\EO$ the time of each \emph{exchange oracle query} (see Section~\ref{sec:oracle} for definitions and discussion).
Then, our algorithm runs in time $\mathcal{O}(n \cdot \BO + nm^3 \cdot \EO)$ in the multi-supply case.
In comparison, the algorithm in \citep{murota2013computing} runs in time $\mathcal{O}(n \cdot \BO + nm^4 \log(nmB) \cdot \EO)$.
For the unit-supply case, the problem becomes the classical \emph{matroid union problem}, and we obtain an even better running time of $\mathcal{O}(n \cdot \BO + (m^3+nm^2) \cdot \EO)$.

We note that our algorithm is significantly simpler than the one in \citep{murota2013computing}.
We also show that given an optimal solution to the polymatroid sum problem, the minimal maximal overdemanded [underdemanded] set can be found by a simple breadth-first search in an exchange graph.

Our push-relabel algorithm is a special implementation of the more general submodular flow feasibility algorithm by \citet{frank2012simple}.
In this context, our contribution is giving an efficient implementation in terms of the number of oracle calls.
The description in \citep{frank2012simple} is generic and gives bounds in terms of \emph{basic operations}.
Implementing a single such operation may take $\mathcal{O}(nm \cdot \EO)$; however, we show that this can be amortized over a sequence of basic operations.
See Section~\ref{sec:related-work} on further discussion of push-relabel algorithms.

Our second main contribution is on price monotonicity.
We make a clear conceptual distinction between packing, covering and Walrasian prices.
For example, it is not clear a priori whether there may exist a packing price vector $q$ with $q(e) < \pmin(e)$ for some items $e \in E$, where $\pmin$ denotes the buyer-optimal Walrasian price vector, or whether there may exist a covering price vector $q'$ with $q'(e) > \pmax(e)$ for some items $e \in E$, where $\pmax$ is the maximal Walrasian price vector.
In \Cref{thm:min_packing_is_min_walrasian}, we show that in fact $\pmin \leq q$ for all packing price vectors $q$ and in \Cref{thm:max_market_clearing_is_max_walrasian}, we show that $\pmax \geq q'$ for all covering price vectors $q'$.

Building on this result, we can also prove in \Cref{sec:sensitivity} that the minimal [maximal] Walrasian prices react naturally to changes in supply and demand, i.e., if total supply of items decreases or the total demand of buyers increases, the minimal [maximal] Walrasian prices can only increase.
Independently, \citet{raach2024monotonicity} proved the same monotonicity results.

\subsection{Related work}\label{sec:related-work}

In this section, we give a literature review over the topics related to our economic model and equilibrium concept.
For an excellent overview over (strong) gross substitutes, Walrasian equilibria, and ascending auctions we refer to the survey by \citet{leme2017gross}.

\paragraph{Gross substitutes and Walrasian equilibria}
\citet{kelso1982job} showed in their seminal paper that a Walrasian equilibrium is guaranteed to exist if all valuation functions satisfy the gross substitutes condition that in layman's terms can be stated as, \textit{``the demand for an item does not not decrease, if only the prices of other items are increased''}.\footnote{A formal definition is provided in Section~\ref{sec:preliminaries}.}
Roughly speaking, the study of economic models from an algorithmic and complexity theoretical point of view really started off in the 1990s when also the foundations of algorithmic game theory were laid out.
\citet{gul1999walrasian} showed with their Maximum Domain Theorem that indeed gross substitutes are the largest class of valuation functions (containing unit-demand valuations) that guarantee the existence of Walrasian equilibria.
Additionally, they provided equivalent definitions for the gross substitutes condition and showed that Walrasian prices form a complete lattice, which implies that there exist unique component-wise minimal and maximal Walrasian price vectors.
There are additional equivalent characterization of gross substitutes in \cite{ben2013ascending}.
However, the characterization by \citet{fujishige2003note} turns out to be most useful from a mathematical point of view.
It states that a valuation function has the gross substitutes property if and only if it is \Mnat-concave, which allows for the usage of powerful tools from discrete convexity.
This result also has been extended to strong gross substitutes valuation functions in \cite{murota2013computing}.
The differences between strong gross substitutes and ``weak gross substitutes'' is discussed in \cite{milgrom2009substitute}.

There are other classes than (strong) gross substitutes for which Walrasian prices are guaranteed to exist \cite{sun2009double, ben2013ascending} but these classes do not contain the natural unit-demand valuations.

\paragraph{Dynamic Auctions or Walrasian t\^atonnement}
L\'eon Walras, the namesake of our equilibrium concept, already proposed how equilibria may be found, namely by a `t\^atonnement'\footnote{French: trial-and-error} process.
This procedure basically describes a dynamic auction: An auctioneer posts prices and unless these prices are at equilibrium, the auctioneer makes an adjustment.
The first modern study of such a process was done by \citet{demange1986} for unit-demand valuations.
In Gul and Stacchetti's follow-up work \cite{gul2000english}, they gave the framework for ascending auctions for gross substitutes valuations; that is, start at all-zero prices and increase prices on an inclusion-wise minimal maximal overdemanded set until there is no overdemanded set anymore.
They proved that such an auction always terminates with minimal Walrasian prices.
However, while they showed that an overdemanded set of items has to exist whenever the prices are not yet at equilibrium, they left it open how to compute those sets.
This gap was initially closed by \citet{ausubel2006efficient,ausubel2005walrasian} using submodular function minimization.
There is a discrete version of the Lyapunov function (a potential function, introduced by \citet{varian1981dynamical} for divisible goods) which is minimized at Walrasian prices \cite{ausubel2006efficient}.
If all valuation functions are gross substitutes, then this function is submodular and its minimum (and every steepest descent direction, which corresponds to a maximal overdemanded set) can be found efficiently using submodular function minimization \cite{grotschel1981ellipsoid,iwata2001combinatorial,iwata2003faster,iwata2009simple}.
\citet{murota2013computing} showed that the Lyapunov function is not just submodular but \Lnat-convex, which allows for even faster methods than plain submodular function minimization.
Note that these methods by \citet{ausubel2006efficient} and \citet{murota2013computing} allow for descending auctions or other kinds of dynamic auctions, where prices maybe adjusted in a non-monotone fashion.
Similar auctions and guarantees for termination with minimal/maximal Walrasian prices are discussed in \cite{andersson2013sets} and \cite{ben2013ascending}.

We should mention that the literature mentioned above follows the definition of a dynamic auction as proposed by \citet{gul2000english}.
However, the literature also explores other auction designs, e.g. in \cite{bikhchandani2002package,ausubel2004efficient,parkes2000iterative,ausubel2002ascending} that consider slightly different market settings or different ways for buyers to report their demands to the auctioneer.

A weak point of dynamic auctions would be their running time if we assume full information for the auctioneer; e.g., in an ascending auction $\|\pmin\|_\infty$ price increase steps are needed, where $\pmin$ is the (minimal) Walrasian price vector computed by the auction \cite{murota2013computing}.
This process can be sped up by increasing prices on an overdemanded set not only by one but by the maximal possible amount before the steepest descent direction of the Lyapunov function changes.
This results in at most $nmB$ price adjustment steps (c.f.\ \citep[Proposition 4.17]{shioura2017algorithms}).
Further speed ups of the ascending auction as a rounding scheme are discussed in \citep[Section 10.1.]{leme2017gross}.
There are other algorithms which can compute Walrasian prices efficiently \cite{leme2020computing}.
However, in contrast to dynamic auctions, these algorithms need direct information of the valuation functions, i.e., by a value oracle.

\paragraph{Oracles and communication complexity}
To compute Walrasian prices, it is crucial to get some information on buyers' valuation functions.
Usually, the necessary communication between auctioneer and buyers is modeled via oracle calls.
Depending on which oracle type is used, one can get different information.
A comparison of the computational power of some oracle types can be found, e.g., in \cite{blumrosen2010computational}.
In auction literature, typically, value oracles and demand oracles are used.
For our dynamic auctions, we require a demand oracle and an exchange oracle as those work naturally in our proposed algorithm to find the required overdemanded or underdemanded sets.
The exchange oracle is related to the dynamic rank oracle \cite{blikstad2023fast}, as both provide information on sets close to already requested ones.
A detailed overview which oracles we use and how they can be compared is given in \Cref{sec:oracle}.

\paragraph{VCG prices.}
As a corner stone of auction theory and mechanism design, we have the VCG Theorem \cite{vickrey1961counterspeculation,clarke1971multipart,groves1973incentives} that states that there is a sealed-bid mechanism that has truthful bidding as a dominant strategy.
However, as briefly discussed in the introduction, ascending auctions are often preferred as they yield a more natural and transparent price-finding process.
As a consequence, mathematical economists went to great lengths to implement the VCG mechanism in an ascending auction \cite{ausubel2004efficient,de2007ascending,bikhchandani2001linear,bikhchandani2011ascending,mishra2007ascending}.
\citet{kern2016} showed that for unit-demand valuations, the auction by \citet{demange1986} mentioned above indeed returns VCG prices for unit-demand valuations.
If we go beyond unit-demand valuations, VCG prices cannot be achieved, as shown by \citet{gul2000english}, as the ascending auction does not gain enough information during the iterations to determine VCG prices.
In a multi-supply setting, \cite{eickhoff2023flow} show that VCG prices that are also Walrasian do not necessarily exist.

\paragraph{Discrete convexity: matroid union and polymatroid sum}
We refer the interested reader to the monographs by \citet{murota2003book} and \citet{fujishige2005submodular}.
A survey on the connections between discrete convexity and auction theory is \cite{shioura2015gross}.

As pointed out above, all buyers having (strong) gross substitutes valuation functions means that we can use methods from discrete convex optimization to solve the computational problem that arises during a dynamic auction.
In particular, the whole auction can be viewed as an iterative process to minimize a submodular (and even \Lnat-convex) function \cite{ausubel2006efficient,murota2013computing,murota2016time,shioura2017algorithms}.
As we will see in the sequel,
these methods can be quite complicated.
For the unit-supply case and gross substitutes valuations, the problem of finding a maximal overdemanded set becomes equivalent to the matroid union problem on $n$ matroids, which is efficiently solvable as shown in \citep{edmonds1968matroid}.
Over time, more efficient algorithms were developed \citep{knuth1973matroid,greene1975some,cunningham1986improved,chakrabarty2019faster} and most recently by \citet{terao2023faster} and \citet{blikstad2023fast}.
Note that some of these methods are hard to compare due to the use of different oracles.

In the more general setting with multi-supply and strong gross substitutes valuations, we need to consider the more general polymatroid sum problem.

\paragraph{Push-relabel algorithms}
The algorithms mentioned above to solve the matroid union and the polymatroid sum problem are so called \emph{augmenting path algorithms}.
They start by taking a trivially feasible solution (e.g. the empty set or the all-zero vector, respectively) and augment it by finding additional elements that can be added to the current solution or $(k+1)$-to-$k$-swaps to increase the size of the solution until it is optimal.
However, more recently, so called \emph{push-relabel algorithms} have been studied for various combinatorial optimization problems and they can outperform augmenting path algorithms both in theory as well as in practice.
The push-relabel paradigm was introduced for the maximum flow problem by \citet{goldberg1988new}.
The idea is in some sense dual to augmenting path algorithms: One starts with a possibly infeasible solution which otherwise satisfies some optimality condition.
The push and relabel operations then maintain this optimality condition while making the solution ``less infeasible'' over time.
We use a
 push-relabel algorithm by \citet{frank2012simple} for matroid union.
Our algorithm in Section~\ref{sec:push-relabel} gives an efficient implementation in the auction setting and also generalizes to polymatroid sum and \M-convex sets in general.
It is interesting to note that the level functions and local exchanges of push-relabel algorithms have some resemblance to auction algorithms, as already pointed out by \citet{bertsekas1992auction}.
In our push-relabel algorithm in Section~\ref{sec:push-relabel}, one could interpret the level $\level(e)$ of an item as a small marginal price discount used for tie-breaking.

\subsection{Overview}
In \Cref{sec:preliminaries}, we start by giving some basic definitions (\Cref{sec:gs}) and the well-known dynamic auctions that iteratively increases [decrease] prices on items that are overdemanded [underdemanded] (\Cref{sec:auctions}).
This is followed by a summary of known definitions and facts from discrete convex analysis and {(poly-)}{matroids} in \Cref{sec:discrete-convexity}.
Finally, we discuss oracle models in \Cref{sec:oracle}.

In \Cref{sec:finding_over_and_underdemanded_sets}, we describe how to find minimal maximal overdemanded/underdemanded sets, given an optimal solution to the corresponding polymatroid sum problem.
Afterwards, in \Cref{sec:push-relabel}, we present the push-relabel framework and show how to implement it in an efficient way for both the matroid union and the polymatroid sum problem.
Combining these results enables us to compute the minimal maximal overdemanded [underdemanded] sets fast, while using only oracle calls that are natural in the auction setting.
In \Cref{sec:minimal-packing}, we show that there exists a component-wise minimal packing price vector and that it coincides with the minimal Walrasian price vector.
Further, we show that there also exists a component-wise maximal covering price vector, which is equal to the maximal Walrasian price vector.
In \Cref{sec:sensitivity}, we use these facts to show that Walrasian prices fulfill natural monotonicity properties, i.e., when the supply decreases or the demand increases, Walrasian prices can only increase.

\section{Preliminaries}\label{sec:preliminaries}
In this section, we give some basic definitions and known facts on the connection between strong gross substitutes valuations, polymatroids and discrete convex optimization.
We introduce the ascending auction presented by \citet{gul2000english} for the unit-supply setting, which was extended by \citet{ausubel2006efficient} for the multi-supply setting.

\paragraph{Notation} For a vector $x \in \R^E$, we let $\supp^{+}(x) \coloneqq \{e \in E \mid x(e) > 0\}$ and $\supp^{-}(z) \coloneqq \{e \in E \mid x(e) < 0\}$.
For $S \subseteq E$, we let $x(S) \coloneqq \sum_{e \in S} x(e)$.
For $x, y \in \R^E$, we defined the meet $x \wedge y \in \R^E$ as the component-wise minimum, and the join $x \vee y \in \R^E$ as the component-wise maximum of $x$ and $y$.

\subsection{Strong gross substitutes valuations}
\label{sec:gs}
Recall from the introduction that $\dset_i(p)$ is the set of preferred bundles of buyer $i \in N$ with respect to prices $p$.
\citet{kelso1982job} introduced the concept of gross substitutes valuations which guarantee the existence of Walrasian prices.
Intuitively, gross substitutes expresses the property that the demand for an item will not decrease if only prices of other items are raised.

\begin{definition}
    A valuation function $v_i\colon 2^E \to \Z$ is \emph{gross substitutes} if for all price vectors $p, q \in \R^E$ with $p \leq q$, it holds that for all $x_i \in \dset_i(p)$, there exists $y_i \in \dset_i(q)$ such that $x_i(e) \leq y_i(e)$ for all $e \in E$ with $p(e) = q(e)$.
\end{definition}

Note that this definition of gross substitutes (see \cite{kelso1982job}) only considers the unit-supply case (i.e. where $\U(e) = 1$ for all $e \in E$).
It can be generalized to the multi-supply setting by treating each copy of an item type as an individual item \cite{ausubel2006efficient}.
The following direct definition (see \cite{milgrom2009substitute}) is a bit more involved.
In particular, it includes the additional requirement that higher prices cannot lead to larger preferred bundles.

\begin{definition}\label{def:GS}
    A valuation function $v_i\colon \rng \to \Z$ is \emph{strong gross substitutes} if for all price vectors $p, q \in \R^E$ with $p \leq q$, it holds that for all $x_i \in \dset_i(p)$, there exists $y_i \in \dset_i(q)$ such that $x_i(e) \leq y_i(e)$ for all $e \in E$ with $p(e) = q(e)$ and such that $x_i(E) \geq y_i(E)$.
\end{definition}

Strong gross substitutes valuations form a natural and well-studied class of valuation functions.
Such functions are always submodular, i.e., have the diminishing marginal returns property (see Definition~\ref{def:submodular} below).
Simple examples of strong gross substitutes valuations include unit-demand valuations (i.e., $v(z) = \max_{e \in \supp^+(z)} w(e)$ for $w \in \R^m_+$), weighted matroid and polymatroid rank functions (i.e., $v(z) = \max \{\pr{w}{y} : y \leq z, y(S) \leq \rk(S), S \subseteq E\}$ for $w \in \R^m_+$), and OXS valuations (i.e., the maximum weight of a matching in a bipartite graph, restricted to a subset of the items).

It is well-known that strong gross substitutes are the largest class of functions (containing the unit-demand valuations) that guarantee existence of Walrasian equilibria \citep[Theorem~2]{gul1999walrasian}.

In the sequel, we assume that all valuation functions are strong gross substitutes and non-decreasing.
In this case, it is known that the collection of minimal or maximal preferred bundles forms an \M-convex set (see \Cref{sec:discrete-convexity}).
This allows us to use tools from discrete optimization and integral polymatroids.

\subsection{Dynamic auctions}
\label{sec:auctions}
In this section, we present multiple dynamic auctions that can be used to compute Walrasian prices, if they exist.
All of them crucially rely on the observation that a price vector $p$ is Walrasian if and only if it constitutes a minimizer of the Lyapunov function \cite{ausubel2006efficient, kelso1982job}
\[
L(p) = \sum_{i \in N} V_i(p) + \pr{p}{\U}.
\]
\begin{lemma}[\cite{ausubel2006efficient, kelso1982job}] \label{lem:lyapunov}
	If all buyers have strong gross substitutes valuation functions, then prices $p$ are Walrasian if and only if they minimize the Lyapunov function.
\end{lemma}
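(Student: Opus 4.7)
The plan is to prove the two implications separately. The forward direction is a direct substitution, while the reverse relies on importing, as a black-box input, the fact that a Walrasian equilibrium exists under strong gross substitutes (Kelso--Crawford); this existence theorem is what pins down the minimum value of $L$.

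For $(\Rightarrow)$, suppose $p$ is Walrasian with a witnessing allocation $(z_1, \ldots, z_n)$. Since each $z_i \in \dset_i(p)$ attains the maximum in the definition of $V_i$, I have $V_i(p) = v_i(z_i) - \pr{p}{z_i}$, and combined with $\sum_i z_i = \U$ this gives
\[
L(p) = \sum_i v_i(z_i) - \pr{p}{\sum_i z_i} + \pr{p}{\U} = \sum_i v_i(z_i).
\]
For any other price vector $p'$, the weak inequality $V_i(p') \geq v_i(z_i) - \pr{p'}{z_i}$ is valid because $z_i \in \rng$ is a feasible candidate, and summing yields
\[
L(p') \geq \sum_i v_i(z_i) - \pr{p'}{\sum_i z_i} + \pr{p'}{\U} = \sum_i v_i(z_i) = L(p),
\]
so $p$ minimizes $L$. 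Observe that strong gross substitutes is not used here.

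For $(\Leftarrow)$, I fix any Walrasian equilibrium $(p^*, (z_i^*))$ guaranteed by Kelso--Crawford and set $W \coloneqq \sum_i v_i(z_i^*)$. By the forward direction, $L(p^*) = W$. Running the same chain of inequalities with $p^*$ replaced by an \emph{arbitrary} price vector $p$, but still plugging the allocation $(z_i^*)$ as the feasible candidate into each $V_i(p)$, gives $L(p) \geq W$ for every $p$. Now let $p$ be any minimizer of $L$, so $L(p) = W$. Then the aggregate inequality $\sum_i V_i(p) \geq \sum_i \bigl(v_i(z_i^*) - \pr{p}{z_i^*}\bigr)$ must be tight, which, since each individual summand on the left dominates the corresponding summand on the right, forces term-by-term equality $V_i(p) = v_i(z_i^*) - \pr{p}{z_i^*}$. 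Hence $z_i^* \in \dset_i(p)$ for every $i$, and since $\sum_i z_i^* = \U$, the allocation $(z_i^*)$ certifies that $p$ is Walrasian.

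The main obstacle is purely the existence input: without knowing some Walrasian $(p^*, (z_i^*))$ exists, the calculation $L(p) \geq W$ has no anchor and one cannot conclude that $\min_p L(p) = W$ is attained, let alone extract a supporting allocation at a given minimizer. With Kelso--Crawford in hand, however, the rest is a compact weak-duality-plus-tightness argument; a pleasant byproduct is that the \emph{same} allocation $(z_i^*)$ supports every minimizer $p$, which is the classical ``uncoupling'' of Walrasian prices from Walrasian allocations.
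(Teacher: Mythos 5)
Your proof is correct: the forward direction is the standard weak-duality computation (which indeed needs no gross substitutes), and the reverse direction correctly anchors the minimum value of $L$ via the Kelso--Crawford existence theorem and then extracts a supporting allocation from term-by-term tightness. The paper itself gives no proof of this lemma --- it is quoted from \citet{ausubel2006efficient} and \citet{kelso1982job} --- and your argument is essentially the one in that cited literature, so there is nothing further to reconcile.
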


In light of \Cref{lem:lyapunov}, a natural approach to compute a Walrasian price vector is to start with an arbitrary price vector $p$ and to iteratively replace $p$ by $p\pm\chi_S$ for some $S\subseteq E$ if this decreases the value of the Lyapunov function.

It turns out that the sets $S\subseteq E$ minimizing $L(p+\chi_S)$ [$L(p-\chi_S)$]
are precisely the maximal overdemanded [underdemanded] sets.
For the unit-supply case, this is shown in~\cite{ben2017walrasian}.
With a standard copy trick, we can obtain the same result for the multi-supply case, see \Cref{apx:missing-proofs:max-overd-steep}.

\citet{ausubel2006efficient} shows that an auction, starting with an arbitrary price vector $p$ and then alternating increasing and decreasing price adjustments, will return a Walrasian price vector.
Further, \citet{sun2009double} show that by first iteratively increasing prices on a minimal maximal overdemanded set, and then iteratively decreasing prices on a minimal maximal underdemanded set, the auction terminates with a Walrasian price vector as well.
This is what we call a \ref{alg:two-phase_auction}.

\begin{algorithm}[H]
	\SetAlgoRefName{Two-Phase Auction}
	Start with an arbitrary price vector $p$\\
	\While{an overdemanded set exists}{
		choose minimal maximal overdemanded set $S \subseteq E$\\
		$p(e) \coloneqq p(e) + 1$ for all $e \in S$
	}
	\While{an underdemanded set exists}{
		choose minimal maximal underdemanded set $S \subseteq E$\\
		$p(e) \coloneqq p(e) - 1$ for all $e \in S$
	}
	\Return{$p$}
	\caption{}
	\label{alg:two-phase_auction}
\end{algorithm}
\citet{murota2013computing} shows that one can reduce the number of iterations by interleaving both phases and always performing an increasing or decreasing step depending on which of the two provides the greater reduction of the Lyapunov function.
This results in the following \ref{alg:greedy_auction}.

\begin{algorithm}[H]
	\SetAlgoRefName{Greedy Auction}
	Start with an arbitrary price vector $p$\\
	\While{an overdemanded or underdemanded set exists}{
		choose minimal maximal overdemanded set $S_o \subseteq E$\\
		choose minimal maximal underdemanded set $S_u \subseteq E$\\
		\uIf{$\overd(S_o) \geq \underd(S_u)$}{
			$p(e) \coloneqq p(e) + 1$ for all $e \in S_o$
		}
		\Else{
			$p(e) \coloneqq p(e) - 1$ for all $e \in S_u$
		}
	}
	\Return{$p$}
	\caption{}
	\label{alg:greedy_auction}
\end{algorithm}
As the Lyapunov function is submodular and the minimizers of any submodular function form a lattice, by \Cref{lem:lyapunov}, this is also true for the Walrasian price vectors.
In particular, there exists a \emph{unique} buyer-optimal, i.e., minimal [seller-optimal, i.e., maximal] Walrasian price vector $\pmin$ [$\pmax$].
\citet{ausubel2006efficient} has shown that $\pmin$ [$\pmax$] can be obtained via the following \ref{alg:generic_ascending_auction} [\ref{alg:generic_descending_auction}].

\begin{algorithm}[H]
    \SetAlgoRefName{Ascending Auction}
    $p(e) \coloneqq 0$ for all $e \in E$\\
    \While{an overdemanded set exists}{
        choose minimal maximal overdemanded set $S \subseteq E$\\
        $p(e) \coloneqq p(e) + 1$ for all $e \in S$
    }
    \Return{$\pmin \coloneqq p$}
    \caption{}
    \label{alg:generic_ascending_auction}
\end{algorithm}
\begin{algorithm}[H]
	\SetAlgoRefName{Descending Auction}
	$p(e) \coloneqq \max_{i\in N}v_i(E)+1$ for all $e \in E$\\
	\While{an underdemanded set exists}{
		choose minimal maximal underdemanded set $S \subseteq E$\\
		$p(e) \coloneqq p(e) - 1$ for all $e \in S$
	}
	\Return{$\pmax \coloneqq p$}
	\caption{}
	\label{alg:generic_descending_auction}
\end{algorithm}
Note that both the \ref{alg:generic_ascending_auction} and the \ref{alg:generic_descending_auction} constitute special cases of both the \ref{alg:two-phase_auction} and the \ref{alg:greedy_auction}: When starting with $p(e) \coloneqq 0$ for all $e \in E$ [$p(e) \coloneqq \max_{i\in N}v_i(E)+1$ for all $e \in E$], throughout the algorithm, there will never be an underdemanded [overdemanded] set (see, e.g.,~\cite{ben2017walrasian}).

We further point out that for the \ref{alg:generic_ascending_auction} [\ref{alg:generic_descending_auction}] to return the buyer-optimal [seller-optimal] Walrasian price vector,
it suffices to start with a vector $\leq \pmin$ or $\geq \pmax$, respectively (see \cite{ausubel2004efficient}).
The running time of the above auctions depends crucially on the time required to find a minimal maximal over- or underdemanded set in each iteration.
In this paper, we show how to obtain faster algorithms to compute minimal maximal over- and underdemanded sets by leveraging an algorithm for the polymatroid sum problem.

\subsection{Discrete convex optimization and polymatroids}\label{sec:discrete-convexity}
In this section,
we summarize some basics from discrete convex optimization and their connection to polymatroids.
In the following sections, we show that in markets with strong gross substitutes valuations, the set of minimal preferred bundles $\DM_i(p)$, and also the set of maximal preferred bundles $\dhat_i(p)$, form the sets of bases of integral polymatroids.
As we will see in \Cref{sec:finding-min-max-overdemanded-set},  we can use the properties stated here to compute the sets on which we increase or decrease the prices in the auction using a polymatroid sum algorithm.

\begin{definition}\label{def:submodular}
    A set function $f\colon 2^E \to \Z$ is \emph{submodular} if
    \[
        f(S) + f(T) \geq f(S \cup T) + f(S \cap T) \quad \text{for all } S, T \subseteq E.
    \]

    A function $v\colon \rng \to \Z$ is \emph{(lattice) submodular} if
    \[
        v(x) + v(y) \geq v(x \vee y) + v(x \wedge y) \quad \text{for all } x, y \in \rng.
    \]
\end{definition}

A special class of submodular functions are \Mnat-concave functions, see \citet{murota2003book}.

\begin{definition}\label{def:Mnat_concave}
    A valuation function $v\colon \rng \to \Z$ is \emph{\Mnat-concave}, if for any $x, y \in \rng$, and for any $e \in \supp^+(x-y)$, one of the following holds:
    \begin{enumerate}[label=(M\arabic*),leftmargin=*]
        \item\label{i:M1} $v(x) + v(y) \leq v(x-\chi_e) + v(y+\chi_e)$, or
        \item\label{i:M2} there exists $f \in \supp^-(x-y)$ such that $v(x) + v(y) \leq v(x-\chi_e+\chi_f) + v(y+\chi_e-\chi_f)$.
    \end{enumerate}
\end{definition}

\begin{lemma}[\citet{murota2013computing}]\label{lemma:gs_is_Mnat-convave}
    A valuation function is strong gross substitutes if and only if it is \Mnat-concave.
\end{lemma}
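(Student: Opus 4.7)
The plan is to prove the two directions separately, with the reverse direction (strong gross substitutes $\Rightarrow$ \Mnat-concave) being the harder one.

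For the forward direction (\Mnat-concave $\Rightarrow$ strong gross substitutes), I would proceed as follows. Fix prices $p \leq q$ and $x \in \dset_i(p)$. Among all $y \in \dset_i(q)$, choose one that lexicographically minimizes the pair $\bigl(\|(x-y)^+\|_1,\, \|(y-x)^+\|_1\bigr)$, where $\|(x-y)^+\|_1 = \sum_{e\in \supp^+(x-y)} (x(e)-y(e))$. I claim this $y$ satisfies the required two conditions. Suppose for contradiction that there exists $e \in E$ with $p(e)=q(e)$ and $x(e)>y(e)$, so $e \in \supp^+(x-y)$. Apply \ref{i:M1} or \ref{i:M2} from \Cref{def:Mnat_concave} to this $e$:
\begin{itemize}
\item In case \ref{i:M1}, a straightforward calculation using $p(e) = q(e)$ yields $u_p(x-\chi_e) + u_q(y+\chi_e) \geq u_p(x) + u_q(y)$, where $u_r$ denotes utility at prices $r$; since $x,y$ are maximizers, both inequalities are equalities, so $y+\chi_e \in \dset_i(q)$ and strictly decreases $\|(y-x)^+\|_1$-compensated measure, contradicting the choice of $y$.
\item In case \ref{i:M2}, with $f\in \supp^-(x-y)$, one gets $y+\chi_e-\chi_f \in \dset_i(q)$ by a similar utility argument (using $p(f) \leq q(f)$), again contradicting minimality.
\end{itemize}
The bound $x(E)\geq y(E)$ follows by an analogous argument: if $x(E) < y(E)$, then $\supp^+(y-x) \neq \emptyset$, and one applies the \Mnat-exchange axiom with the roles reversed (swap $x$ and $y$ in \Cref{def:Mnat_concave}) to produce a smaller $y'\in \dset_i(q)$.

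For the reverse direction (strong gross substitutes $\Rightarrow$ \Mnat-concave), I would use the standard ``local characterization'' strategy. First, reduce to the case $\|x-y\|_\infty\leq 1$ via known results: \Mnat-concavity can be checked by verifying the exchange property on pairs $(x,y)$ differing by two unit vectors at a time (see \citealp{murota2003book}). Then, fix $x,y \in \rng$ with $e \in \supp^+(x-y)$ and choose a price vector $p$ that makes $x$ demanded and a perturbed vector $q = p + \epsilon \chi_T$ for a carefully chosen $T\subseteq E$ and small $\epsilon > 0$ that makes $y$ demanded, where the support $T$ is chosen to isolate the coordinates where we want to force an exchange. Concretely, set $p$ so that $x$ tightly maximizes utility with value $v(x) - \pr{p}{x}$; choose $q \geq p$ with $q(e) = p(e)$ but $q(f) > p(f)$ for $f \in \supp^-(x-y)$. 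Strong gross substitutes applied to $x \in \dset_i(p)$ and the price increase $p \to q$ then yields some $y' \in \dset_i(q)$ with $y'(e) \geq x(e) > y(e)$ and $y'(E)\leq x(E)$. Comparing the utility of $y$ and $y'$ at $q$ and of $x$ and $y'-\chi_e$ at $p$ (or at a suitable intermediate price), and summing, produces the desired inequality in \ref{i:M1} or \ref{i:M2} depending on whether $y'(E) = x(E)$ or whether some coordinate $f \in \supp^-(x-y)$ must be decreased.

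The main obstacle will be the reverse direction: the challenge is that strong gross substitutes is a statement about demand correspondences (existence of $y_i$ satisfying certain conditions), while \Mnat-concavity is a combinatorial inequality among function values. Bridging these requires constructing prices that ``simultaneously'' make both $x$ and $y$ near-optimal, and then extracting the exchange identity from the demand response. The tie-breaking needed to route between cases \ref{i:M1} and \ref{i:M2} (corresponding to whether the exchange is ``one-sided'' or ``two-sided'') is delicate and is typically handled by a case distinction on whether $x(E) = y(E)$, $x(E) > y(E)$, or $x(E) < y(E)$, with the strong part of gross substitutes (the total-size condition $x_i(E) \geq y_i(E)$) being precisely what yields case \ref{i:M1} when relevant.
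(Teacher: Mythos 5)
First, a point of reference: the paper does not prove this lemma at all; it is quoted from \citet{murota2013computing} (building on the set-function equivalence of \citet{fujishige2003note}), so your argument has to stand on its own rather than be matched against an in-paper proof. On its own terms, the coordinate-wise half of your forward direction is sound: with $y \in \dset_i(q)$ chosen to lexicographically minimize $\bigl(\|(x-y)^+\|_1, \|(y-x)^+\|_1\bigr)$ (positive parts taken componentwise), applying \ref{i:M1} or \ref{i:M2} at a coordinate $e$ with $p(e)=q(e)$ produces the price-correction term $(p(e)-q(e))+(q(f)-p(f))\geq 0$, so $y+\chi_e$ (resp.\ $y+\chi_e-\chi_f$) lands in $\dset_i(q)$ and contradicts minimality. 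The gap is in the size condition $x(E)\geq y(E)$: it does \emph{not} follow ``by an analogous argument''. Reversing the roles you take $e\in\supp^+(y-x)$, and in case \ref{i:M2} the exchange element $f$ lies in $\supp^-(y-x)=\supp^+(x-y)$; by the part you have already established, every such $f$ has $p(f)<q(f)$ strictly, so the correction term is now $(q(e)-p(e))+(p(f)-q(f))$, which can be negative (e.g.\ when $q(e)=p(e)$). Hence you cannot conclude $y-\chi_e+\chi_f\in\dset_i(q)$, and the contradiction evaporates precisely in the interesting case. Repairing this step requires more than the raw axiom of \Cref{def:Mnat_concave} --- e.g.\ the size-asymmetric exchange property of \Mnat-concave functions (if $y(E)>x(E)$ there is $e\in\supp^+(y-x)$ with $v(y)+v(x)\leq v(y-\chi_e)+v(x+\chi_e)$, obtained via the associated \M-concave function with an outside good), after which the \ref{i:M1}-type computation does go through.

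The reverse direction is only a sketch, and its very first move needs justification: in the multi-supply setting a given bundle $x\in\rng$ need not be demanded at \emph{any} price (only bundles lying on the concave closure of $v$ are), so ``set $p$ so that $x$ tightly maximizes utility'' presupposes concave-extensibility of $v$, which is part of what must be deduced from the strong gross substitutes hypothesis, not an assumption you may make. The reduction to local exchanges and the extraction of \ref{i:M1}/\ref{i:M2} from the demand response are likewise asserted rather than carried out. A cleaner route, and essentially the one taken in the cited literature, is to split each item into $\U(e)$ unit copies: in the unit-supply world every bundle is a vertex of the domain and hence demanded at a suitable price, the set-function equivalence of \citet{fujishige2003note} applies, and one then transfers \Mnat-concavity back to the multi-supply valuation. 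As it stands, both halves of your proposal contain genuine gaps at the points indicated.
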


The utility function $u_i$ of a buyer $i$ is defined as the valuation minus the price, i.e. $u_i(\bz)\coloneqq v_i(\bz) -\pr{p}{\bz}$ for some $\bz \in \rng$ and some given price vector $p$.
Note that the above lemma also implies that a buyer's utility function $u_i$ is \Mnat-concave if her valuation function $v_i$ is strong gross substitutes.
This gives us some nice structure of the minimal and maximal preferred bundles.

\begin{definition}
    A set of bundles $\B \subseteq \rng$ is \emph{\M-convex} if for any $x, y \in \B$, and for any $e \in \supp^+(x-y)$, there exists $f \in \supp^-(x-y)$ such that $x-\chi_e+\chi_f \in \B$ and $y+\chi_e-\chi_f \in \B$.
\end{definition}

\begin{restatable}{lemma}{LemGSMconv}
\label{lem:GS-Mconv}
    If $v_i$ is strong gross substitutes, then $\DM_i(p)$ and $\dhat_i(p)$ are \M-convex sets for any price vector $p \in \R^E_+$.
\end{restatable}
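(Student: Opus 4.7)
The plan is to lift M$^\natural$-concavity of $v_i$ to an exchange property on the demand set $\dset_i(p)$, and then to isolate the sum-minimal (resp.\ sum-maximal) layer to obtain M-convexity.

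By \Cref{lemma:gs_is_Mnat-convave}, $v_i$ is M$^\natural$-concave, and since subtracting the linear term $\pr{p}{\cdot}$ leaves both inequalities of \Cref{def:Mnat_concave} unchanged, $u_i$ is M$^\natural$-concave as well. I would first translate this into a set-theoretic exchange property on $\dset_i(p)=\arg\max u_i$: for all $x,y \in \dset_i(p)$ and $e \in \supp^+(x-y)$, either (a) $x-\chi_e, y+\chi_e \in \dset_i(p)$, or (b) there exists $f \in \supp^-(x-y)$ such that $x-\chi_e+\chi_f, y+\chi_e-\chi_f \in \dset_i(p)$. Indeed, in case \ref{i:M1} of \Cref{def:Mnat_concave} the bound $u_i(x-\chi_e)+u_i(y+\chi_e) \geq u_i(x)+u_i(y) = 2\max u_i$ combined with $u_i \leq \max u_i$ pointwise forces both summands to equal $\max u_i$; case \ref{i:M2} is analogous.

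Next, set $\mu_{\min} := \min\{z(E) : z \in \dset_i(p)\}$ and claim $\DM_i(p) = \{z \in \dset_i(p) : z(E) = \mu_{\min}\}$. The inclusion $\supseteq$ is immediate, since an element of $\dset_i(p)$ strictly below one of sum $\mu_{\min}$ would violate the definition of $\mu_{\min}$. For $\subseteq$, suppose for contradiction that $x \in \DM_i(p)$ has $x(E) > \mu_{\min}$ and fix $y \in \dset_i(p)$ with $y(E) = \mu_{\min}$. The key potential is $\Phi(y) := \sum_{e \in E} (x(e) - y(e))^+ \geq x(E) - y(E) > 0$, so $\supp^+(x-y) \neq \emptyset$. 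Applying the set-level exchange to $(x, y)$ and some $e \in \supp^+(x-y)$, case (a) yields $x-\chi_e \in \dset_i(p)$ with $x-\chi_e \lneq x$, contradicting the minimality of $x$; case (b) replaces $y$ by $y' := y+\chi_e-\chi_f$ with $y'(E) = \mu_{\min}$ and $\Phi(y') = \Phi(y)-1$, since the $e$- and $f$-coordinates of $y$ each move one step toward $x$ without overshooting. Iterating the process strictly decreases $\Phi$ while keeping $\Phi > 0$, which must eventually trigger case (a), a contradiction.

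Finally, M-convexity of $\DM_i(p)$ falls out immediately: for $x, y \in \DM_i(p)$ and $e \in \supp^+(x-y)$, case (a) would produce an element of $\dset_i(p)$ of sum $\mu_{\min}-1$, which is impossible, so case (b) applies and delivers $f \in \supp^-(x-y)$ and the two vectors $x-\chi_e+\chi_f, y+\chi_e-\chi_f \in \dset_i(p)$, both of sum $\mu_{\min}$, and hence in $\DM_i(p)$ by the layer identification. The argument for $\dhat_i(p)$ is symmetric, using $\mu_{\max} := \max\{z(E) : z \in \dset_i(p)\}$ and reversing the roles of $x$ and $y$. The main technical obstacle is the layering step tying component-wise minimality to sum-minimality; once this is in place, the M-convex exchange axiom is an immediate consequence of the set-level exchange property of $\dset_i(p)$.
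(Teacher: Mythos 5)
Your proof is correct, but it takes a genuinely different route from the paper's. The paper applies \Mnat-concavity of $u_i$ directly to two bundles $x,y\in\DM_i(p)$, rules out case~\ref{i:M1} by minimality of $x$, and then needs a second, somewhat delicate application of \Mnat-concavity to certify that the exchanged bundles $x-\chi_e+\chi_f$ and $y+\chi_e-\chi_f$ are themselves \emph{minimal} preferred bundles (i.e., that no preferred bundle is strictly contained in them). You instead first establish the layer characterization $\DM_i(p)=\arg\min\{z(E)\mid z\in\dset_i(p)\}$ by a potential/descent argument — repeatedly exchanging $y$ toward $x$ while $\Phi(y)=\sum_e(x(e)-y(e))^+$ stays bounded below by $x(E)-\mu_{\min}>0$, so that case (a) must eventually fire and contradict minimality of $x$ — and then M-convexity is immediate because the exchange in case (b) preserves $z(E)$. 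Interestingly, the paper derives exactly this $\ell_1$-characterization \emph{from} M-convexity right after the lemma, so you have reversed the logical order; your route avoids the fiddly ``no bundle strictly inside $x'$'' argument and yields the characterization as a by-product, at the cost of an iterative monovariant argument, whereas the paper's proof is purely local (no iteration). Both rest on the same basic observation that at a common maximizer of $u_i$ the \Mnat-exchange inequalities must hold with equality, and your domain checks ($x-\chi_e,\,y+\chi_e\in\rng$ etc.) and the symmetric treatment of $\dhat_i(p)$ via $\mu_{\max}$ are sound.
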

The simple proof of \Cref{lem:GS-Mconv} is given in \Cref{apx:missing-proofs:GS-Mconv}.

As $\DM_i(p)$ and $\dhat_i(p)$ are \M-convex, it follows (see \Cref{apx:missing-proofs:GS-Mconv}) that
\begin{equation*}
    \DM_i(p) = \arg\min\{\|z\|_1 \mid z \in \dset_i(p)\}\hspace{5ex} \text{ and}\hspace{5ex}
    \dhat_i(p) = \arg\max\{\|z\|_1 \mid z \in \dset_i(p)\}.
\end{equation*}
An \M-convex set forms the set of integer points in a corresponding polymatroid base polytope as the next lemma shows.
This leads to a new interpretation of $\DM_i(p)$ and $\dhat_i(p)$.
\begin{lemma}[{\citep[Theorem 4.15]{murota2003book}}]\label{lemma:rk-demand_sets}
    For an \M-convex set $\B \subseteq \rng$, we define
    \begin{equation}\label{eq:rank-def}
        \rk(S) \coloneqq \max\{z(S) \mid z \in \B\} \qquad \text{ for } S \subseteq E.
    \end{equation}
    Then, $\rk(S)$ is an integer valued submodular set function, and $\B$ is the set of integer points in the corresponding base polytope, that is,
   \begin{equation}\label{eq:base-polytope}
        \B = \left\{z \in \Z^m_+ \mid z(S) \leq \rk(S) \text{ for all } S \subseteq E, z(E) = \rk(E)\right\}.
    \end{equation}
\end{lemma}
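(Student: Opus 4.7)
My plan is to split the proof into a preliminary observation, a submodularity argument, and a base-polytope characterization. As a preliminary, I would record that all elements of $\B$ share the same coordinate sum $r := z(E)$. This is the standard constant-sum property of M-convex sets: the exchange axiom is symmetric under swapping $(x, y)$ and each single exchange preserves coordinate sums, so an induction on an appropriate potential forces any two elements of $\B$ to have the same sum; otherwise a pair with exactly one differing coordinate (forbidden by the axiom, since one of $\supp^\pm(x-y)$ would be empty) would eventually be produced. Hence $\rho(E) = r$ is well-defined, and every $z \in \B$ satisfies $z(E) = \rho(E)$.

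For submodularity of $\rho$, given $S, T \subseteq E$ I would select $x, y \in \B$ realising $\rho(S)$ and $\rho(T)$, chosen to minimise $\|x - y\|_1$ over all such pairs. Assume for contradiction that $\rho(S) + \rho(T) < \rho(S \cap T) + \rho(S \cup T)$. Then there must be some $e \in \supp^+(x - y)$ whose position relative to $S, T$ witnesses the deficit; the exchange axiom yields $f \in \supp^-(x - y)$ and new $x', y' \in \B$. A case distinction on the partition of $\{e, f\}$ among the four regions $S \setminus T$, $T \setminus S$, $S \cap T$, and $E \setminus (S \cup T)$ should show that the exchange either strictly improves $x(S)$ or $y(T)$ (contradicting their choice as maximisers of $\rho(S)$, $\rho(T)$) or preserves both values while strictly reducing $\|x - y\|_1$ (contradicting minimality of the chosen pair).

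For the base-polytope characterization, the inclusion $\B \subseteq B(\rho) \cap \Z^E$ is immediate from the preliminary and the definition of $\rho$. For the reverse inclusion, take an integer $z \in B(\rho)$ and $x \in \B$ minimising $\|x - z\|_1$. Assuming $x \neq z$, the equality $x(E) = z(E) = r$ forces $\supp^+(x - z)$ and $\supp^-(x - z)$ both to be non-empty. Pick $e \in \supp^+(x - z)$; since $z$ satisfies every rank constraint, there is $y \in \B$ with $y(e) \leq z(e) < x(e)$, so $e \in \supp^+(x - y)$. The exchange axiom then provides $f \in \supp^-(x - y)$ with $x' := x - \chi_e + \chi_f \in \B$. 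The main obstacle is guaranteeing $x(f) < z(f)$, so that $x'$ is strictly closer to $z$ than $x$ is; I would resolve this by picking $y$ to minimise $\|y - z\|_1$ among valid witnesses and appealing to the submodularity of $\rho$ on the tight set of $z$ through $e$, which should force the partner $f$ into the shortfall set $\supp^-(x - z)$. Once such an $f$ is secured, $\|x' - z\|_1 = \|x - z\|_1 - 2$ contradicts the minimality of $x$, completing the characterization.
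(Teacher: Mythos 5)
The paper does not prove this lemma itself; it imports it from Murota's monograph (Theorem 4.15), so your argument has to stand on its own. Your preliminary step (all points of $\B$ have the same coordinate sum) and the inclusion $\B \subseteq \{z \in \Z^m_+ \mid z(S)\le\rk(S)\ \forall S,\ z(E)=\rk(E)\}$ are fine, but both substantive steps have genuine gaps. For submodularity, the pair you fix --- $x$ attaining $\rk(S)$ and $y$ attaining $\rk(T)$ at minimum $\|x-y\|_1$ --- cannot ``witness the deficit'': the quantities $\rk(S\cap T)$ and $\rk(S\cup T)$ are attained by other points of $\B$ that never enter your local analysis, so the assumed strict inequality is never used. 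Concretely, your dichotomy is false: minimality of $\|x-y\|_1$ forces every admissible exchange at $e\in\supp^+(x-y)$ with partner $f$ to have $e\in S\setminus T$ and $f\in T\setminus S$, so the exchange strictly decreases both $x(S)$ and $y(T)$ --- a third outcome contradicting nothing; and if one point maximizes both $z(S)$ and $z(T)$ then $x=y$ and there is no exchange to perform, yet the deficit assumption is not refuted. The repair is to run the very same minimal-distance exchange argument on a pair attaining $\rk(S\cup T)$ and $\rk(S\cap T)$ instead: there, avoiding the minimality contradiction would force the partner $f$ to lie outside $S\cup T$ and inside $S\cap T$ simultaneously, so $\supp^+(x-y)=\emptyset$, hence $x=y$, giving one $z\in\B$ tight for both sets and thus $\rk(S)+\rk(T)\ge z(S)+z(T)=z(S\cap T)+z(S\cup T)=\rk(S\cap T)+\rk(S\cup T)$.

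For the reverse inclusion you correctly isolate the crux --- forcing the exchange partner $f$ into $\supp^-(x-z)$ --- but the proposed resolution does not deliver it. With $y\in\B$, $y(e)\le z(e)$, chosen to minimize $\|y-z\|_1$, applying the exchange axiom to $(x,y)$ at $e$ and looking at $y+\chi_e-\chi_f$ only yields the conclusion $y(e)=z(e)$; nothing steers the particular $f$ supplied by the axiom into $\{g \mid x(g)<z(g)\}$. Moreover, ``the tight set of $z$ through $e$'' is not the relevant object: the constraints that could block the move $x-\chi_e+\chi_f$ are tight sets of $x$, and the standard uncrossing/counting argument with those only certifies that $x-\chi_e+\chi_f$ lies in the polyhedron defined by $\rk$, not that it lies in $\B$ --- which is exactly what you are trying to prove. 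This ``no integral holes'' step is the genuinely hard part of Murota's Theorem 4.15 and needs a worked-out argument, not the one-line appeal you sketch.
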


Summarizing, we observed that if $v_i$ is strong gross substitutes, then for every price vector $p$, buyer $i$'s set of minimal preferred bundles $\DM_i(p)$ as well as the set of maximal preferred bundles~$\dhat_i(p)$ is an \M-convex set.
Moreover, $\DM_i(p)$ is the set of integer points in a polymatroid base polytope and the same holds for $\dhat_i(p)$.

\begin{definition}
    For an \M-convex set $\B \subseteq \rng$ with associated rank function $\rk$ defined as in \eqref{eq:rank-def}, and a vector $z \in \B$, we say that set $S \subseteq E$ is \emph{tight} if $z(S) = \rk(S)$.
\end{definition}
A set of item types $S$ is tight with respect to a minimal [maximal] preferred bundle $z$ of (individual) items if among all minimal [maximal] preferred bundles, the bundle $z$ contains as many items as possible from $S$.

Note that the collection of tight sets $\Tt_{\B}(z) \coloneqq \{S \subseteq E \mid z(S) = \rk(S)\}$ with respect to $z \in \B$ is closed with respect to intersection and union.
To see this, observe that for any two tight sets $S, T \in \Tt_{\B}(z)$, we have $z(S) + z(T) = \rk(S) + \rk(T) \geq \rk(S \cup T) + \rk(S \cap T) \geq z(S \cup T) + z (S \cap T) = z(S) + z(T)$ by submodularity of $\rk$ and since $z \in \B$.
Hence, we get $\rk(S \cup T) = z(S \cup T)$ and $\rk(S \cap T) = z(S \cap T)$.
As a consequence, there exists a \emph{unique} minimal tight set among those tight sets containing $e$, which we call $\Tt_{\B}(e, z)$.

\begin{definition}
    Let $\B$ be an \M-convex set and $z \in \B$.
    We denote the unique minimal tight set containing $e$ by $\Tt_{\B}(e, z)$, i.e.,
    \[
        \Tt_{\B}(e, z) \coloneqq \bigcap \{S \in \Tt_{\B}(z) \mid e \in S\}.
    \]
\end{definition}
We will show later in \Cref{lem:tight-sets-weights} that the set $\Tt_{\DM_i}(e, z_i)$ contains exactly those items $f \in E$ which can be exchanged against $e$, i.e., those $f$ for which $z_i + \chi_e - \chi_f$ remains a minimal preferred bundle for buyer~$i$.
Analogously, $\Tt_{\dhat_i}(e, z_i)$ contains those items $f \in E$ which can be exchanged against $e$ such that $z_i + \chi_e - \chi_f$ remains a maximal preferred bundle.

Let $\B_1, \B_2, \ldots, \B_n \subseteq \rng$ be \M-convex sets, and let $\Bt \coloneqq \bigtimes_{i \in N} \B_i$ be the collection of all $n$-tuples $\bz = (z_1, \ldots, z_n)$ with $z_i \in \B_i$.
The following Min-Max Theorem holds for general polymatroids and implies that prices are packing if and only if there is no overdemanded set (see \Cref{lem:packing}) and that they are covering if and only if there is no underdemanded set (\Cref{lem:covering}).
The Min-Max Theorem follows, e.g., from \citep[(44.8), (44.9), Theorem 44.6]{schrijver2003combinatorial}; the sum of polymatroids was first studied by \citet{mcdiarmid1975rado}.

\begin{theorem}[Min-Max Theorem for Polymatroid Sum]\label{thm:poly-union}
    Let $\B_1, \B_2, \ldots, \B_n \subseteq \rng$ be \M-convex sets with associated rank functions $\rk_i$, $i \in N$ and let $\Bt \coloneqq \bigtimes_{i \in N} \B_i$.
    Then
    \begin{equation}\label{eq:max-min}
        \max_{\bz \in \Bt} \left\{\sum_{e \in E} \min \left\{\sum_{i \in N} z_i(e), \U(e)\right\}\right\}
        = \min_{S \subseteq E} \left\{\sum_{i \in N} \rk_i(E \setminus S) + \U(S)\right\}.
    \end{equation}
\end{theorem}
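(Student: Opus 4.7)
The plan is to split the claim into the two inequalities and to reduce the nontrivial direction to the classical polymatroid intersection theorem of Edmonds, which is the ultimate source of the citation to~\cite{schrijver2003combinatorial}.

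The easy direction LHS $\le$ RHS I would dispatch by a one-line estimation: for any $\bz \in \Bt$ and any $S \subseteq E$, split $E$ into $S$ and $E \setminus S$, bound the inner $\min$ by $\U(e)$ on $S$ and by $\sum_{i\in N} z_i(e)$ on $E \setminus S$, and apply $z_i(E \setminus S) \le \rk_i(E \setminus S)$ from the definition~\eqref{eq:rank-def}. Taking the minimum over $S$ yields the bound.

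For the reverse direction, I would rewrite the LHS as a polymatroid intersection problem. Let $\rk(T) \coloneqq \sum_{i \in N} \rk_i(T)$, a submodular rank function, and let $P$ denote the associated integer polymatroid. By McDiarmid's Minkowski sum theorem~\cite{mcdiarmid1975rado}, the integer vectors of the form $y = \sum_{i\in N} z_i$ with $z_i \in \B_i$ are exactly the integer bases of $P$. I would then verify the identity
$$
\max_{y \text{ integer base of } P}\; \sum_{e \in E} \min\{y(e), \U(e)\} \;=\; \max\left\{ x(E) : x \in P \cap \Z_+^E,\ x \le \U \right\}.
$$
The direction $\le$ holds because $y \wedge \U$ itself lies in $P \cap [\bO, \U]$ (polymatroids are closed under componentwise decrease); the direction $\ge$ holds because any integer $x \in P$ with $x \le \U$ can be extended to an integer base $y$ of $P$ with $y \ge x$, so that $\sum_{e\in E} \min\{y(e), \U(e)\} \ge x(E)$. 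Applying Edmonds' polymatroid intersection theorem to $P$ and the box polymatroid with rank function $T \mapsto \U(T)$ then produces $\min_{S \subseteq E}(\rk(S) + \U(E \setminus S))$; relabeling $S \leftrightarrow E \setminus S$ yields the RHS.

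The main obstacle is that both imported results---McDiarmid's Minkowski sum identity and Edmonds' intersection theorem with integrality of optima---are themselves nontrivial polymatroid facts, so a fully self-contained argument would have to rebuild at least one of them. A direct alternative would start from an optimizer $\bz^*$, set $S^* \coloneqq \{e \in E : \sum_{i\in N} z_i^*(e) \ge \U(e)\}$, and use the \M-exchange property of each $\B_i$ to argue that $z_i^*(E \setminus S^*) = \rk_i(E \setminus S^*)$ for every $i \in N$; pushing the exchange argument all the way to tightness, rather than merely to local optimality of $\bz^*$, would be the crux of that route.
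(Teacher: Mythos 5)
Your argument is correct, but it follows a different route from the paper's. Your easy direction is exactly the paper's (it is \Cref{lem:poly-union-opt}: split $E$ into $S$ and $E\setminus S$ and use $z_i(E\setminus S)\leq \rk_i(E\setminus S)$). For the hard direction, however, the paper does not reduce to Edmonds: beyond the pointer to \citep{schrijver2003combinatorial}, its own proof is constructive and self-contained, via the push-relabel algorithm of \Cref{sec:push-relabel}. There, \Cref{lem:terminate} exhibits, at termination, an empty level $\ell$ and shows that the computed $\bz$ together with $S=\{e\in E\mid \level(e)<\ell\}$ satisfies the complementary-slackness conditions \eqref{eq:sat-E-S}--\eqref{eq:span-S} of \Cref{lem:poly-union-opt}, which closes the duality gap. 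Your route instead imports two classical black boxes: McDiarmid's theorem that the integer points of the Minkowski sum of the base polytopes are exactly the sums $\sum_i z_i$ with $z_i\in\B_i$ (so the left-hand side becomes $\max\{x(E): x\in P\cap\Z_+^E,\ x\leq \U\}$ for the polymatroid $P$ with rank $\sum_i\rk_i$ --- your bridging identity via $y\wedge \U$ and base extension is fine), and Edmonds' polymatroid intersection theorem with integrality applied to $P$ and the box. This is shorter on paper but not self-contained, whereas the paper's algorithmic proof simultaneously yields the efficient algorithm and the structural facts it needs later (\Cref{lem:minmax-over}, the exchange-graph characterization in \Cref{thm:computation_overd-set}). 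One caution about your sketched ``direct alternative'': for an arbitrary optimizer $\bz^*$, the set $S^*=\{e\mid \sum_i z_i^*(e)\geq \U(e)\}$ need \emph{not} satisfy the tightness condition $z_i^*(E\setminus S^*)=\rk_i(E\setminus S^*)$ (e.g.\ one buyer, two items with unit supply, $\B_1=\{\chi_e,\chi_f\}$, $z^*=\chi_e$ gives $S^*=\{e\}$, which is not a minimizer); one must instead take the reachability set in the exchange graph or adjust $\bz^*$ by exchanges, which is precisely what the paper's level/exchange machinery accomplishes --- you correctly flagged this as the crux, so it is an acknowledged limitation rather than a gap in your main argument.
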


The problem of finding an optimal solution $\bz \in \Bt$ to the left hand side
\begin{equation}\label{eq:poly-union}
        \max_{\bz \in \Bt} \left\{\sum_{e \in E} \min \left\{\sum_{i \in N} z_i(e), \U(e)\right\}\right\}
    \end{equation}
is also known as the polymatroid sum problem.
The push-relabel algorithm to solve this problem is given in Section~\ref{sec:push-relabel} and provides a direct proof of the Min-Max Theorem for polymatroid sum.

The easy direction $\max \leq \min$ is shown in Lemma \ref{lem:poly-union-opt} below.
Moreover, the lemma also formulates the complementary slackness conditions that can be used to certify optimality of a pair of solutions $\bz \in \Bt$ and $S \subseteq E$.

\begin{lemma}\label{lem:poly-union-opt}
    Let $\B_1, \B_2, \ldots, \B_n \subseteq \rng$ be \M-convex sets with associated rank functions $\rk_i$, $i \in N$.
    Then, for any $\bz = (z_1, \ldots, z_n) \in \Bt \coloneqq \bigtimes_{i \in N} \B_i$, and any set $S \subseteq E$, we have
    \begin{equation}
        \label{eq:poly-union-leq}
        \sum_{e \in E} \min \left\{\sum_{i \in N} z_i(e), \U(e)\right\} \leq \sum_{i \in N} \rk_i(E \setminus S) + \U(S).
    \end{equation}
    Moreover, equality holds if and only if
    \begin{alignat}{3}
        \sum_{i \in N} z_i(e) &\geq \U(e) &&\text{for all } e \in S,\label{eq:sat-E-S}\\
        \sum_{i \in N} z_i(e) &\leq \U(e) &&\text{for all } e \in E \setminus S\,\label{eq:no-overload-S},\\
        z_i(E \setminus S) &= \rk_i(E \setminus S) \qquad &&\text{for all } i \in N.\label{eq:span-S}
    \end{alignat}
\end{lemma}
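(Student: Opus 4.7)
The plan is to establish the inequality \eqref{eq:poly-union-leq} by splitting the left-hand side along the partition $E = S \cup (E\setminus S)$ and bounding each part by the most convenient term of the $\min$, and then to read off the equality conditions by tracking exactly when each of these bounds is tight.

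First, I would split
\[
\sum_{e \in E} \min\Bigl\{\sum_{i \in N} z_i(e),\, \U(e)\Bigr\}
 = \sum_{e \in S} \min\Bigl\{\sum_{i \in N} z_i(e),\, \U(e)\Bigr\}
 + \sum_{e \in E\setminus S} \min\Bigl\{\sum_{i \in N} z_i(e),\, \U(e)\Bigr\}.
\]
For $e \in S$ I would use the trivial bound $\min\{\sum_i z_i(e), \U(e)\} \leq \U(e)$, which, summed over $S$, yields $\U(S)$. For $e \in E\setminus S$ I would use the other bound $\min\{\sum_i z_i(e), \U(e)\} \leq \sum_i z_i(e)$, sum over $E\setminus S$, exchange the order of summation, and then use that $z_i \in \B_i$ together with \Cref{lemma:rk-demand_sets} (which gives $z_i(E\setminus S) \leq \rk_i(E\setminus S)$) to conclude
\[
\sum_{e \in E\setminus S}\sum_{i \in N} z_i(e) = \sum_{i \in N} z_i(E\setminus S) \leq \sum_{i \in N} \rk_i(E\setminus S).
\]
Adding the two bounds gives \eqref{eq:poly-union-leq}.

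For the equality characterization, equality in \eqref{eq:poly-union-leq} forces each of the three inequalities used above to be tight simultaneously. Tightness of $\min\{\sum_i z_i(e), \U(e)\} = \U(e)$ for every $e \in S$ is precisely \eqref{eq:sat-E-S}; tightness of $\min\{\sum_i z_i(e), \U(e)\} = \sum_i z_i(e)$ for every $e \in E\setminus S$ is precisely \eqref{eq:no-overload-S}; and tightness of $z_i(E\setminus S) \leq \rk_i(E\setminus S)$ for every $i \in N$ is precisely \eqref{eq:span-S}. Conversely, if all three conditions \eqref{eq:sat-E-S}--\eqref{eq:span-S} hold, then all of the above bounds hold with equality, and chaining the equalities back through the split gives equality in \eqref{eq:poly-union-leq}.

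There is no real obstacle here: the argument is essentially bookkeeping. The only mild subtlety is that one must bound the two halves of the sum with \emph{different} branches of the $\min$ (the $\U(e)$ branch on $S$, the $\sum_i z_i(e)$ branch on $E\setminus S$), so that the three tightness conditions cleanly match the three displayed equalities; choosing the opposite assignment would still give a valid upper bound but would not certify the min-max pair.
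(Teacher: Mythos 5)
Your proposal is correct and is essentially the same argument as the paper's proof: the paper chains the same three bounds (using $z_i(E\setminus S)\leq \rk_i(E\setminus S)$, the $\U(e)$ branch of the $\min$ on $S$, and the $\sum_i z_i(e)$ branch on $E\setminus S$), merely written starting from the right-hand side instead of the left, and reads off the same correspondence between the tightness of each bound and conditions \eqref{eq:sat-E-S}--\eqref{eq:span-S}. Nothing is missing.
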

\begin{proof}
    Let $S \subseteq E$ be an arbitrary set.
    We have
    \begin{equation*}
        \sum_{i \in N} \rk_i(E \setminus S) + \U(S) \geq \sum_{i \in N} z_i(E \setminus S) + \U(S)
        = \sum_{e \in E \setminus S} \sum_{i \in N} z_i(e) + \sum_{e \in S} \U(e)
        \geq \sum_{e \in E} \min\left\{\sum_{i \in N} z_i(e), \U(e)\right\} .
    \end{equation*}
    The first inequality holds with equality if and only if \eqref{eq:span-S} holds.
    The second inequality holds with equality if and only if both \eqref{eq:sat-E-S} and \eqref{eq:no-overload-S} hold.
\end{proof}

\subsection{Oracle models}\label{sec:oracle}

Auction style algorithms (in contrast to direct methods as described in \cite{leme2020computing}) have to be evaluated differently when we discuss computational efficiency.
The overall running time is heavily dependent on the valuation functions and hence, only pseudo-polynomial in the number of items and buyers in the worst case.
However, determining an overdemanded set to increase prices on (or more generally, any single price-adjustment step) is a computational problem which we should analyze under two different aspects:
\begin{enumerate*}
    \item how much information does the auctioneer require from the buyers to perform a price update, and
    \item given this information, how fast can this update be computed?
\end{enumerate*}
In summary, we should analyze the time to perform just a single step of the auction and the communication cost incurred in such a step instead of the total running time of the auction.
This seems reasonable as the total running time heavily depends on the number of steps, i.e., on something the auctioneer cannot influence with the limited information that is available in a dynamic auction.

Most types of dynamic auctions involve an auctioneer who communicates item prices to the buyers, who are then asked for a bundle $z$ of items, which maximizes their utilities $v_i(z) - \pr{p}{z}$.
In \cite{blumrosen2010computational}, these so called \emph{demand oracles} are compared against other natural types of oracles that partially reveal the private information of the buyers.
It should be mentioned that even for (strong) gross substitutes valuations, the size of preferred bundles can be exponential in $|E|$ and there is no more compact way to encode these sets as shown by \citet{knuth1974asymptotic}.

In this section, we describe the oracles we use in our algorithm and we compare them to the oracles used in the literature.
As our oracle models only rely on the polymatroid properties of the minimal, respectively maximal, preferred bundles at given prices $p$, we will use $\B_i$ as a placeholder for $\DM_i(p)$ and $\dhat_i(p)$.

We first define some weights, that play a key role in our algorithm.
Given $\bz = (z_1, \ldots, z_n)$ where $z_i \in \B_i$ for $i \in N$, let us define the following weights for each pair $\{e, f\} \subseteq E$:
\[
    w_i(e, f) \coloneqq \max \{\alpha \in \Z \mid z_i - \alpha \chi_f + \alpha\chi_e \in \B_i\}.
\]
Their importance is due to the following connection to $i$-tight sets:
\begin{lemma}\label{lem:tight-sets-weights}
    For an \M-convex set $\B_i$, $i \in N$ with associated rank function $\rk_i$, $z_i \in \B_i$, and $e, f \in E$, we have
    \begin{equation}\label{eq:weight-SFM}
    w_i(e, f) = \min \{\rk_i(S) - z_i(S) \mid e \in S, f \notin S\}.
\end{equation}
Thus,
    $w_i(e, f) > 0$ if and only if $f \in \Tt_{\B_i}(e,z_i)$.
    Further, a set $S$ is $i$-tight if and only if $\Tt_{\B_i}(e,z_i) \subseteq S$ for all $e \in S$.
\end{lemma}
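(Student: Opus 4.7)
The plan is to establish the three assertions in sequence, with the min–max identity \eqref{eq:weight-SFM} doing essentially all the work and the remaining two claims following by unpacking definitions and using the closure of $\Tt_{\B_i}(z_i)$ under union and intersection noted just before the lemma.

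To prove \eqref{eq:weight-SFM}, I would invoke Lemma~\ref{lemma:rk-demand_sets} to view $\B_i$ as the integer points of the base polytope defined by $z \geq 0$, $z(E) = \rk_i(E)$, and $z(S) \leq \rk_i(S)$ for all $S \subseteq E$, and then check membership of $z_i + \alpha(\chi_e - \chi_f)$ in $\B_i$ constraint by constraint. The equality $z(E) = \rk_i(E)$ is preserved by the exchange; the rank inequality for $S$ is non-trivial only when $e \in S$ and $f \notin S$, where it reduces to $\alpha \leq \rk_i(S) - z_i(S)$; and the only non-trivial non-negativity constraint is $\alpha \leq z_i(f)$. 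This last bound is already absorbed into the minimum by taking $S = E \setminus \{f\}$: then $\rk_i(S) - z_i(S) = \rk_i(E\setminus\{f\}) - \rk_i(E) + z_i(f) \leq z_i(f)$ because $\rk_i$ is monotone non-decreasing. Since $\rk_i$ and $z_i$ are integer-valued, the optimal $\alpha$ is an integer, yielding equality in \eqref{eq:weight-SFM}.

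The ``thus'' clause then reads off \eqref{eq:weight-SFM}: $w_i(e,f) > 0$ holds if and only if every set $S$ with $e \in S$ and $f \notin S$ satisfies $\rk_i(S) > z_i(S)$, equivalently no tight set contains $e$ while excluding $f$. Since $\Tt_{\B_i}(e,z_i) = \bigcap\{S \in \Tt_{\B_i}(z_i) : e \in S\}$, this is precisely the condition $f \in \Tt_{\B_i}(e,z_i)$.

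For the last assertion, the forward direction is immediate: if $S$ is $i$-tight and $e \in S$, then $\Tt_{\B_i}(e,z_i)$ is by definition the inclusion-wise smallest tight set containing $e$, hence contained in $S$. For the converse, I would observe that $S = \bigcup_{e \in S} \Tt_{\B_i}(e,z_i)$ (``$\subseteq$'' since $e \in \Tt_{\B_i}(e,z_i)$ for each $e$, ``$\supseteq$'' by the hypothesis), and then use closure of $\Tt_{\B_i}(z_i)$ under intersection to see that each $\Tt_{\B_i}(e,z_i)$ is tight, followed by closure under union to conclude that $S$ itself is tight. The main subtlety I expect is the very first step: verifying that the pointwise non-negativity bound $\alpha \leq z_i(f)$ is already subsumed by the rank inequalities via $S = E\setminus\{f\}$, which is what keeps \eqref{eq:weight-SFM} clean without an additional term.
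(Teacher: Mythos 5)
Your proposal is correct and takes essentially the same route as the paper: both establish \eqref{eq:weight-SFM} via the base-polytope description of $\B_i$ from Lemma~\ref{lemma:rk-demand_sets} and then obtain the two tight-set statements exactly as the paper does (intersection/union closure of tight sets). The only difference is presentational: you characterize feasibility of $z_i + \alpha(\chi_e - \chi_f)$ constraint by constraint, explicitly noting that the non-negativity bound $\alpha \leq z_i(f)$ is dominated by the rank constraint for $S = E \setminus \{f\}$, whereas the paper proves the two inequalities $\alpha \leq \beta$ and $\alpha \geq \beta$ directly and leaves that non-negativity point implicit.
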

\begin{proof}
    Let $\alpha \coloneqq w_i(e, f)$ and let $\beta$ denote the minimum value on the right-hand side of \eqref{eq:weight-SFM}, and $S$ a minimizer.
    By definition, $z'_i = z_i - \alpha\chi_f + \alpha\chi_e\in \B_i$.
    We first show $\alpha \leq \beta$.
    This follows since $z'_i(S) \leq \rk_i(S)$, and $z'_i(S) = z_i(S) + \alpha$, and therefore $\beta = \rk_i(S) - z_i(S) \geq z_i'(S) - z_i(S) = z_i'(S) - (z_i'(S) - \alpha) = \alpha$.

    Let us now turn to $\alpha \geq \beta$.
    For a contradiction, assume that $z_i'' = z_i - \beta\chi_f + \beta\chi_e\notin \B_i$, that is, $z''_i(T) > \rk_i(T)$ for some $T \subseteq E$.
    Thus, $z''_i(T) > z_i(T)$; by definition, this means that $e \in T$, $f \notin T$.
    Hence, $T$ is in the scope of the minimization problem in \eqref{eq:weight-SFM}, giving $\rk_i(T) - z_i(T)\geq\beta$.
    But this means that $z''_i(T) = z_i(T) + \beta \leq \rk_i(T)$, leading to a contradiction.
    This completes the proof of \eqref{eq:weight-SFM}.

    From here, the second statement is immediate.
    For the final statement, since $\Tt_{\B_i}(e,z_i)$ is the inclusion-wise minimal tight set containing $e$, it is clearly contained in every $i$-tight set $S \subseteq E$ with $e \in S$.
    Conversely, if $\Tt_{\B_i}(e,z_i) \subseteq S$ is true for every $e \in S$, then $\bigcup_{e \in S} \Tt_{\B_i}(e,z_i) = S$, which is $i$-tight as the union of $i$-tight set is $i$-tight by submodularity of the polymatroid rank function $\rk_i$.
\end{proof}

In the market context, we can interpret the weights $w_i(e,f)$ as part of the information required by the auctioneer from buyer $i$.
Given the current price vector $p$, each buyer $i$ is asked to report one minimal/maximal preferred bundle $z_i \in \B_i(p)$.
We denote the time of such a `demand oracle query' as $\BO$.
Further, we may query values $w_i(e, f)$, which is the maximum number $\alpha \in \Z$ such that buyer $i$ is willing to exchange $\alpha$ units of item $f$ against $\alpha$ units of item $e$.

Our algorithm needs access to these values $w_i(e, f)$; we denote the time required for computing one such value as $\EO$.
Their computational complexity for the buyers depends on the particular representation of the $\B_i$ sets.

A common model to represent the valuation functions $v_i$ is by a value oracle.
Given a value oracle, a minimal/maximal preferred bundle $z_i \in \B_i(p)$ can be computed using the greedy algorithm by $m^2$ value oracle calls.
Given $z_i \in \B_i$ and $\alpha \in \Z$, checking if $z_i - \alpha\chi_f + \alpha\chi_e \in \B_i(p)$ takes a single value oracle call.
Thus, one can compute $w_i(e, f)$ in $\mathcal{O}(\log(B))$ calls to the value oracle.
We can also formulate the computation of $w_i(e, f)$ as the submodular function minimization problem in \eqref{eq:weight-SFM}.
Hence, one can use a strongly polynomial submodular function minimization algorithm.
Nevertheless, this requires access to $\rk_i(S)$, which may lead to running a greedy algorithm for each call.
In \Cref{app:oracle_calls}, we provide an example that shows that the particular running time of $\BO$ and $\EO$ heavily depend on the representation of the valuation function, using the example of OXS functions.

For the unit-supply setting, the sets $\DM_i(p)$ and $\dhat_i(p)$ are matroid base sets and our task is to solve the matroid union problem.
For matroid optimization, the usual oracle settings are via independence or rank oracle queries.
Such oracles may be expensive if our primary oracle is for the valuation function $v_i$:
each independence or rank oracle query may require running a greedy algorithm.

To take the actual computation time to answer the oracle in some underlying structure into account, \citet{blikstad2023fast} propose to use the \emph{dynamic rank oracle}.
The idea is that given an independent set $S$, it is easy to find an independent set close to $S$, but much harder to find an independent set which is farther away (with respect to the symmetric difference of the sets).
To be precise, a dynamic rank oracle starts with the empty set $S$, then three different operations are denoted as oracle calls \begin{enumerate*} \item adding an element to $S$, \item deleting an element from $S$, and \item obtaining the rank of $S$
\end{enumerate*} (see also \cite[Definition 1.2]{blikstad2023fast}).

Our exchange oracle $\EO$ is efficient from this point of view, since given an independent set, we always query sets with symmetric difference two.
We can perform an exchange oracle query by using three dynamic rank oracles calls.\footnote{If we cannot perform the queried exchange, one can also argue that we need two dynamic oracle calls more to return to the original set, i.e., five dynamic oracle calls for one exchange oracle query.}
Note that we assume that we do not have to build the first independent set $z_i$ with the dynamic rank oracle but that we can perform the queries starting from the set that we obtained from the demand oracle.
A demand oracle query ($\BO$) can be answered by $\mathcal{O}(m)$ dynamic oracle queries.
The other way around, the dynamic rank oracle is stronger than the exchange oracle, in the sense that the dynamic rank oracle can compute the rank of a given set $S$ in $O(\abs{S})$ computations, while the exchange oracle cannot answer this question.
Note that in the algorithm for the unit-supply setting, we will use exchange operations that do not change the rank of the sets, so in fact we know the ranks as well.
It can therefore be said that the two models behave very similarly in the matroid case and our exchange oracle model can be generalized to polymatroids.

If we analyze our matroid union algorithm in the unit-supply setting with the dynamic rank oracle, we get a running time of $\mathcal{O}((m^3+nm^2) \cdot \DRO)$.
Hence, the running time of our matroid union algorithm in the dynamic rank oracle model is not comparable to the running time $\mathcal{O}(\mathrm{poly}(n \log(m))m^{3/2} \cdot \DRO)$ by \citet{blikstad2023fast} as we are linear in $n$, while \citet{blikstad2023fast} achieve a better dependency on $m$.

\section{Finding the overdemanded and underdemanded sets}
\label{sec:finding_over_and_underdemanded_sets}
As described earlier, minimal maximal overdemanded and minimal maximal underdemanded sets play an important role in the context of dynamic auctions.
In this section, we discuss their connection to the corresponding polymatroids and their rank functions.
Remember that $\DM_i(p)$ and $\dhat_i(p)$ are the base sets of integral polymatroids with associated rank functions $\rkmin$ and $\rkmax$.
These rank functions will help us to formally express over- and underdemandedness.

Moreover, we will see in \Cref{sec:exchangegraph} and \Cref{sec:exchange_graph_underd} how the minimal maximal over-/underdemanded sets can be computed using an optimal solution to the corresponding polymatroid sum problem.

\subsection{Overdemanded sets}

\label{sec:finding-min-max-overdemanded-set}
First, we consider minimal preferred bundles and their connection to maximal overdemanded sets.
For any price vector $p \in \R^E_+$, any set $S \subseteq E$, and any buyer $i \in N$, we define
\begin{equation*}
    \rkmin_i^p(S) \coloneqq \max\{z(S) \mid z \in \DM_i(p)\} \hspace{5ex} \mbox{ and } \hspace{5ex}
    \check{\mr}_i^p(S) \coloneqq \min\{z(S) \mid z \in \DM_i(p)\}
\end{equation*}
as the maximum, respectively minimum, number of items from $S$ in some minimal preferred bundle.
By \Cref{lemma:rk-demand_sets}, the set $\DM_i(p)$ is implicitly given by $\rkmin_i^p$, and also $\check{\mr}_i^p$ can be expressed in terms of $\rkmin_i^p$ via $\check{\mr}_i^p(S) = \rkmin_i^p(E) - \rkmin_i^p(E \setminus S).$
Further, for $p \in \R^E_+$ and $S \subseteq E$, we define the \emph{overdemandedness} of $S$ at prices $p$ by
\[
    \overd^p(S) \coloneqq \sum_{i \in N} \check{\mr}_i^p(S) - \U(S) \  = \sum_{i \in N} (\rkmin_i^p(E) - \rkmin_i^p(E \setminus S)) - b(S).
\]

\begin{definition}\label{def:overdemanded}
    We call a set $S \subseteq E$ \emph{overdemanded} w.r.t.\ prices $p$ if $\overd^p(S) > 0$.
    Moreover, a set $S \subseteq E$ is called \emph{maximal overdemanded} if it is overdemanded and maximizes the overdemandedness $\overd^p(S)$ over all bundles $S \subseteq E$.
\end{definition}
We may omit the superscript $p$ in $\overd^p(S)$ when it is clear from the context.
The definition is quite intuitive since $\check{\mr}_i^p(S)$ denotes the least number of items buyer $i$ likes to receive from $S$ and, thus, $\sum_{i \in N} \check{\mr}_i^p(S)$ can be seen as the demand, whereas $\U(S)$ denotes the supply.
Note that in the presence of an overdemanded set, the prices cannot be packing.
In \Cref{lem:packing}, we show that the converse is also true, namely, a price vector is packing if and only if no overdemanded set exists.

\begin{lemma}\label{lem:packing}
    Given a market instance with strong gross substitutes valuations, prices $p$ are packing if and only if there is no overdemanded set.
\end{lemma}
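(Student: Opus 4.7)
The plan is to apply the Min-Max Theorem for Polymatroid Sum (\Cref{thm:poly-union}) to the M-convex sets $\B_i = \DM_i(p)$ with associated rank functions $\rkmin_i^p$, together with its complementary slackness formulation in \Cref{lem:poly-union-opt}. Since $\overd^p(S) = \sum_{i \in N} \rkmin_i^p(E) - \sum_{i \in N} \rkmin_i^p(E \setminus S) - \U(S)$, the condition ``no overdemanded set exists'' is equivalent to
\[
\sum_{i \in N} \rkmin_i^p(E) \;\leq\; \sum_{i \in N} \rkmin_i^p(E \setminus S) + \U(S) \qquad \text{for every } S \subseteq E.
\]
Equality is always attained at $S = \emptyset$, so this is equivalent to saying that the minimum on the right-hand side of~\eqref{eq:max-min} equals $\sum_{i \in N} \rkmin_i^p(E)$ and is achieved by $S = \emptyset$.

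For the forward direction, I would take a packing allocation $(z_1, \dots, z_n)$ and first reduce to the case $z_i \in \DM_i(p)$: any minimal element of the finite nonempty set $\{y \in \dset_i(p) \mid y \leq z_i\}$ is automatically a minimal element of $\dset_i(p)$, and therefore lies in $\DM_i(p)$. Replacing each $z_i$ by such a componentwise-smaller bundle preserves the packing inequality $\sum_i z_i(e) \leq \U(e)$. Then, by the definition of $\check{\mr}_i^p(S)$ as the minimum of $y(S)$ over $y \in \DM_i(p)$, I have $z_i(S) \geq \check{\mr}_i^p(S)$ for every $S \subseteq E$, so
\[
\sum_{i \in N} \check{\mr}_i^p(S) \;\leq\; \sum_{i \in N} z_i(S) \;\leq\; \U(S),
\]
which gives $\overd^p(S) \leq 0$.

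For the backward direction, I would combine the observation above with the Min-Max Theorem: since the minimum on the right-hand side of \eqref{eq:max-min} equals $\sum_{i \in N} \rkmin_i^p(E)$ and is achieved at $S = \emptyset$, there exists $\bz = (z_1, \dots, z_n) \in \bigtimes_{i \in N} \DM_i(p)$ for which the complementary slackness conditions \eqref{eq:sat-E-S}--\eqref{eq:span-S} of \Cref{lem:poly-union-opt} hold at $S = \emptyset$. Now \eqref{eq:sat-E-S} is vacuous, condition \eqref{eq:span-S} is automatic because $z_i \in \DM_i(p)$ is a base and hence satisfies $z_i(E) = \rkmin_i^p(E)$, and condition \eqref{eq:no-overload-S} reduces exactly to the packing inequality $\sum_{i \in N} z_i(e) \leq \U(e)$ for all $e \in E$. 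Since $z_i \in \DM_i(p) \subseteq \dset_i(p)$, the allocation $(z_1, \dots, z_n)$ witnesses that $p$ is packing.

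No substantial obstacle is anticipated. The only subtle point is the lift from $\dset_i(p)$ to $\DM_i(p)$ in the forward direction, which is immediate from the definition of $\DM_i(p)$ as the collection of minimal preferred bundles and the finiteness of $\rng$. The rest of the argument is a direct application of \Cref{thm:poly-union,lem:poly-union-opt} with the canonical choice $S = \emptyset$.
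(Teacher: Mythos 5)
Your proof is correct and follows essentially the same route as the paper: both directions rest on the Min-Max Theorem for polymatroid sum (\Cref{thm:poly-union}) applied to the \M-convex sets $\DM_i(p)$, and your backward direction merely repackages the paper's chain of equalities as complementary slackness (\Cref{lem:poly-union-opt}) at $S=\emptyset$. Minor pluses: your forward direction is even more direct (it needs no min-max at all, only $z_i(S)\ge\check{\mr}_i^p(S)$ and the packing inequality), and your explicit replacement of each $z_i$ by a minimal preferred bundle below it makes precise a step the paper glosses over when it asserts $z_i\in\DM_i(p)$ for the packing allocation.
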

In the unit-supply setting, this corollary was already shown by \citet{gul2000english}.

\begin{proof}
    Let $p$ be a packing price vector and $\bz=(z_1,\dots,z_n)$ the corresponding packing allocation.
	Hence $z_i \in \DM_i(p)$, i.e., $z_i(E) = \rkmin_i(E)$ holds.
	This yields
    \[
        \sum_{i \in N} \rkmin_i(E)
        = \sum_{i \in N} z_i(E)
        = \sum_{e \in E} \sum_{i \in N} z_i(e)
        = \sum_{e \in E} \min \left\{\sum_{i \in N} z_i(e), \U(e)\right\}
    \]
    where the last equality holds as $\bz$ is packing ($\sum_{i \in N} z_i(e) \leq b(e)$ for all $e \in E$).
 Hence, \Cref{thm:poly-union} allows us to conclude that
    \begin{align}
        \sum_{i \in N} \rkmin_i(E) &= \sum_{e \in E} \min \left\{\sum_{i \in N} z_i(e), \U(e)\right\} \leq \max_{\bz'\in\DM^\times} \sum_{e \in E} \min \left\{\sum_{i \in N} z'_i(e), \U(e)\right\}
        \notag\\
        &= \min_{S \subseteq E} \left\{\sum_{i \in N} \rkmin_i(E \setminus S) + \U(S)\right\}.\label{eq:rank-min}
    \end{align}
    The outer inequality can be transformed into the following formula, which can be seen as a \emph{generalized Hall formula}:
    \begin{equation}\label{eq:Hall}
        \sum_{i \in N} \left(\rkmin_i(E) - \rkmin_i(E \setminus S)\right) \leq \U(S) \quad \text{for all } S \subseteq E.
    \end{equation}
    This is again equivalent to the fact that there is no overdemanded set $S$.
    \medskip
    To show the reverse direction, assume that there is no overdemanded set at prices $p$.
    Thus, we apply equation~\eqref{eq:Hall}, which yields
    \begin{align*}
        \sum_{i \in N} \rkmin_i(E)
        &\leq \min_{S \subseteq E} \left\{\sum_{i \in N} \rkmin_i(E \setminus S) + \U(S)\right\}
        = \max_{\bz \in \DM^\times} \left\{\sum_{e \in E} \min \left\{\sum_{i \in N} z_i(e), \U(e)\right\}\right\} \\
        &\leq \max_{\bz \in \DM^\times} \left\{\sum_{e \in E} \sum_{i \in N} z_i(e)\right\}
        = \max_{\bz \in \DM^\times} \left\{\sum_{i \in N} z_i(E)\right\}
        =\sum_{i \in N} \rkmin_i(E) .
    \end{align*}
    Hence, equality must hold throughout the equation.
	In particular, this means that there exists an allocation $\bz \in \DM^\times$ with $\sum_{i \in N} z_i(e) \leq b(e)$ for all $e \in E$, i.e., the allocation $\bz$ is packing and the price vector is packing as well.
\end{proof}

Next, we will consider the connection of maximal overdemanded sets to an optimal solution $\bz$ of the polymatroid sum problem
\begin{equation}\label{eq:max_dcheck}
    \max_{\bz \in \DM^\times} \sum_{e \in E} \min \left\{\sum_{i \in N} z_i(e), \U(e)\right\}.
\end{equation}

This is done in several steps.
First, we characterize which properties a minimal maximal overdemanded set needs to satisfy and introduce a technical lemma.
Then, we combine these two statements to describe a nice way to compute an actual minimal maximal overdemanded set by constructing an auxiliary digraph and choosing all items from which we can reach an oversold item.
We say that an item $e \in E$ is \emph{undersold} if $\sum_{i \in N} z_i(e) < \U(e)$, and \emph{oversold} if $\sum_{i \in N} z_i(e) > \U(e)$.
For $i \in N$ and $z_i \in \DM_i$, let us call a set $S \subseteq E$ \emph{$i$-tight} if $z_i(S) = \rkmin_i(S)$.
To shorten notation slightly, we write, for $e \in E$,
\[
    \Tt_i(e) \coloneqq \Tt_{\DM_i}(e, z_i)
\]
for the minimal $i$-tight set containing $e$ when $z_i$ and $\DM_i$ are clear from the context.

\begin{lemma}\label{lem:minmax-over}
    Assume $\bz = (z_1, \ldots, z_n)$ is an optimal solution to \eqref{eq:max_dcheck}.
    Then, a set $S$ is maximal overdemanded if and only if it fulfills the following properties:
    \begin{itemize}
        \item $E\setminus S$ is $i$-tight for every $i \in N$;
        \item $S$ does not include any undersold items;
        \item $S$ includes all oversold items.
    \end{itemize}
   Moreover, there is a unique inclusion-wise minimal set fulfilling these properties.
\end{lemma}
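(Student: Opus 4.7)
The plan is to exhibit the three listed properties as the complementary-slackness conditions from Lemma~\ref{lem:poly-union-opt} applied to the optimal pair $(\bz,S)$ in the Min-Max Theorem~\ref{thm:poly-union}. I would first rewrite
\[
    \overd(S) \;=\; \sum_{i \in N}\rkmin_i(E) \;-\; \Bigl(\sum_{i \in N}\rkmin_i(E\setminus S) + \U(S)\Bigr),
\]
so that maximizing $\overd(S)$ over $S \subseteq E$ is the same as minimizing the right-hand side of the Min-Max Theorem. Since $\bz$ is optimal for \eqref{eq:max_dcheck}, Theorem~\ref{thm:poly-union} yields that this minimum equals $\sum_{e\in E}\min\bigl\{\sum_{i\in N}z_i(e),\U(e)\bigr\}$, and Lemma~\ref{lem:poly-union-opt} characterizes the minimizing sets $S$ as exactly those for which \eqref{eq:sat-E-S}, \eqref{eq:no-overload-S}, and \eqref{eq:span-S} all hold. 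These translate verbatim into the three bullets: \eqref{eq:sat-E-S} says every $e \in S$ satisfies $\sum_i z_i(e)\geq \U(e)$, i.e.\ no element of $S$ is undersold; \eqref{eq:no-overload-S} says every $e \notin S$ satisfies $\sum_i z_i(e)\leq \U(e)$, i.e.\ every oversold item lies in $S$; and \eqref{eq:span-S} is exactly the $i$-tightness of $E\setminus S$ for every $i$. Because the existence of some overdemanded set forces $\max_S \overd(S) > 0$, every maximizer of $\overd$ is automatically overdemanded, so the maximizers of $\overd$ coincide with the maximal overdemanded sets.

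For uniqueness of the inclusion-wise minimal such set, I would verify that the three properties are closed under pairwise intersection. If $S_1, S_2$ both satisfy them, every oversold item lies in both and hence in $S_1 \cap S_2$; no element of $S_1 \cap S_2$ is undersold since $S_1 \cap S_2 \subseteq S_1$; and $E \setminus (S_1 \cap S_2) = (E\setminus S_1) \cup (E\setminus S_2)$ is the union of two $i$-tight sets for every $i$, hence itself $i$-tight by submodularity of $\rkmin_i$ (this is precisely the union-closure of tight sets recorded before Lemma~\ref{lem:tight-sets-weights}). Intersecting all sets satisfying the three conditions therefore yields the unique inclusion-wise minimal one.

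The main point requiring care is the bookkeeping that links four equivalent notions---maximal overdemanded sets, maximizers of $\overd$, minimizers of $\sum_i \rkmin_i(E\setminus S) + \U(S)$, and sets satisfying the three listed properties---together with the subtlety that being \emph{maximal overdemanded} requires strict positivity of $\overd$, which must be preserved when passing from ``maximizer of $\overd$'' to ``maximal overdemanded set''; this is where the hypothesis that an overdemanded set exists implicitly enters. The substantive work is then a direct application of polymatroid complementary slackness combined with closure of tight sets under union.
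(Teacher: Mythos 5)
Your proposal is correct and follows essentially the same route as the paper: rewrite $\overd(S)$ so that maximal overdemanded sets are the minimizers of the right-hand side of \eqref{eq:max-min}, invoke optimality of $\bz$ together with the complementary-slackness conditions of \Cref{lem:poly-union-opt} to translate minimality into the three bullets, and conclude uniqueness of the minimal such set from closure under intersection (the paper additionally notes closure under union, and your extra remarks on the positivity of $\overd$ and on why the union of $i$-tight sets is $i$-tight only make explicit points the paper leaves implicit).
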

\begin{proof}
    Let $S\subseteq E$.
    We can write
    \begin{align*}
        \overd(S) = \sum_{i \in N} \rkmin_i(E) - \left(\sum_{i \in N} \rkmin_i(E \setminus S) + \U(S)\right).
    \end{align*}
    Hence, $S$ is a maximal overdemanded set if and only if it is a minimizer of the right-hand side of~\eqref{eq:max-min}.
    Since $\bz$ is optimal, we get, by \Cref{thm:poly-union},
    \[
        \sum_{e \in E} \min \left\{\sum_{i \in N} z_i(e), \U(e)\right\} = \sum_{i \in N} \rkmin_i(E \setminus S) + \U(S).
    \]
    By Lemma~\ref{lem:poly-union-opt}, this is the case if and only if $S$ satisfies all three properties stated in the lemma.
    Moreover, if $S$ and $S'$ satisfy these three properties, then $S \cap S'$ and $S \cup S'$ also satisfy them.
    Hence, there exists a unique inclusion-wise minimal set satisfying these three properties.
\end{proof}

In \Cref{sec:exchangegraph} we will see that we can compute a minimal maximal overdemanded set by performing a breadth-first search in the exchange graph corresponding to an optimal solution $\bz$ of the polymatroid sum problem~\eqref{eq:max_dcheck}.

\subsection{Finding minimal maximal overdemanded sets via the exchange graph}\label{sec:exchangegraph}
We define the exchange graph $\check{G}_i(z_i) = (E, \check{A}_i)$ for $i \in N$ by setting
\[
    \check{A}_i \coloneqq \{(e, f) \in E \times E \mid \check{w}_i(e, f) > 0\},
\]
where $\check{w}_i(e, f) \coloneqq \max \{\alpha \in \Z \mid z_i - \alpha \chi_f + \alpha\chi_e \in \DM_i\}$ is the weight which can be computed by an exchange oracle call (ExO) as described in \Cref{sec:oracle}.
We let $\check{G}(\bz) = (E, \check{A})$, where $\check{A} \coloneqq \bigcup_{i \in N} \check{A}_i$.

\begin{theorem}\label{thm:computation_overd-set}
    Assume $\bz$ is an optimal solution to \eqref{eq:max_dcheck}.
    Then the minimal maximal overdemanded set is the set $R$ of items from which we can reach an oversold item in $\check{G}(\bz)$.
\end{theorem}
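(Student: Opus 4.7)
My plan is to verify that $R$ satisfies the three characterizing properties of the minimal maximal overdemanded set from Lemma \ref{lem:minmax-over}, namely: (a) $R$ contains every oversold item, (b) $E\setminus R$ is $i$-tight for every $i\in N$, and (c) $R$ contains no undersold item. Since Lemma \ref{lem:minmax-over} asserts existence and uniqueness of the minimal set with these three properties, this identifies $R$ as exactly that set.

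Properties (a) and (b) are immediate from the definition of $R$. For (a), every oversold item trivially reaches itself via the length-zero path. For (b), I will take an arbitrary $e\in E\setminus R$ and $f\in\Tt_i(e)$ for some buyer $i$; by Lemma \ref{lem:tight-sets-weights} we have $\check{w}_i(e,f)>0$, so $(e,f)\in\check{A}_i\subseteq\check{A}$. If $f$ were in $R$, then prepending the arc $(e,f)$ to a path from $f$ to an oversold item would place $e$ in $R$, a contradiction. Hence $\Tt_i(e)\subseteq E\setminus R$ for every $e\in E\setminus R$, and the last clause of Lemma \ref{lem:tight-sets-weights} gives that $E\setminus R$ is $i$-tight.

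The main obstacle is property (c): arguing directly that no undersold item lies in $R$ would seem to require an augmenting-path argument that pushes one unit of flow along a shortest path from undersold to oversold, invoking M-convexity to compose multiple exchanges performed by the same buyer. I sidestep that technicality by first establishing the following claim: every set $S$ with properties (a)--(c) contains $R$. Given $e\in R$ and a path $e=e_0,e_1,\ldots,e_k$ in $\check{G}(\bz)$ with $e_k$ oversold, property (a) gives $e_k\in S$. Assuming $e\notin S$, let $j$ be the smallest index with $e_j\in S$; then $e_{j-1}\in E\setminus S$ and $e_j\in S$. Since $(e_{j-1},e_j)\in\check{A}_i$ for some $i$, Lemma \ref{lem:tight-sets-weights} yields $e_j\in\Tt_i(e_{j-1})$, while $i$-tightness of $E\setminus S$ forces $\Tt_i(e_{j-1})\subseteq E\setminus S$, contradicting $e_j\in S$. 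Hence $e\in S$, proving the claim.

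To conclude, let $S^*$ be the unique minimal set satisfying (a)--(c) guaranteed by Lemma \ref{lem:minmax-over}. The claim yields $R\subseteq S^*$, so $R$ inherits property (c) from $S^*$. Thus $R$ itself satisfies (a)--(c), and by minimality of $S^*$ we also have $S^*\subseteq R$, whence $R=S^*$ is the minimal maximal overdemanded set.
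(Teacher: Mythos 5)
Your proof is correct and follows essentially the same route as the paper: verify the three properties of Lemma~\ref{lem:minmax-over} for $R$ using Lemma~\ref{lem:tight-sets-weights}, and use the fact that a path cannot cross from a tight complement into a set satisfying those properties to get $R\subseteq S$ and hence minimality. The only (harmless) difference is organizational: you derive the ``no undersold items in $R$'' property from the containment $R\subseteq S^*$, whereas the paper argues directly that no undersold item can reach an oversold one, which is the same separation argument.
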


\begin{proof}
    Let $\bz$ be an optimal solution to \eqref{eq:max_dcheck} and $R$ the set of items from which an oversold item can be reached in $\check{G}(\bz)$.
    We will show that $R$ is a maximal overdemanded set and, moreover, included in any other maximal overdemanded set.

    To this end, let $S$ be a maximal overdemanded set.
	Then by \Cref{lem:minmax-over}, $S$ contains all oversold items and no undersold item.
    Moreover, $E \setminus S$ is tight and thus, by \Cref{lem:tight-sets-weights}, there is no arc leaving $E \setminus S$, i.e., $\delta^-(S) = \emptyset$.
    \begin{figure}
    \centering
        \begin{tikzpicture}[scale=0.24, transform shape, every node/.style={scale = 3.8}]
            \draw[rwthblue, fill=rwthblue!10] (0,0) ellipse (5 and 10);
            \node[draw=none, rwthblue] at (-3,10) {\small $S$};
			\draw[rwthblue, fill=rwthblue!10] (14,0) ellipse (5 and 10);
            \node[draw=none, rwthblue] at (11,10) {\small $E\setminus S$};
            \draw[rwthgreen, fill=rwthgreen!10] (0,2) ellipse (4.5 and 6);
            \draw[rwthorange, fill=rwthorange!10] (0,4) ellipse (3.2 and 3.2);
            \node[rwthorange, fill=none, draw=none] (o) at (0,5) {\small oversold};
            \draw[rwthorange, fill=rwthorange!10] (14,4) ellipse (3.2 and 3.2);
            \node[rwthorange, fill=none, draw=none] (o) at (14,5) {\small undersold};
            \draw [->, rwthred, thick] (11,-5) --node[midway]{$\mathbin{\tikz [x=1.4ex,y=1.4ex,line width=.2ex, rwthred,-] \draw (0,0) -- (1,1) (0,1) -- (1,0);}$} (3,-5);

		      \node[item] (1) at (0,2) {};
		      \node[item] (2) at (-1.5,0) {};
		      \node[item] (3) at (-1.5,-2) {};
		      \node[item] (4) at (1,-2) {};
		      \node[item] (5) at (1.5,0.5) {};
		      \node[item] (6) at (-1.5,3) {};
		      \node[item] (7) at (-3.25,2) {};

		      \draw[->] (2) -- (1);
		      \draw[->] (3) -- (2);
		      \draw[->] (4) -- (2);
		      \draw[->] (5) -- (1);
		      \draw[->] (7) -- (6);

		      \node[draw=none, rwthgreen] at (2.5,-1) {\small $R$};

            \draw [->, rwthred, thick] (-3,-5) --node[midway, rotate=52]{$\mathbin{\tikz [x=1.4ex,y=1.4ex,line width=.2ex, rwthred,-] \draw (0,0) -- (1,1) (0,1) -- (1,0);}$} (3);
	\end{tikzpicture}
    \caption{A maximal overdemanded set $S$ and the set of items $R$ from which an oversold item can be reached.}
    \label{fig:proof_overdemanded_set}
    \end{figure}
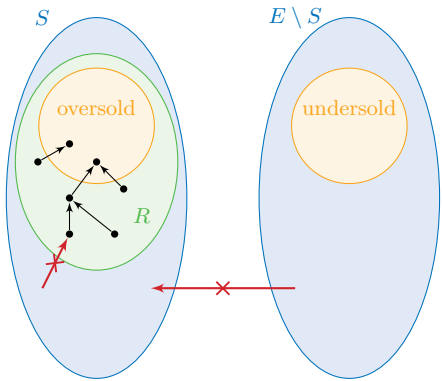
    Now, consider the set $R$.
    A sketch of $S$ and $R$ can be found in \Cref{fig:proof_overdemanded_set}.
    By definition, $R$ contains every oversold item.
    Moreover, as all undersold items are in $E\setminus S$, all oversold items are in $S$, and $\delta^-(S) = \emptyset$, there is no path from an undersold item to an oversold item.
    Hence, the undersold items are in $E \setminus R$.
	Furthermore, by definition of $R$, there is no arc entering $R$, i.e., the set $E \setminus R$ is tight.
    Hence $R$ is maximal overdemanded.

    Finally, observe that $R \subseteq S$, as otherwise, there would be an arc entering $S$ which yields a contradiction.
	This means that the set $R$ is included in every maximal overdemanded set and is thus the inclusion-wise minimal maximal overdemanded set.
\end{proof}

\begin{lemma}\label{lem:running_time_overd_set}
    Overdemanded sets can be computed in time $\mathcal{O}(nm^2 \cdot\EO)$, given an optimal solution $\bz$ to the polymatroid sum problem \eqref{eq:max_dcheck}.
\end{lemma}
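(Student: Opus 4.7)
The plan is to invoke \Cref{thm:computation_overd-set}, which characterizes the minimal maximal overdemanded set $R$ as the collection of items from which some oversold item can be reached in the exchange graph $\check{G}(\bz)$. Given $\bz$, it therefore suffices to (i) identify the oversold items, (ii) build the arc set of $\check{G}(\bz)$, and (iii) compute the set of ancestors of the oversold items via a backward breadth-first search.

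For (i), compute $\sum_{i \in N} z_i(e)$ for each $e \in E$ and compare with $\U(e)$; this requires $\mathcal{O}(nm)$ arithmetic operations and no oracle calls. For (ii), recall that $\check{A} = \bigcup_{i \in N} \check{A}_i$ and $(e,f) \in \check{A}_i$ exactly if $\check{w}_i(e,f) > 0$. For each buyer $i \in N$ and each ordered pair $(e,f) \in E \times E$ we issue one exchange oracle query to obtain $\check{w}_i(e,f)$ (using the $\EO$ model introduced in \Cref{sec:oracle}), and include the arc $(e,f)$ in $\check{A}$ whenever at least one of these queries returns a positive value. This contributes $n \cdot m^2$ exchange oracle calls, i.e., $\mathcal{O}(nm^2 \cdot \EO)$ time. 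For (iii), the graph $\check{G}(\bz)$ has $m$ vertices and $|\check{A}| = \mathcal{O}(nm^2)$ arcs, so a reverse BFS from the set of oversold items runs in $\mathcal{O}(nm^2)$ time with no additional oracle access.

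Adding these contributions yields the claimed bound of $\mathcal{O}(nm^2 \cdot \EO)$, since the construction of the exchange graph dominates. Correctness is immediate from \Cref{thm:computation_overd-set}: the ancestor set of the oversold items in $\check{G}(\bz)$ is exactly the inclusion-wise minimal maximal overdemanded set. No step is technically delicate; the only place where one must be careful is that the arcs of $\check{A}_i$ are defined via the weights $\check{w}_i(e,f)$ rather than via independence-style queries, but this is precisely what the exchange oracle model of \Cref{sec:oracle} provides, so each arc can indeed be decided by a single call.
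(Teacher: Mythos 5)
Your proposal is correct and follows essentially the same route as the paper: build the exchange graph $\check{G}(\bz)$ with $\mathcal{O}(nm^2)$ exchange oracle queries, run a reverse breadth-first search from the oversold items, and conclude via \Cref{thm:computation_overd-set}. The only cosmetic difference is that your BFS bound of $\mathcal{O}(nm^2)$ can be tightened to $\mathcal{O}(m^2)$ since $\check{A}\subseteq E\times E$ has at most $m^2$ arcs regardless of $n$, but this does not affect the stated running time.
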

\begin{proof}
    First, we construct the exchange graph.
    We consider each of the $m$ items $e \in E$ once and do an edge-weight query for each of the $n$ buyers $i \in N$ and her currently assigned items.
    Thus, we have a running time of $\mathcal{O}(nm^2\cdot\EO)$ for this step.
    Then we do a breadth-first search in the graph, starting from all oversold items in reverse direction of the arcs.
    This takes time $\mathcal{O}(m + m^2)$.
    Hence, we need $\mathcal{O}(nm^2\EO + m^2)$ time to compute the set of items from which we can reach any oversold item, which is, by \Cref{thm:computation_overd-set}, equal to the minimal maximal overdemanded set.
\end{proof}

In \Cref{sec:push-relabel}, we will present an algorithm to compute an optimal solution $\bz$ of the polymatroid sum problem \eqref{eq:poly-union}.
This will lead to a total running time of $\mathcal{O}(n \cdot \BO + nm^3 \cdot \EO)$ to compute an overdemanded set in the multi-supply setting and to an improved running time of $\mathcal{O}(n\cdot\BO+(m^3 + nm^2)\cdot\EO)$ in the unit-supply setting.

\subsection{Underdemanded sets}
The notion of underdemandedness is complementary to the notion of overdemandedness.
Intuitively, it measures how many items from a set $S$ will remain unsold, even if every buyer chooses a preferred bundle with the maximum number of items from $S$.

Here, we can make use of the rank function of the polymatroid base polytope given by the maximal demand sets $\dhat_i(p)$, i.e, by
\[\rkmax^p_i(S)\coloneqq \max\{z(S)\mid z\in \dhat_i(p)\}.\]

This allows us to express the underdemandedness of a set $S$ for prices $p$ as
\[\underd^p(S) \coloneqq b(S)-\sum_{i\in N} \rkmax_i^p(S).\]
\begin{definition}\label{def:underdemanded}
	We call a set $S \subseteq E$ \emph{underdemanded} w.r.t.\ prices $p$ if $\underd^p(S) > 0$.
	Moreover, a set $S \subseteq E$ is called \emph{maximal underdemanded} if it is underdemanded and maximizes the underdemandedness $\underd^p(S)$ over all bundles $S \subseteq E$.
\end{definition}
In the following, we will omit the superscript $p$ to improve readability.
\begin{restatable}{lemma}{LemCovering}\label{lem:covering}
    Given a market instance with strong gross substitutes valuations, prices $p$ are covering if and only if there is no underdemanded set.
\end{restatable}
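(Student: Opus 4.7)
The plan is to mirror the proof of \Cref{lem:packing}, replacing the sets $\DM_i(p)$ and rank functions $\rkmin_i$ with $\dhat_i(p)$ and $\rkmax_i$, and the ``packing target'' $\sum_i \rkmin_i(E)$ with the ``covering target'' $\U(E)$. Concretely, I would apply \Cref{thm:poly-union} to the \M-convex sets $\dhat_i(p)$, obtaining
\[
\max_{\bz \in \dhat^\times} \sum_{e\in E} \min\Bigl\{\sum_{i\in N} z_i(e),\,\U(e)\Bigr\}
= \min_{S \subseteq E} \Bigl\{\sum_{i\in N} \rkmax_i(E \setminus S) + \U(S)\Bigr\},
\]
where $\dhat^\times \coloneqq \bigtimes_{i \in N} \dhat_i(p)$. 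Both sides are trivially at most $\U(E)$: the left-hand side since $\min\{\cdot,\U(e)\} \leq \U(e)$ for every $e$, and the right-hand side by taking $S = E$.

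For the ``$\Leftarrow$'' direction, the hypothesis that no underdemanded set exists reads $\sum_{i \in N} \rkmax_i(T) \geq \U(T)$ for all $T \subseteq E$. Substituting $T = E \setminus S$, this rearranges to $\sum_{i \in N} \rkmax_i(E \setminus S) + \U(S) \geq \U(E)$ for every $S$, forcing the right-hand side above to equal $\U(E)$. By the Min-Max identity, the left-hand side is then attained by some $\bz \in \dhat^\times$ with $\sum_{i \in N} z_i(e) \geq \U(e)$ for every $e \in E$. Since $\dhat_i(p) \subseteq \dset_i(p)$, such a $\bz$ is precisely a covering allocation, so $p$ is a covering price vector.

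For the ``$\Rightarrow$'' direction, start with an arbitrary covering allocation $(z_1,\dots,z_n)$, so that $z_i \in \dset_i(p)$ and $\sum_i z_i(e) \geq \U(e)$. The step I expect to need the most care, analogous to replacing an allocation in $\dset^\times$ by one in $\DM^\times$ in the proof of \Cref{lem:packing}, is moving this allocation into $\dhat^\times$ while preserving coverage. Since $\rng$ is finite, the nonempty set $\{y \in \dset_i(p) : y \geq z_i\}$ has a component-wise maximal element $y_i$, and by the definition of $\dhat_i(p)$ this $y_i$ automatically belongs to $\dhat_i(p)$ (any strictly larger bundle in $\dset_i(p)$ would contradict maximality). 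The new allocation $(y_1,\dots,y_n) \in \dhat^\times$ satisfies $\sum_i y_i(e) \geq \sum_i z_i(e) \geq \U(e)$ for all $e$, so the left-hand side of the displayed identity attains $\U(E)$. Hence so does the right-hand side, and reversing the substitution $T = E \setminus S$ converts this back into the inequalities $\sum_{i \in N} \rkmax_i(T) \geq \U(T)$ for all $T \subseteq E$, i.e., the nonexistence of an underdemanded set.
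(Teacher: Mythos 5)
Your proposal is correct and follows essentially the same route as the paper: apply the Min--Max Theorem for polymatroid sum (\Cref{thm:poly-union}) to the \M-convex sets $\dhat_i(p)$, observe both sides are bounded by $\U(E)$, and identify attainment of $\U(E)$ on the max side with the existence of a covering allocation and on the min side with the nonexistence of an underdemanded set. The only difference is that you spell out explicitly the step of lifting a covering allocation from $\dset^\times$ into $\dhat^\times$ via component-wise maximal preferred bundles, which the paper's proof treats as immediate.
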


The lemma follows, similarly to \Cref{lem:packing}, from the Min-Max Theorem for polymatroid sum (\Cref{thm:poly-union}).
The proof can be found in \Cref{app:underdemanded_sets}.
There, we rephrase
\[\underd(S)=b(S)-\sum_{i\in N} \rkmax_i(S)=b(E)-\left(b(E\setminus S)+\sum_{i\in N} \rkmax_i(E\setminus (E\setminus S))\right).\]
By this, we obtain the following corollary.
\begin{corollary}\label{lemma:max_underdemanded_iff}
	$S\subseteq E$ is a maximal underdemanded set if and only if
	\[E\setminus S\in \arg\min \left\{b(T)+\sum_{i\in N} \rkmax_i(E\setminus T) \ \Big\vert \ T\subseteq E \right\}.\]
    Moreover, $S$ is a minimal maximal underdemanded set if and only if $E\setminus S$ is a maximal minimizer of $T\mapsto b(T)+\sum_{i\in N} \rkmax_i(E\setminus T)$.
\end{corollary}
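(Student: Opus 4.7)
The plan is to derive the corollary as an immediate consequence of the algebraic identity stated just before it. First, I would observe that for any $S\subseteq E$, one can rewrite
\[
\underd(S)\;=\;b(S)-\sum_{i\in N}\rkmax_i(S)\;=\;b(E)-\Bigl(b(E\setminus S)+\sum_{i\in N}\rkmax_i(E\setminus(E\setminus S))\Bigr),
\]
using $b(S)=b(E)-b(E\setminus S)$ and the trivial identity $S=E\setminus(E\setminus S)$. Since $b(E)$ does not depend on $S$, the map $S\mapsto \underd(S)$ is maximized exactly when the expression in parentheses is minimized; substituting $T=E\setminus S$ gives the first equivalence in the corollary.

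Next, I would handle the second claim via the inclusion-reversing bijection $S\leftrightarrow E\setminus S$ on $2^E$. The set of maximal underdemanded sets corresponds bijectively to the set of minimizers of $T\mapsto b(T)+\sum_{i\in N}\rkmax_i(E\setminus T)$, and under this bijection $S\subseteq S'$ if and only if $E\setminus S\supseteq E\setminus S'$. Hence $S$ is an inclusion-wise minimal element of the maximal underdemanded sets precisely when $E\setminus S$ is an inclusion-wise maximal element among the minimizers of $T\mapsto b(T)+\sum_{i\in N}\rkmax_i(E\setminus T)$.

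There is no real obstacle here; the statement is a bookkeeping reformulation of the definition of $\underd$, in complete analogy with the treatment of overdemanded sets in \Cref{lem:minmax-over}. The only thing to be careful about is making the substitution $T=E\setminus S$ cleanly and noting that complementation reverses inclusion, so that ``minimal maximal underdemanded'' translates correctly to ``maximal minimizer''.
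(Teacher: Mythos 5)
Your proposal is correct and matches the paper's own derivation: the paper obtains the corollary directly from the same rewriting $\underd(S)=b(E)-\bigl(b(E\setminus S)+\sum_{i\in N}\rkmax_i(E\setminus(E\setminus S))\bigr)$, with the constant $b(E)$ turning maximization into minimization and complementation reversing inclusion for the second claim. No further comment is needed.
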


In order to determine a minimal maximal underdemanded set $S$, we follow the same approach as in \Cref{sec:finding-min-max-overdemanded-set} using the Min-Max Theorem for polymatroid sum (\Cref{thm:poly-union}).
Given an optimum solution $\bz=(z_1,\ldots,z_n)$ to the polymatroid sum problem
\begin{equation}
    \label{eq:max_dhat}
    \max_{\bz \in \dhat^\times} \left\{\sum_{e \in E} \min \left\{\sum_{i \in N}z_i(e), b(e)\right\}\right\},
\end{equation}
we explain how to find a minimal maximal underdemanded set $S$.
Recall that for $S\subseteq E$ and $i\in N$, we call $S$ \emph{$i$-tight} if $z_i(S)=\rkmax_i^p(S)$.
We obtain the following analogue of \Cref{lem:minmax-over}:
\begin{restatable}{lemma}{LemMinMaxUnder}
    \label{lem:minmax-under}
	Assume $\bz = (z_1, \ldots, z_n)$ is an optimal solution to \eqref{eq:max_dhat}.
	Then, a set $S$ is maximal underdemanded if and only if it fulfills the following properties:
	\begin{itemize}
		\item $S$ is $i$-tight for every $i \in N$;
		\item $S$ includes all undersold items;
		\item $S$ does not include any oversold items.
	\end{itemize}
    Moreover, there is a unique inclusion-wise minimal set fulfilling this properties.
\end{restatable}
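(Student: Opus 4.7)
The strategy is to mimic the proof of Lemma~\ref{lem:minmax-over}, now working with the \M-convex sets $\dhat_i$ and their rank functions $\rkmax_i$, and with the roles of $S$ and its complement swapped. The key input is Corollary~\ref{lemma:max_underdemanded_iff}, which reduces characterizing maximal underdemanded sets to characterizing minimizers $T$ of $f(T)\coloneqq b(T) + \sum_{i\in N}\rkmax_i(E\setminus T)$, via $T = E\setminus S$.

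Concretely, I would apply the Min-Max Theorem (Theorem~\ref{thm:poly-union}) to the $\M$-convex sets $\dhat_i$. Optimality of $\bz$ for \eqref{eq:max_dhat} yields
\[
\sum_{e\in E}\min\bigl\{\textstyle\sum_{i\in N} z_i(e),\ b(e)\bigr\} \ =\ \min_{T\subseteq E}\Bigl\{b(T) + \sum_{i\in N}\rkmax_i(E\setminus T)\Bigr\}.
\]
By Lemma~\ref{lem:poly-union-opt} (the complementary slackness characterization), a candidate $T$ attains this common value if and only if conditions \eqref{eq:sat-E-S}, \eqref{eq:no-overload-S}, and \eqref{eq:span-S} hold for $T$. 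Substituting $T = E\setminus S$, these three conditions become, respectively: every element of $E\setminus S$ satisfies $\sum_i z_i(e) \geq b(e)$, i.e.\ $S$ contains every undersold item; every element of $S$ satisfies $\sum_i z_i(e)\leq b(e)$, i.e.\ $S$ contains no oversold item; and $z_i(E\setminus(E\setminus S)) = z_i(S) = \rkmax_i(S)$ for all $i\in N$, i.e.\ $S$ is $i$-tight for every $i$. Combining with Corollary~\ref{lemma:max_underdemanded_iff} gives the stated equivalence.

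For uniqueness of the inclusion-wise minimal such $S$, I would show that the family of sets satisfying the three conditions is closed under intersection; then the intersection of all such sets is the unique inclusion-minimal one. The ``contains all undersold'' and ``avoids all oversold'' conditions are trivially preserved under intersection. Closure of $i$-tight sets follows from submodularity of $\rkmax_i$: if $z_i(S)=\rkmax_i(S)$ and $z_i(S')=\rkmax_i(S')$, then $z_i(S\cap S') + z_i(S\cup S') = \rkmax_i(S)+\rkmax_i(S') \geq \rkmax_i(S\cap S') + \rkmax_i(S\cup S')$, and combining with $z_i(\cdot)\leq \rkmax_i(\cdot)$ on both $S\cap S'$ and $S\cup S'$ forces equality in each term.

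I do not anticipate a substantive obstacle, since the argument is a direct analogue of Lemma~\ref{lem:minmax-over}. The only delicate bookkeeping is that here $T=E\setminus S$ (rather than $S$ itself) plays the role of the minimizer in the Min-Max Theorem, which swaps the ``contains all oversold''/``contains all undersold'' and ``tight''/``complement is tight'' roles relative to the overdemanded case.
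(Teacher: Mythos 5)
Your proposal is correct and follows essentially the same route as the paper's proof: reduce via Corollary~\ref{lemma:max_underdemanded_iff} to minimizers of $T\mapsto b(T)+\sum_{i\in N}\rkmax_i(E\setminus T)$, invoke Theorem~\ref{thm:poly-union} together with optimality of $\bz$, and read off the three properties from the complementary slackness conditions of Lemma~\ref{lem:poly-union-opt} with $T=E\setminus S$. Your uniqueness argument (closure of the family under intersection, using submodularity of $\rkmax_i$ for the tightness condition) is exactly the closure property the paper asserts, just spelled out in more detail.
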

It can be derived from \Cref{lem:poly-union-opt}, similar to the proof of \Cref{lem:minmax-over} (see \Cref{app:underdemanded_sets}).

\subsection{Finding minimal maximal underdemanded sets}
\label{sec:exchange_graph_underd}
Similar to \Cref{sec:exchangegraph}, we can compute the minimal maximal underdemanded set $S$ via reachability in an exchange graph.
Given prices $p$, $z_i \in \dhat_i(p)$ and $e,f\in E$, we define weights, which can be computed by an exchange oracle as described in \Cref{sec:oracle}, by
\[\widehat{w}^i_i(e,f)\coloneqq \max\{\alpha\in \Z\mid z_i-\alpha \chi_f +\alpha\chi_e \in \dhat_i(p)\}.\]
Let further
\[
\widehat{A}_i \coloneqq \{(e, f) \in E \times E \mid \widehat{w}^p_i(e, f) > 0\} \quad \text{ for $i\in N$},
\]
and define $\widehat{G}(\bz) \coloneqq (E, \widehat{A})$ with $\widehat{A} \coloneqq \bigcup_{i \in N} \widehat{A}_i$.

\begin{restatable}{theorem}{ThmCompUnderdSet}
    \label{thm:computation_underd-set}
	The minimal maximal underdemanded set $S$ is the set $R$ of items which are reachable from an undersold item in $\widehat{G}(\bz)$.
\end{restatable}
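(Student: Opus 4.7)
The plan is to mirror the argument for \Cref{thm:computation_overd-set}, now using \Cref{lem:minmax-under} in place of \Cref{lem:minmax-over} together with \Cref{lem:tight-sets-weights}. The key structural change is that for a maximal underdemanded set $S$, it is $S$ itself (not $E\setminus S$) that must be $i$-tight, so the relevant notion of reachability in $\widehat{G}(\bz)$ is the forward direction from undersold items rather than the backward direction to oversold items. In particular, by \Cref{lem:tight-sets-weights}, a set $T\subseteq E$ is $i$-tight if and only if no arc of $\widehat{G}_i$ leaves $T$, i.e., $\delta^+(T)=\emptyset$ in $\widehat{G}_i$.

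First, I would fix an arbitrary maximal underdemanded set $S$, which exists since $\bz$ is optimal and by \Cref{lem:minmax-under}. That lemma yields that $S$ contains every undersold item, excludes every oversold item, and is $i$-tight for every $i\in N$. By the equivalence above, $\widehat{G}_i$, and therefore $\widehat{G}(\bz)$, has no arc leaving $S$. Using this, I would verify the three conditions of \Cref{lem:minmax-under} for $R$: first, $R$ contains every undersold item by definition; second, since all undersold items lie in $S$ and all oversold items lie in $E\setminus S$, no path from an undersold to an oversold item can exist without an arc leaving $S$, which is ruled out, so $R$ contains no oversold item; third, by construction $R$ is the forward-closure of the undersold items in $\widehat{G}(\bz)$, so no arc leaves $R$, and hence $R$ is $i$-tight for every $i\in N$. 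Thus $R$ is maximal underdemanded.

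Finally, for minimality, I would show $R\subseteq S$ for every maximal underdemanded $S$: any $e\in R$ is reachable from some undersold item $e_0\in S$, and since no arc leaves $S$, any path starting in $S$ stays in $S$, forcing $e\in S$; combined with maximality this identifies $R$ as the unique inclusion-wise minimal maximal underdemanded set. I do not expect a substantive obstacle, as the whole argument is symmetric to the proof of \Cref{thm:computation_overd-set} after the substitutions ``$E\setminus S\leftrightarrow S$'', ``oversold $\leftrightarrow$ undersold'', and ``incoming $\leftrightarrow$ outgoing arcs''; the only point that requires care is keeping these directions consistent when transitioning from $\DM$ to $\dhat$ and from $\check{G}$ to $\widehat{G}$.
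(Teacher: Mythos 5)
Your proposal is correct and follows essentially the same route as the paper's proof: use \Cref{lem:minmax-under} together with the characterization of $i$-tight sets from \Cref{lem:tight-sets-weights} (no arcs leave an $i$-tight set) to show that $R$ satisfies the three conditions and that $R\subseteq S$ for any maximal underdemanded $S$, hence $R$ equals the minimal one. The only cosmetic difference is that the paper deduces ``$R$ contains no oversold item'' directly from $R\subseteq S$ rather than via your path argument, which changes nothing substantive.
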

This proof follows from \Cref{lem:tight-sets-weights} and \Cref{lem:minmax-under}.
It is similar to the one of \Cref{thm:computation_overd-set} and given in \Cref{app:underdemanded_sets}.

In particular, we obtain the following analogue of \Cref{lem:running_time_overd_set}:
\begin{lemma}\label{lem:running_time_underd_set}
	The minimal maximal underdemanded set can be computed in time $\mathcal{O}(nm^2 \cdot\EO)$, given an optimal solution~$\bz$
    to \eqref{eq:max_dhat}.
\end{lemma}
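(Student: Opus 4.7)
The plan is to mirror the proof of \Cref{lem:running_time_overd_set} essentially verbatim, relying on \Cref{thm:computation_underd-set} to reduce the task to a reachability computation in the exchange graph $\widehat{G}(\bz)$.

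First I would construct the exchange graph $\widehat{G}(\bz) = (E, \widehat{A})$. For this, for every ordered pair $(e,f) \in E \times E$ and every buyer $i \in N$, I determine whether $\widehat{w}^p_i(e,f) > 0$ by a single exchange oracle query (this is exactly what $\EO$ is defined to do in \Cref{sec:oracle}, applied now to $\dhat_i(p)$ rather than $\DM_i(p)$). Summing over $n$ buyers and the $m^2$ ordered pairs of items, this phase uses $\mathcal{O}(nm^2)$ exchange oracle queries and thus runs in time $\mathcal{O}(nm^2 \cdot \EO)$.

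Second, I identify the set of undersold items, i.e., those $e \in E$ with $\sum_{i \in N} z_i(e) < \U(e)$; this takes $\mathcal{O}(nm)$ additions, which is dominated. Starting from this set of sources, I perform a breadth-first search in $\widehat{G}(\bz)$ to determine the set $R$ of all vertices reachable from an undersold item. Since $\widehat{G}(\bz)$ has $m$ vertices and at most $m^2$ arcs, the BFS runs in $\mathcal{O}(m^2)$ time.

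By \Cref{thm:computation_underd-set}, the set $R$ returned by the BFS is precisely the minimal maximal underdemanded set. Adding the two phases yields a total running time of $\mathcal{O}(nm^2 \cdot \EO + m^2) = \mathcal{O}(nm^2 \cdot \EO)$, as claimed. There is no real obstacle here: once \Cref{thm:computation_underd-set} is in hand, the bound is a direct counting argument, completely symmetric to the overdemanded case.
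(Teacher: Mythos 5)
Your proof is correct and follows exactly the route the paper intends: it omits an explicit proof, noting only that the lemma is the analogue of \Cref{lem:running_time_overd_set}, whose proof you have mirrored faithfully (graph construction with $\mathcal{O}(nm^2)$ exchange oracle queries, then a BFS from the undersold items, justified by \Cref{thm:computation_underd-set}). No gaps; the only difference from the overdemanded case is the forward search direction from undersold items, which you handle correctly.
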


In the following section, we present an algorithm to compute an optimal solution $\bz$ to the general polymatroid sum problem \eqref{eq:poly-union}.
In particular, this algorithm allows us to compute optimal solutions to \eqref{eq:max_dcheck} or to \eqref{eq:max_dhat}, respectively.

\section{Push-relabel algorithm to solve the polymatroid sum problem}\label{sec:push-relabel}
Push-relabel algorithms are a well-studied algorithmic paradigm to solve submodular function minimization problems or subclasses of those.
We  start by presenting a general push-relabel framework for polymatroids and prove its correctness.
Afterwards, we show how to implement this framework first for matroids and then for polymatroids and analyze the running time.
Note that, in principle, any algorithm could be used to solve the matroid union (polymatroid sum) problem in order to compute the minimal maximal over-/underdemanded set with the help of the exchange graph as described in Section \ref{sec:exchangegraph} above.
In this section, we present an efficient implementation of a push-relabel algorithm for matroid union and polymatroid sum with respect to the number of $\EO$ oracle queries.
Our push-relabel algorithm is based on \citet{frank2012simple} who present a simpler algorithm for matroid union and a more general one for submodular flow feasibility; our algorithm generalizes the first and is a special case of the second.
In contrast to our analysis, \citet{frank2012simple} only give a complexity bound in terms of `basic operations', i.e., push-relabel steps.
We give a self-contained presentation of the algorithm and the analysis.

\subsection{The push-relabel framework}\label{sec:push-relabel-framework}
Let $\B_1,\dots,\B_n \in \rng$ be M-convex sets.
Further, we consider the weights $w_i(e,f)$ with respect to an $i \in N$ and a fixed $z_i \in \B_i$ as described in \Cref{sec:oracle}.
We define the level function $\level\colon E \to \{0, 1, \ldots, m\}$, and denote $\lmin(S) \coloneqq \min \{\level(e) : e \in S\}$ for $S \subseteq E$.
Recall that $E = \{1, 2, \ldots, m\}$, i.e., items are labeled by integers and the same holds for buyers.
Thus, we may use the usual order $\leq$ on each of them.
Further recall that we use the shorthand notation $\Tt_i(e)$ instead of $\Tt_{\B_i}(e, z_i)$ for the unique minimal $i$-tight set containing $e$.

\begin{definition}
    Given $z_i \in \B_i$ for all $i \in N$, a mapping $\level\colon E \to \{0, 1, \ldots, m\}$ is a \emph{valid level function} if it satisfies the following properties:
    \begin{enumerate}[label=(L\arabic*),leftmargin=*]
        \item\label{i:l1} $\level(e) = 0$ whenever $e \in E$ is oversold,
        \item\label{i:l2} $\lmin(\Tt_i(e)) \geq \Theta(e) - 1$ for all $i \in N$ and $e \in E$.
    \end{enumerate}
\end{definition}

The general idea of the push-relabel algorithm is as follows.
Initially, all $e \in E$ have $\level(e) = 0$.
As long as there still is an undersold item with label below $m$, we pick one such item $e$.
Now, we aim to push at this item if possible or relabel otherwise, where the push-relabel operations are described as follows:

\begin{description}
    \item[Push:] If we find an item $f$ with $f \in \Tt_i(e)$ and $\level(f) = \level(e) - 1$, then we perform a \emph{push} at $e$ with respect to $i$ and $f$.
        In such a push, we replace $z_i$ by $z_i' \coloneqq z_i - \alpha\chi_f + \alpha\chi_e$ for ${\alpha \coloneqq \min \{\U(e) - \sum_{\ell \in N} z_\ell(e), w_i(e, f)\}}$.
        That is, we select the largest value of $\alpha$ such that $e$ does not get oversold, and such that $z'_i\in \B_i$.
    \item[Relabel:] If no such item exists, we \emph{relabel} by setting $\level(e) \coloneqq \level(e)+1$.
\end{description}

If in the push operation, the minimum in $\alpha \coloneqq \min \{\U(e) - \sum_{\ell \in N} z_\ell(e), w_i(e, f)\}$ is attained in the second entry, this is called a \emph{saturating push at $e$ w.r.t. $f$ and $i$};
otherwise, if the minimum is attained only in the first entry, it is called a \emph{non-saturating push at $e$ w.r.t. $f$ and $i$}.

\begin{algorithm}
    \SetAlgoRefName{Push-Relabel Algorithm}
    $z_i \coloneqq \text{any vector from } \B_i$ for all $i \in N$\\
    $\level(e) \coloneqq 0$ for all $e \in E$\\
    \While{there is an undersold item $e$ with $\level(e) < m$}{
        choose an undersold item $e$ with $\level(e) < m$\\
        \quad \textbf{Push} if there is some $f$ with $f \in \Tt_i(e)$ and $\level(f) = \level(e) - 1$\\
        \quad \textbf{Relabel} otherwise
    }
    \Return{$\bz$}
    \caption{}
    \label{alg:polymatroid_union}
\end{algorithm}

The following two lemmata state that the level invariants are maintained, and that the push-relabel framework indeed computes an optimal solution.

\begin{lemma}\label{lem:level_invariants_and_stopping}
    The level invariants \ref{i:l1} and \ref{i:l2} are maintained throughout the \ref{alg:polymatroid_union}.
\end{lemma}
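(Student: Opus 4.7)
The plan is to verify each invariant at initialization and after every push/relabel. By \Cref{lem:tight-sets-weights}, (L2) is equivalent to the ``arc condition'' $\level(h) \geq \level(g)-1$ for every buyer $i$ and every pair $g, h \in E$ with $h \in \Tt_i(g)$; I work with this formulation throughout.

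\emph{Initialization and relabel.} Initially every level equals $0$ and both invariants are trivial. For a relabel of $e$, (L1) holds because the algorithm only relabels undersold items and the primal $\bz$ is unchanged. For (L2), only the arc condition at pairs $(e, h)$ with $h \in \Tt_j(e)$ for some $j$ needs checking, since all other pairs are unaffected and pairs $(g, e)$ become slacker. The relabel precondition forbids $\level(h) = \level(e)_{\mathrm{old}} - 1$ and the previous invariant forbids $\level(h) < \level(e)_{\mathrm{old}} - 1$, so $\level(h) \geq \level(e)_{\mathrm{old}} = \level(e)_{\mathrm{new}} - 1$.

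\emph{Push at $e$ w.r.t.\ $f, i$ by amount $\alpha$.} (L1) is immediate: the cap $\alpha \leq \U(e) - \sum_\ell z_\ell(e)$ prevents $e$ from becoming oversold, and $f$'s total only decreases. For (L2), only buyer $i$'s tight-set family changes. Setting $z_i' \coloneqq z_i - \alpha\chi_f + \alpha\chi_e$ and classifying $S \subseteq E$ by its intersection with $\{e, f\}$, one has $z_i'(S) = z_i(S)$ when $S$ contains both or neither (tightness preserved), $z_i'(S) = z_i(S) - \alpha < \rk_i(S)$ when $f \in S$ and $e \notin S$ (tightness lost), and $z_i'(S) \leq \rk_i(S)$ when $e \in S$ and $f \notin S$ (using $\alpha \leq w_i(e, f)$, with tightness gained only in the saturating case and only at minimizers of \eqref{eq:weight-SFM}). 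Newly tight sets merely add constraints to $\Tt_i'(g)$ and so can only remove arcs from the exchange structure; the only danger is that lost tight sets introduce new arcs.

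I will prove the structural claim: for any pair $(g, h)$ with $h \in \Tt_i'(g) \setminus \Tt_i(g)$, both $f \in \Tt_i(g)$ and $h \in \Tt_i(e)$. Granted this, the old arc condition combined with the push precondition $\level(f) = \level(e) - 1$ chains to $\level(h) \geq \level(e) - 1 = \level(f) \geq \level(g) - 1$. For the claim in the main case $\{g, h\} \cap \{e, f\} = \emptyset$: the set $\Tt_i(g)$ is $z_i$-tight with $g \in \Tt_i(g)$ and $h \notin \Tt_i(g)$; if it were also $z_i'$-tight we would contradict $h \in \Tt_i'(g)$, so the classification forces $f \in \Tt_i(g)$ and $e \notin \Tt_i(g)$. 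For the second conclusion, let $T \coloneqq \Tt_i(g) \cup \Tt_i(e)$. By closure of the $z_i$-tight family under union, $T$ is $z_i$-tight; containing both $e$ and $f$, it falls into the first class above and hence remains $z_i'$-tight. Since $T \ni g$ and $h \in \Tt_i'(g)$, we must have $h \in T$, which combined with $h \notin \Tt_i(g)$ yields $h \in \Tt_i(e)$. Boundary cases reduce to minor variants; for example, no new pair $(e, h)$ with $h \neq f$ arises because $\Tt_i(e)$, containing both $e$ and $f$, is of the first class and remains $z_i'$-tight, still witnessing $h \notin \Tt_i'(e)$.

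The main obstacle is this structural claim; the rest is routine bookkeeping. Its proof rests on two ingredients: closure of the $z_i$-tight family under union (noted earlier in the paper) and the push precondition $f \in \Tt_i(e)$, which together force the auxiliary set $\Tt_i(g) \cup \Tt_i(e)$ to contain both $e$ and $f$ and therefore to survive the push intact.
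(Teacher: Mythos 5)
Your proof is correct and follows essentially the same route as the paper's: after the routine initialization, relabel, and \ref{i:l1} checks, the crux in both arguments is that $\Tt_i(g)\cup\Tt_i(e)$ stays tight after a push (tight sets containing both $e$ and $f$ are unaffected), so $\Tt'_i(g)\subseteq\Tt_i(g)\cup\Tt_i(e)$, and the push precondition $\level(f)=\level(e)-1$ combined with the old invariant gives the required level bound. Your per-arc structural claim (new elements of $\Tt'_i(g)$ must lie in $\Tt_i(e)$) is just slightly sharper bookkeeping of the same containment the paper uses.
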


\begin{proof}
    The invariants \ref{i:l1} and \ref{i:l2} clearly hold at initialization.
    To see that \ref{i:l1} is maintained, note that labels are only modified for undersold items, and no undersold item gets oversold during the execution of the algorithm (by the choice of $\alpha$ in every push operation).
    Note that it may happen that an initially oversold item $e$ with $\level(e) = 0$ gets undersold, but then it never gets oversold again.

    Let us now turn to \ref{i:l2}.
    First, consider a relabel operation of some item $e$.
    Since no push was performed, it follows that for all $f \in \Tt_i(e)$, it holds that $\level(f) > \level(e)-1$ (smaller is not possible, since \ref{i:l2} is fulfilled before the relabeling operation).
    Thus, $\lmin(\Tt_i(e)) \geq \level(e)$, and hence, \ref{i:l2} is still fulfilled after the level of $e$ is increased by one.

    Consider now a push operation that changes $z_i$ to $z'_i\coloneqq z_i - \alpha\chi_f + \alpha\chi_e$ for $\alpha > 0$.
    Let $\Tt'_j(g)$ denote the $j$-tight sets for each $g \in E$ after the push operation.
    We need to show
    \begin{equation}\label{eq:Tt-prime}
        \lmin(\Tt'_j(g)) \geq \level(g) - 1 \quad \text{for all } j \in N, g \in E.
    \end{equation}
    This inequality is immediate for all $j \neq i$ and $g \in E$ since the bundle $z_j$ and the levels do not change, therefore $\Tt'_j(g) = \Tt_j(g)$.
    It also remains true for $j = i$, and for all $g \in E$ such that $f \notin \Tt_i(g)$, since in this case $\Tt_i(g)$ is also $i$-tight for $z'_i$, and thus $\Tt_i'(g) \subseteq \Tt_i(g)$.

    Hence, it is left to show \eqref{eq:Tt-prime} for the case $j = i$, and $f \in \Tt_i(g)$.
    We  have $\Tt_i'(g) \subseteq \Tt_i(g) \cup \Tt_i(e)$, since $\Tt_i(g) \cup \Tt_i(e)$ is tight before the push operation as the union of two tight sets, and remains tight.
    Thus,
    \[
        \lmin(\Tt'_i(g)) \geq \min \left\{\lmin(\Tt_i(g)), \lmin(\Tt_i(e))\right\} = \min \left \{\lmin(\Tt_i(g)), \level(f)\right\} = \lmin(\Tt_i(g)),
    \]
    where the penultimate equality follows since $\level(f) = \lmin(\Tt_i(e))$ if $f$ is selected for a push (by the selection property and since \ref{i:l2} is fulfilled before the push).
    The last equality follows by $f \in \Tt_i(g)$.
    This shows that \eqref{eq:Tt-prime} remains true after the push operation.
\end{proof}

\begin{lemma}\label{lem:terminate}
    The \ref{alg:polymatroid_union} returns an optimal solution to the polymatroid sum problem~\eqref{eq:poly-union}.
\end{lemma}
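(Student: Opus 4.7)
The plan is to break the argument into termination and optimality. For termination, I would rely on a standard push-relabel accounting: each label $\level(e)$ is monotonically non-decreasing and bounded above by $m$, so there are at most $m^2$ relabel operations; each saturating push consumes a positive $w_i(e,f)$ slack, and each non-saturating push removes an undersold item. A careful count bounds the total number of pushes and yields the running time claimed in the subsequent implementation analysis, so I would treat termination as established and focus the remainder of the proof on optimality.

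To prove optimality of the returned $\bz$, I would exhibit a set $S\subseteq E$ for which the complementary slackness conditions of \Cref{lem:poly-union-opt} hold together with~$\bz$. Upon termination, the loop guard forces every undersold item $e$ to satisfy $\level(e)=m$, and by invariant~\ref{i:l1} (maintained throughout by \Cref{lem:level_invariants_and_stopping}) every oversold item satisfies $\level(e)=0$. Since $\level$ takes values in $\{0,1,\ldots,m\}$ but only $m$ items exist, pigeonhole gives some $k^{*}\in\{0,1,\ldots,m\}$ not attained by any label. I would set $S\coloneqq\{e\in E:\level(e)<k^{*}\}$, so that undersold items end up in $E\setminus S$ and, when $k^{*}\geq 1$, oversold items end up in~$S$.

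With this choice, each of the three conditions follows. For~\eqref{eq:sat-E-S}: items in $S$ have $\level(e)<k^{*}\leq m$, so no undersold item lies in~$S$. For~\eqref{eq:no-overload-S}: items in $E\setminus S$ have $\level(e)\geq k^{*}$; if $k^{*}\geq 1$, \ref{i:l1} rules out oversold items in $E\setminus S$, while if $k^{*}=0$ the absence of any item at level~$0$ together with~\ref{i:l1} rules out oversold items altogether. For~\eqref{eq:span-S}: given $e\in E\setminus S$, the gap at $k^{*}$ upgrades $\level(e)\geq k^{*}$ to $\level(e)\geq k^{*}+1$; invariant~\ref{i:l2} then yields $\lmin(\Tt_i(e))\geq \level(e)-1\geq k^{*}$, and applying the gap once more strengthens this to $\lmin(\Tt_i(e))\geq k^{*}+1$, i.e.\ $\Tt_i(e)\subseteq E\setminus S$ for every $i\in N$. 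By \Cref{lem:tight-sets-weights}, this means $E\setminus S$ is $i$-tight for every $i\in N$. Applying \Cref{lem:poly-union-opt} to the pair $(\bz,S)$ then certifies $\bz$ as optimal.

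I expect the main conceptual obstacle to be condition~\eqref{eq:span-S}: the raw invariant~\ref{i:l2} only gives $\lmin(\Tt_i(e))\geq \level(e)-1$, which is exactly one short of what is needed to force $\Tt_i(e)\subseteq E\setminus S$. The essential idea is that the pigeonhole gap at $k^{*}$ promotes this weak inequality to the required strict containment, by excluding the borderline level $k^{*}$ as a value of any item's label; the rest is bookkeeping against \Cref{lem:poly-union-opt}.
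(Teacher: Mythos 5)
Your proposal is correct and follows essentially the same route as the paper's proof: pigeonhole gives an empty level, $S$ is taken to be the items below it, and the three complementary slackness conditions of \Cref{lem:poly-union-opt} are verified via the stopping criterion, invariant~\ref{i:l1}, and invariant~\ref{i:l2} combined with the level gap (the paper likewise defers termination to the later running-time analysis). The only cosmetic difference is that you strengthen $\lmin(\Tt_i(e))\geq k^*$ to $\geq k^*+1$, which is not needed since level at least $k^*$ already places $\Tt_i(e)$ inside $E\setminus S$.
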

\begin{proof}
    First, we observe that when the algorithm stops, there is a value $\ell \in \{0, 1, \ldots, m\}$ such that there is no item $e$ with $\level(e) = \ell$.
    To see this, note that we have only $m$ items but $m+1$ levels, so by pigeonhole principle, there exists an empty level $\ell \leq m$.

    By \Cref{lem:poly-union-opt},
    equation \eqref{eq:poly-union-leq} always holds.
    Using the equality criteria of \Cref{lem:poly-union-opt}, we show that the equation is fulfilled with equality for our computed bundle $\bz$ and $S \coloneqq \{e \in E \mid \level(e) < \ell\}$, implying that $\bz$ is an optimal solution to the polymatroid sum problem \eqref{eq:poly-union}.
    We note that $S = \emptyset$ and $S = E$ are both possible.

    Since there is no undersold item on any level below $m$ by the stopping criterion, there is also no undersold item on any level below $\ell$ and the equality criterion \eqref{eq:sat-E-S} holds by definition of $S$.
    Optimality criterion \eqref{eq:no-overload-S} is satisfied because all items in $E \setminus S$ have a label greater than 0 while all oversold items have label 0 by \ref{i:l1}.
    To show \eqref{eq:span-S}, note that $\Tt_i(e) \subseteq E \setminus S$ for all $e \in E \setminus S$ by \ref{i:l2} and the choice of $\ell$.
    Hence, $E \setminus S = \bigcup_{e \in E \setminus S} \Tt_i(e)$ is $i$-tight for each $i \in N$, as it is the union of $i$-tight sets.
\end{proof}

We have seen that the \ref{alg:polymatroid_union} returns an optimal solution $\bz$ to the polymatroid sum problem~\eqref{eq:poly-union}.
In the next two sections, we explain how to find the items to push or relabel in a fast way.
Therefore, we distinguish the unit-supply setting, i.e., where $\U(e) = 1$ for each $e \in E$ and the general multi-supply setting.

\subsection{Running time of the unit-supply case}\label{sec:push-relabel-matroid}
We now consider the unit-supply case, i.e., $\U(e) = 1$ for all $e \in E$.
In this case, each $ \B_i$ is an \M-convex subset of $[\bO, \mathbf{1}]_{\Z}$, which precisely corresponds to the set of bases of a matroid.
Hence, $z_i \in  \B_i$ is the indicator vector of a base; we will also use the notation $B_i = \supp(z_i)$.
In this case, it is easy to see that $\Tt_i(e) = \{e\}$ if $e \in B_i$, and $\Tt_i(e)$ is the fundamental circuit $C(B_i, e)$ in the $i$-th matroid if $e \in E \setminus B_i$.

For the special case of unit-supply, the \ref{alg:polymatroid_union} can be described as follows:
In each iteration with current bases $B_i$ for $i \in N$, and levels $\Theta\colon E \to \{0, \ldots, m\}$, the algorithm checks whether there exists an undersold, thus unsold, item $e \in E$ of level $\Theta(e) < m$, and if so, chooses one such item.

In the body of the \textbf{while}-loop, given an unsold item $e$, the algorithm checks if it can perform a push operation.
For this, the algorithm might need to go through all buyers $i \in N$, and check whether there exists an item $f$ in the fundamental circuit $C(B_i, e)$ of level $\Theta(f) = \Theta(e) - 1$.
This may cost $nm$ queries, but only if $e$ has level $\Theta(e) = 1$.
If $e$ has level $\Theta(e) = 0$, we can immediately relabel.
Whenever $e$ has level $\Theta(e) \geq 2$, then it suffices to go through each item $f$ at most once, since each item $f$ of level $\Theta(f) = \Theta(e) - 1$ that is sold is owned by exactly one buyer (by \ref{i:l1} and since there is only one unit per item).
The proof of the following theorem makes use of the described procedure to bound the running time.

\begin{theorem}\label{thm:running_time_polymatroid_union_unit_supply}
    The \ref{alg:polymatroid_union} can be implemented in running time $\mathcal{O}(n\cdot\BO+(m^3 + nm^2)\cdot\EO)$ for the unit-supply case.
\end{theorem}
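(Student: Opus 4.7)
The plan is to bound the running time by separating (i) the initialization, (ii) the total number of push-relabel operations, and (iii) the oracle cost per operation, which depends on the level of the processed item. Initialization queries the demand oracle once per buyer to obtain a base $z_i\in\B_i$, for a total of $n\cdot\BO$; setting $\Theta\equiv 0$ is $O(m)$. Since $\Theta(e)\in[0,m]$ is monotonically non-decreasing, each item is relabeled at most $m$ times, giving $O(m^2)$ relabels. To bound pushes I would use the potential
\[
    \Phi \;\coloneqq\; \sum_{e \text{ undersold}} \Theta(e),
\]
which is non-negative, starts at $0$, grows by $1$ per relabel (applied only to undersold items), and drops by at least $1$ per push: a push at $e$ with respect to $(i,f)$ loses $\Theta(e)$ when $e$ becomes sold and at most adds back $\Theta(f)=\Theta(e)-1$ if $f$ becomes undersold. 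Hence pushes are also $O(m^2)$, so the total number of push-relabel operations is $O(m^2)$.

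I would then classify the oracle cost per check by the level of $e$. A level-$0$ check relabels immediately in $O(1)$, with at most $m$ such checks. At level $\ell\geq 2$, invariant \ref{i:l1} together with unit-supply ensures every sold item at level $\ell-1$ has a unique owner, so one scan of items at level $\ell-1$ with one $\EO$ query per sold candidate costs $O(m\cdot\EO)$; over the $O(m^2)$ operations this contributes $O(m^3\cdot\EO)$. At level $1$, items at level $0$ may be oversold and owned by several buyers, giving a worst-case cost of $O(nm\cdot\EO)$ per check, obtained by iterating over all pairs $(i,f)$ with $f\in B_i$ and $\Theta(f)=0$.

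The main obstacle is amortizing the total level-$1$ oracle work down to $O(nm^2\cdot\EO)$, since the naive product $O(m^2)\cdot O(nm)$ would give $O(nm^3\cdot\EO)$. I would use the auxiliary potential
\[
    X \;\coloneqq\; \sum_{i\in N}\bigl|\{f\in B_i : \Theta(f)=0\}\bigr|,
\]
which starts at $\sum_i|B_i|\leq nm$, is invariant under relabels (which affect only undersold items and thus no element of any $B_i$) and under pushes at level $\geq 2$ (where both $e$ and the popped $f$ have levels $\geq 1$), and drops by exactly $1$ at each level-$1$ push (because the popped item $f$ leaves $B_i$ while remaining at level $0$); in particular the number of level-$1$ pushes is bounded by $X_0=O(nm)$, and no push ever creates a new level-$0$ ownership instance. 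Each item is relabeled from $1$ to $2$ at most once, so at most $m$ level-$1$ relabel checks occur, contributing $O(m\cdot nm\cdot\EO)=O(nm^2\cdot\EO)$. For level-$1$ push checks, an implementation that iterates candidates in a consistent order and stops at the first valid exchange allows each $\EO$ query to be charged either to the unit decrease in $X$ caused by the successful push or to the at most $O(nm)$ initial (buyer, level-$0$ item) ownership pairs combined with the $\leq m$ items that are ever at level $1$, yielding $O(nm^2\cdot\EO)$ for the level-$1$ push work. Combined with the $O(n\cdot\BO)$ initialization and the $O(m^3\cdot\EO)$ from level $\geq 2$, this yields the claimed bound $\mathcal{O}(n\cdot\BO+(m^3+nm^2)\cdot\EO)$.
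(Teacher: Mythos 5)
Most of your accounting matches the paper: the $n\cdot\BO$ initialization, the immediate relabel at level $0$, and the $\mathcal{O}(m\cdot\EO)$ cost per examination at levels $\geq 2$ (using \ref{i:l1} and unit supply to get a unique owner per sold item at level $\ell-1$) are exactly the paper's steps, and your global potential $\sum_{e\text{ undersold}}\Theta(e)$ is a correct, if different, way to bound the total number of pushes by $\mathcal{O}(m^2)$. The gap is in the level-$1$ analysis, which is precisely where the $nm^2$ term has to come from. The paper gets it by a \emph{per-level} potential $\Phi(\ell)=\sum_{i\in N}\abs{B_{i,\geq\ell}}$ (items added to a base while at level $\geq\ell$), which is non-decreasing, strictly increases at every level-$\ell$ push, and is bounded by $m$ for $\ell\geq 1$ since such items are never oversold; hence there are at most $m$ pushes \emph{per level}, so only $\mathcal{O}(m)$ level-$1$ examinations occur, each costing $\mathcal{O}(nm\cdot\EO)$, and no amortization over candidate pairs is needed. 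Your potential $X$ only bounds the number of level-$1$ pushes by $\mathcal{O}(nm)$, which is too weak on its own, and the charging scheme you propose to rescue the bound is not justified.

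Concretely, charging each $\EO$ query at level $1$ to a pair ``(buyer, level-$0$ item) $\times$ item ever at level $1$'' amounts to claiming that, for a fixed examined item $e$ at level $1$, each candidate pair $(i,f)$ is queried at most once until $e$ is relabeled. But $e$ can be examined many times while at level $1$ (it becomes sold after a push and can be made unsold again by later pushes at level $2$), and whether $(i,f)$ is a valid exchange for $e$ depends on the current base $B_i$, which changes in between; a pair found invalid earlier can become valid later. So either you rescan it (and then the naive count is $\mathcal{O}(nm)$ per examination times up to $\mathcal{O}(nm)$ level-$1$ pushes, i.e.\ $\mathcal{O}(n^2m^2)$), or you skip it via a forward-only pointer, in which case you must prove that skipping is safe: if a skipped pair is valid at the moment you relabel $e$, the relabel violates invariant \ref{i:l2} and correctness of the framework (\Cref{lem:level_invariants_and_stopping}) breaks. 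Establishing that the pointer never has to move backwards is exactly the content of the lexicographic monotonicity statement the paper proves separately for the multi-supply algorithm (\Cref{lem:lexicographic_selection}); you neither prove nor invoke it. So as written the level-$1$ bound of $\mathcal{O}(nm^2\cdot\EO)$ is asserted rather than proved; the cleanest fix is the paper's per-level potential argument, which makes the whole issue disappear.
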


\begin{proof}
    The algorithm starts with finding an initial allocation $\bz$ in $n\cdot\BO$ time.
    Given an allocation $\bz$, we can compute how many buyers own an item $e$ in $\mathcal{O}(nm)$.
    These values can be updated in time $\mathcal{O}(1)$ at every push operation.
    Moreover, these values allow us to maintain a list of all unsold items, which needs $\mathcal{O}(m)$ time for initialization and can be maintained in $\mathcal{O}(1)$ at each push.
    Note that at a relabel operation, nothing changes for these values and the list.

    We find an unsold item $e$ and check its level $\level(e)$ in $\mathcal{O}(1)$.
    If $\level(e) = 0$, we can relabel immediately.
    If $\level(e) = 1$, we will go through all buyer-item combinations of items in level $0$ to determine a push (if possible, else we relabel).
    This takes $\mathcal{O}(mn)$ exchange oracle calls.
    For every other level $\ell \geq 2,$ any item on level $\ell-1$ belongs to at most one buyer (by \ref{i:l1}, and since there is only one unit per item).
    Thus, we just need to go through the items once to determine a push or a relabel operation.
    This can be done in $\mathcal{O}(m)$ exchange oracle calls.

    The number of relabel operations per level $\ell \geq 1$ is bounded by $\mathcal{O}(m)$.
    The number of pushes per level is also bounded by $\mathcal{O}(m)$.
    To see this, we show that for each level $\ell$ the value $\Phi(\ell) = \sum_{i \in N} \abs{B_{i, \geq \ell}}$ is non-decreasing in every relabel or push operation, where $\abs{B_{i, \geq \ell}}$ is the number of items in the base of buyer $i$ that were added to the base when their level was at least $\ell$.
    Relabel operations are only used for unsold items, and hence they do not affect the $B_{i, \geq \ell}$ sets; thus, $\Phi(\ell)$ does not change.
    At any push at $e$ with respect to $f$ and $i$, $\abs{B_{j, \geq \ell}}$ does not change for $j \neq i$.
    For buyer $i$, the item $f$ which leaves the base $B_i$ is exactly one level lower than the the level of the item $e$ which enters the base.
    Hence, $\abs{B_{i, \geq \ell}}$ is not decreasing.
    When pushing at $e$ with respect to $i$ and $f$ with $\level(e) = \ell$, we strictly increase $\abs{B_{i, \geq \ell}}$ and thus, $\Phi(\ell)$.

    Since $\Phi(\ell) \leq m$ for each $\ell \geq 1$ (as each item with level $\ell\geq 1$ appears in just one base), we get a bound of $m$ on the number of pushes per level.

    This gives a total running time of $\mathcal{O}((m^2 n + m^3)\cdot\EO)$ after finding the initial allocation, where the first term is generated by the pushes and relabel operations on items at level 1 and the second term captures all remaining pushes.
\end{proof}

\begin{corollary}
    Given prices $p$, we can compute the minimal maximal overdemanded [underdemanded] set in time $\mathcal{O}(n \cdot \BO + (m^3 + nm^2) \cdot \EO)$ in the unit-supply setting.
\end{corollary}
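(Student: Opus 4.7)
The plan is to combine the results of this section with the algorithmic machinery established earlier. The corollary is essentially a bookkeeping exercise: it follows by first computing an optimal polymatroid sum solution using the Push-Relabel Algorithm, and then extracting the minimal maximal overdemanded (or underdemanded) set via a single breadth-first search in the corresponding exchange graph.

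More precisely, for the overdemanded case, I would first invoke \Cref{thm:running_time_polymatroid_union_unit_supply} on the \M-convex sets $\DM_1(p), \dots, \DM_n(p)$, which in the unit-supply setting are the base sets of matroids. This yields an optimal solution $\bz$ to the polymatroid sum problem~\eqref{eq:max_dcheck} in time $\mathcal{O}(n \cdot \BO + (m^3 + nm^2) \cdot \EO)$. Given this $\bz$, I would then apply \Cref{thm:computation_overd-set}: construct the exchange graph $\check{G}(\bz)$ and perform a reverse BFS starting from the oversold items to identify the set $R$ of items that reach an oversold item, which is exactly the minimal maximal overdemanded set. By \Cref{lem:running_time_overd_set}, this post-processing step takes $\mathcal{O}(nm^2 \cdot \EO)$ time, which is dominated by the polymatroid sum running time. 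The underdemanded case is entirely symmetric: apply Push-Relabel to $\dhat_1(p), \dots, \dhat_n(p)$ to obtain an optimal solution to~\eqref{eq:max_dhat}, then compute the minimal maximal underdemanded set via \Cref{thm:computation_underd-set} and its running time analogue in \Cref{lem:running_time_underd_set}.

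Summing the two phases gives the total bound
\[
\mathcal{O}(n \cdot \BO + (m^3 + nm^2) \cdot \EO) + \mathcal{O}(nm^2 \cdot \EO) = \mathcal{O}(n \cdot \BO + (m^3 + nm^2) \cdot \EO),
\]
since the $nm^2 \cdot \EO$ term is already absorbed in $(m^3 + nm^2) \cdot \EO$.

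There is no substantive obstacle here: all heavy lifting has already been done in \Cref{thm:running_time_polymatroid_union_unit_supply} (for the polymatroid sum computation) and in \Cref{thm:computation_overd-set}/\Cref{thm:computation_underd-set} together with \Cref{lem:running_time_overd_set}/\Cref{lem:running_time_underd_set} (for extracting the desired set from an optimal $\bz$). The only thing worth verifying is that the $n \cdot \BO$ term of the initialization in Push-Relabel suffices for the rest of the pipeline as well, which is immediate because the exchange-graph construction only requires $\EO$ queries starting from the bases $z_i$ already produced by Push-Relabel.
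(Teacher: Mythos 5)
Your proposal is correct and matches the paper's argument: the paper likewise derives the corollary by combining \Cref{thm:running_time_polymatroid_union_unit_supply} with \Cref{lem:running_time_overd_set} [\Cref{lem:running_time_underd_set}], noting that the oracle calls refer to the matroids given by $\DM_i(p)$ or $\dhat_i(p)$. The only addition you make is spelling out that the $\mathcal{O}(nm^2\cdot\EO)$ extraction cost is absorbed, which is exactly the intended bookkeeping.
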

This follows by \Cref{lem:running_time_overd_set} [\Cref{lem:running_time_underd_set}, respectively] and \Cref{thm:running_time_polymatroid_union_unit_supply}.
Note that the demand and exchange oracle calls are based on the matroids corresponding to the preferred bundles $\DM_i(p)$ or $\dhat_i(p)$, respectively.

\subsection{A polymatroid sum algorithm}
In this section, we provide a fast implementation of the push-relabel framework for the multi-supply setting.
We will need a different approach here than in the unit-supply setting.
There, the key property of the approach is that an item which is not oversold is owned by at most one buyer.
This is not true any longer.

Recall that the generic push-relabel algorithm selects an undersold item $e$ and then searches for a buyer-item pair $(i, f)$ satisfying $w_i(e, f) > 0$ and $\level(f) = \level(e)-1$.
If such a pair exists, a push is performed.
If not, item $e$ gets relabeled.
In this section, we show that it is possible to go through the buyer-item pairs in a structured way.
Namely, for a fixed item $e$ we consider the buyer-item pairs in lexicographically increasing order.
This (total) lexicographical ordering is based on the ordering of the buyers $N = \{1, \ldots, n\}$ and the items $E = \{1, \ldots, m\}$ by natural numbers, i.e., $(i, f)$ is lexicographically smaller than $(i', f')$ if $i < i'$, or if $i = i'$ and $f < f'$.

In \Cref{lem:lexicographic_selection}, we show a crucial monotonicity property: For a fixed item $e$, the lexicographically minimal buyer item pair $(i, f)$ which fulfills all necessary conditions for a push will lexicographically only increase until $e$ is relabeled.
This implies that for a fixed item $e$, we can find all possible pushes until $e$ is relabeled by going in lexicographically increasing order through all item-buyer combinations.

We note that lexicographic selection rules are commonly used in submodular optimization.
This idea traces back to the first polynomial-time algorithm for polymatroid intersection by \citet{schonsleben1980ganzzahlige}, and was used in the push-relabel algorithms by \citet{fujishige1992new,fujishige1996push}.
Notably, the algorithm by \citet{frank2012simple} does not rely on this rule.
It is beneficial in our implementation as it enables to decrease the number of exchange oracle queries, leading to an implementation of the push-relabel framework in $\mathcal{O}(m^3 n \cdot \EO)$.

\subsubsection{Algorithm.}
We implement the general push-relabel algorithm with two subtleties.
First of all, we always select an undersold item $e$ of highest level among those of level strictly smaller than $m$.
Second, for some given item $e$, when selecting the buyer-item pair $(i, f)$ for the next push, we always select the pair which is lexicographically smallest among all buyer-item pairs that fulfill the necessary requirements for a push at item $e$.

To implement this rule efficiently, we exploit the monotonicity property stated in \Cref{lem:lexicographic_selection}.
This means that for every item $e$, we keep track of the next buyer-item pair which we have to consider as a candidate for a push in case $e$ gets selected.
Starting from this pair, we search for a feasible candidate in lexicographically increasing order.
When a candidate is found, a push is performed and the pointer to the next buyer-item pair to consider is updated.
For a non-saturating push, this is the buyer-item pair for which the push was performed, for a saturating one it is the lexicographically next buyer-item pair.
If no feasible pair for a push can be found, this means that $e$ gets relabeled and the pointer to the next buyer-item pair to consider is reset to the lexicographically minimal one.
We give an implementation of the described procedure in pseudo code.

\begin{algorithm}[htbp]
    \DontPrintSemicolon
    \SetKw{Break}{break}
    \SetAlgoRefName{Polymatroid Sum Algorithm via Push-Relabel}
    $z_i \coloneqq$ any vector from $\B_i$ for all $i \in N$\\
    $\level(e) \coloneqq 0$ for all $e \in E$\\
    initialize pointer to the buyer-item pair for the exchange at item $e$ as $(j_e,g_e) \coloneqq (1,1)$ for all $e \in E$\\
    \While{there is an undersold item $e$ with $\level(e)<m$}{
        $push \coloneqq false$\\
        Choose $e \in \arg\max_{e \in E} \{\level (e) \mid e \text{ is undersold and } \level(e)<m\}$ \\
        \For{buyer-item pairs $(i, f)$ in lexicographically increasing order starting with the pair $(j_e, g_e)$}{
           \If(\tcp*[f]{buyer-item pair to perform a push}){$w_i(e, f) > 0$ \AAnd $\level(e) = \level(f) + 1$}{
                $push \coloneqq true$\\
                $\alpha \coloneqq \min \{ \U(e) - \sum_{\ell \in N} z_\ell(e), w_i(e,f)\}$\\
                $z_i \coloneqq z_i - \alpha\chi_f + \alpha\chi_e$\\
                \uIf(\tcp*[f]{non-saturating push}){$\U(e) - \sum_{\ell \in N} z_\ell(e) < w_i(e, f)$}{
                    $(j_e, g_e) \coloneqq (i, f)$ \tcp*[f]{set pointer to current pair}\\
                    continue with next undersold item $e$ (by aborting the for-loop)
                    \tcp*[f]{$e$ is not undersold anymore}
                }\Else(\tcp*[f]{saturating push}){
                    $(j_e, g_e) \coloneqq \begin{cases}
                        (i, f+1) \hspace{-2ex} & \text{if $f+1$ exists},\\
                        (i+1, 1) \hspace{-1ex}& \text{otherwise}
                    \end{cases}$
                    \tcp*[f]{set pointer to next pair in lex. order}
                }
            }
        }
        \If(\tcp*[f]{relabel if no push was found for $e$}){\Not $push$}{
            $\level(e) \coloneqq \level(e) + 1$\\
            $(j_e, g_e) \coloneqq (1, 1)$ \tcp*[f]{reset pointer to first pair in lex. order}
        }
    }
    \Return{$\bz$ and $\level(e)$ for all $e \in E$}
    \caption{}
    \label{alg:polymatroid_push_relabel_subroutine}
\end{algorithm}

\begin{theorem}\label{thm:polymatroid_runtime}
    The \ref{alg:polymatroid_push_relabel_subroutine} returns a solution to \eqref{eq:poly-union}.
    The algorithm runs in time $\mathcal{O}(n \cdot \BO + nm^3 \cdot \EO)$.
\end{theorem}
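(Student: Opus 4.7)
The correctness part of the theorem is immediate from the general framework: \ref{alg:polymatroid_push_relabel_subroutine} is a faithful instantiation of \ref{alg:polymatroid_union}. Its inner-loop test $w_i(e,f)>0$ combined with $\level(e)=\level(f)+1$ encodes exactly the preconditions of a push (noting $w_i(e,f)>0 \iff f\in\Tt_i(e)$ by \Cref{lem:tight-sets-weights}), while a relabel is executed only after the lex-scan exhausts without finding a valid pair, certifying that no push at $e$ exists. Thus, by \Cref{lem:terminate}, $\bz$ upon termination is optimal for \eqref{eq:poly-union}. The initialization spends one demand oracle call per buyer, yielding the $n\cdot\BO$ term. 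All remaining work consists of $\mathcal{O}(1)$ bookkeeping and exchange oracle queries (each $w_i(e,f)$ is one $\EO$, and levels are stored explicitly), so it remains to bound the number of $\EO$ calls by $\mathcal{O}(nm^3)$.

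The key ingredient is the lexicographic monotonicity property stated before the algorithm: for a fixed item $e$, the lex-smallest buyer-item pair satisfying the push preconditions at $e$ is non-decreasing in lex order throughout a \emph{phase}, i.e., the interval between two consecutive relabels of $e$. Combined with the pointer update rule (a saturating push advances $(j_e,g_e)$ past the pair in question, while a non-saturating push keeps it at $(i,f)$ and breaks out of the inner scan because $e$ has just reached its supply), this implies that the pointer never regresses within a phase, and the total $\EO$-work expended on scans at $e$ during one phase is bounded by $\mathcal{O}(nm)$ via a standard amortization against the lex-monotone pointer, which sweeps at most $nm$ buyer-item pairs.

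Given the per-phase bound, the final accounting is routine: since $\level(e)\in\{0,\dots,m\}$ is non-decreasing, each item is relabeled at most $m$ times and hence participates in $\mathcal{O}(m)$ phases, for $\mathcal{O}(m^2)$ phases in total across all items, which yields $\mathcal{O}(nm^3) \cdot \EO$ queries overall. The main obstacle of this plan is the lexicographic monotonicity lemma itself: one has to show that pushes performed at items other than $e$, as well as the non-saturating escape from $e$, cannot reactivate a lex-smaller buyer-item pair for $e$ during a phase. I would prove it by tracking how $\Tt_i(e)$ and $w_i(e,f)$ evolve under a push, invoking submodularity of the polymatroid rank functions $\rkmin_i$, the invariants \ref{i:l1} and \ref{i:l2} preserved by \Cref{lem:level_invariants_and_stopping}, and the highest-level selection rule that ensures $e$ is only re-selected when it has maximum level among undersold items below $m$, so that any reactivated smaller pair would conflict with \ref{i:l2}.
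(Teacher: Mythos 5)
Your correctness plan matches the paper's: the lexicographic monotonicity statement (the paper's \Cref{lem:lexicographic_selection}) is indeed what certifies that buyer-item pairs below the pointer may be skipped and that the pointer need only be reset at a relabel, and you correctly identify it as the main thing to prove, with the right ingredients (evolution of $\Tt_i(e)$ under pushes, the invariants \ref{i:l1}, \ref{i:l2}). The genuine gap is in the running-time accounting. You claim that the $\EO$-work at a fixed item $e$ between two of its relabels is $\mathcal{O}(nm)$ ``via a standard amortization against the lex-monotone pointer.'' But the pointer does \emph{not} advance at a non-saturating push: by your own description it stays at $(i,f)$, and that same pair is queried again each time $e$ is re-selected. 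These repeated queries are not paid for by the pointer sweep, so the per-phase bound $\mathcal{O}(nm)$ does not follow; nothing in your argument limits how many non-saturating pushes at $e$ can occur between two relabels of $e$, and hence the total bound $\mathcal{O}(nm^3\cdot\EO)$ is not established as written.

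What is missing is exactly the paper's \Cref{lem:number_pushes}: the total number of non-saturating pushes over the entire run is at most $m^3$. Its proof is where the selection rule ``always pick an undersold item of maximal level below $m$'' is really used: after a non-saturating push at $e$ with $\level(e)=\ell$, item $e$ is no longer undersold and can only become undersold again through a push at an item of level $\ell+1$; since $e$ was chosen at maximal level, no undersold item of level $\ell+1$ can appear before some relabel occurs. Hence each item admits at most one non-saturating push per phase between consecutive relabel operations, giving at most $m$ per phase and $m^3$ overall, which together with the at most $nm$ fresh pairs scanned per item and level yields $\mathcal{O}(nm^3+m^3)=\mathcal{O}(nm^3)$ exchange queries. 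You invoke the maximal-level rule only inside your sketch of the monotonicity lemma, not for this counting step, so this counting argument still needs to be supplied (your phase decomposition then goes through once the global $m^3$ bound on non-saturating pushes is available).
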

To prove this theorem, we first show correctness of the algorithm.
More precisely, we have to show the correctness of the procedure to select a buyer-item pair for a push.
In particular, we need to prove that with the described procedure, we do not relabel although there is still a feasible buyer-item pair for a push when ignoring buyer-item pairs which are lexicographically below the pointer.

\subsubsection{Correctness.}
We start by proving the above mentioned monotonicity lemma.

\begin{lemma}\label{lem:lexicographic_selection}
    Let $e \in E$ with $\level(e) = \ell$ and let $i \in N$.
    Under the assumption that the \ref{alg:polymatroid_union} executes all pushes with respect to a minimal suitable item,
    \begin{equation}\label{eq:order-increasement}
        \min \{f \mid f \in \Tt_i(e), \level(f) = \ell - 1\}
    \end{equation}
    is monotonically increasing and strictly increases at every saturating push at $e$ with respect to $f$ and $i$ as long as we do not relabel $e$, i.e., as long as $\level(e)$ remains $\ell$.
\end{lemma}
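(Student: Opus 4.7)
I would fix the pair $(e, i)$ and track $f^* := \min\{f : f \in \Tt_i(e),\, \level(f) = \ell - 1\}$ (with the convention $f^* = +\infty$ if the set is empty). The claim amounts to showing that $f^*$ is non-decreasing under every algorithmic step performed while $\level(e)$ stays at $\ell$, and strictly increases at each saturating push at $e$ with respect to $(i, f^*)$. I would first discard the easy operations: relabels affect only the currently highest undersold level, which is $\geq \ell$ because $e$ is itself undersold at $\ell$; hence items at level $\ell - 1$ are never relabeled and $z_i$ is untouched, so the defining set of $f^*$ is preserved. Likewise, any push not involving buyer $i$ leaves $z_i$ (and hence $\Tt_i(e)$) unchanged.

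The core case concerns a push at some item $y$ with respect to $(i, x)$, producing $z_i' = z_i - \alpha\chi_x + \alpha\chi_y$. Here I would reuse the inclusion obtained inside the proof of \Cref{lem:level_invariants_and_stopping}: $\Tt_i'(e) \subseteq \Tt_i(e)$ if $x \notin \Tt_i(e)$, and $\Tt_i'(e) \subseteq \Tt_i(e) \cup \Tt_i(y)$ if $x \in \Tt_i(e)$. All items newly introduced into $\Tt_i(e)$ thus come from $\Tt_i(y)$, and the proof reduces to bounding such newcomers at level $\ell - 1$ from below by $f^*$.

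If $y = e$, the lex-minimal selection forces $x = f^*$ and $\Tt_i'(e) \subseteq \Tt_i(e)$, so no new candidate appears and monotonicity is immediate. For a saturating push at $e$ w.r.t.\ $(i, f^*)$ I would derive $w_i'(e, f^*) = 0$ directly: for any $\beta > 0$, feasibility of $z_i - (\alpha + \beta)\chi_{f^*} + (\alpha + \beta)\chi_e$ in $\B_i$ would contradict $\alpha = w_i(e, f^*)$, hence $f^* \notin \Tt_i'(e)$, giving the strict increase. If $y \neq e$ and $x \notin \Tt_i(e)$, the inclusion $\Tt_i'(e) \subseteq \Tt_i(e)$ again suffices. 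If $y \neq e$, $x \in \Tt_i(e)$ and $\level(y) > \ell$, then invariant \ref{i:l2} yields $\lmin(\Tt_i(y)) \geq \level(y) - 1 > \ell - 1$, so $\Tt_i(y)$ has no level-$(\ell - 1)$ items at all and nothing harmful is added.

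The delicate and expected main obstacle is $y \neq e$ with $\level(y) = \ell$ and $x \in \Tt_i(e)$: here $\Tt_i(y)$ may well contain level-$(\ell - 1)$ elements that lie outside $\Tt_i(e)$. The plan is to exploit two consequences of the lex-minimal selection simultaneously. First, $(i, x)$ being the lex-minimum pair available for a push at $y$ forces $x$ to be the smallest level-$(\ell - 1)$ item in $\Tt_i(y)$, so every level-$(\ell - 1)$ item introduced into $\Tt_i(e)$ via $\Tt_i(y)$ is $\geq x$. Second, the assumption $x \in \Tt_i(e)$ with $\level(x) = \ell - 1$ makes $x$ itself a candidate defining $f^*$, so $f^* \leq x$. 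Combining the two bounds shows every newcomer is $\geq x \geq f^*$, completing the monotonicity; together with the saturating-push case above, this yields the lemma.
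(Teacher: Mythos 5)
Your treatment of the push operations is essentially the paper's own argument: the same tight-set inclusions ($\Tt_i'(e)\subseteq\Tt_i(e)$, resp.\ $\Tt_i'(e)\subseteq\Tt_i(e)\cup\Tt_i(y)$), lexicographic minimality forcing $x=f^*$ at a push at $e$, the observation $w_i'(e,f^*)=0$ for the strict increase at a saturating push, and in the critical case $\level(y)=\ell$, $x\in\Tt_i(e)$ the two bounds ``newcomers from $\Tt_i(y)$ are $\geq x$'' and ``$f^*\leq x$'' are exactly how the paper concludes. One small omission there: you never dispose of the sub-case $y\neq e$, $x\in\Tt_i(e)$, $\level(y)<\ell$. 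It is vacuous, since the push requires $\level(x)=\level(y)-1<\ell-1$ while \ref{i:l2} forces every element of $\Tt_i(e)$ to have level at least $\ell-1$, but this should be said.

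The genuine flaw is in the relabel case. Your claim that items at level $\ell-1$ are never relabeled while $\level(e)=\ell$ is false: after a non-saturating push $e$ is no longer undersold, the maximum undersold level can then drop below $\ell$, and an item at level $\ell-1$ --- possibly the current minimizer $f^*$ itself --- can be selected and relabeled to $\ell$. (The paper's proof explicitly accounts for this: the label of the currently minimal candidate may rise from $\ell-1$ to $\ell$, in which case \eqref{eq:order-increasement} increases.) In addition, the highest-level selection rule you invoke is not among the lemma's hypotheses, which assume only that pushes use a minimal suitable item. Consequently your intermediate claim that ``the defining set of $f^*$ is preserved'' under relabels is wrong. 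The conclusion of that case still holds, and the correct one-line argument is: a relabel changes neither $z_i$ nor $\Tt_i(e)$; by \ref{i:l2} no element of $\Tt_i(e)$ has level $\ell-2$, so no new candidate can enter the set $\{f\in\Tt_i(e)\mid\level(f)=\ell-1\}$; candidates can only leave it (when relabeled to $\ell$), so the minimum can only increase. As written, however, your argument for the relabel case does not establish the monotonicity it claims.
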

\begin{proof}
    We prove the lemma by considering the different steps a push and relabel algorithm can execute, namely pushing and relabeling and show that after both these operations, the statement of the lemma is fulfilled.
    A relabel operation does not affect $\Tt_i(e)$.
    Moreover, by \ref{i:l2}, there is no item $f \in \Tt_i(e)$ with $\ell - 2 = \level(f) < \level(e) - 1 = \ell - 1$.
    Thus, no additional candidates can appear in $\{f \mid f \in \Tt_i(e), \level(f) = \ell - 1\}$.
    The label of the currently minimal item could increase from $\ell-1$ to $\ell$ though, in which case \eqref{eq:order-increasement} increases.
    In total, we obtain the desired monotonicity for a relabel operation.

    For push operations with respect to $i' \neq i$, neither $\Tt_i(e)$ nor the labels change and thus, again, the statement holds since the minimal item remains the same.

    Consider a push operation at $k$ with respect to $i$ and $g$.
    Let $z'_i = z_i - \alpha \chi_g + \alpha \chi_k$ be the set assigned to $i$ after the push and let $\Tt'_i(e)$ denote the minimal $i$-tight set containing $e$ after the push.

    First assume that $k = e$.
    In this case, $g$ is the minimal item in $\Tt_i(e)$ with $\level(g) = \level(e)-1$ by the assumption of the lemma that every push is executed with respect to a minimal suitable item.
    \begin{itemize}
        \item For a  saturating push, we have $\alpha = w_i(e, g)$.
            This implies $z_i' - \chi_g + \chi_e \notin \B_i$ and thus, $g \notin \Tt'_i(e)$ by \Cref{lem:tight-sets-weights}.
            Moreover, $\rk_i(\Tt_i(e)) = z_i(\Tt_i(e)) = z_i'(\Tt_i(e))$ since $e, g \in \Tt_i(e)$.
            Hence, $\Tt_i(e)$ is a tight set with respect to $z_i'$ and, thus, $\Tt'_i(e) \subseteq \Tt_i(e)$.
            We get $\Tt'_i(e)\subseteq \Tt_i(e)\setminus\{g\}$, which implies that the item fulfilling all requirements is strictly increasing.

        \item For a non-saturating push, we get $\alpha < w_i(e, g)$.
            Thus, $z'_i - (w_i(e, g) - \alpha)\chi_g + (w_i(e, g) - \alpha)\chi_e \in  \B_i$, which implies $g \in \Tt'_i(e)$ by \Cref{lem:tight-sets-weights}.
            Moreover, $\Tt_i(e)$ is tight with respect to $z_i'$, hence, $\Tt'_i(e) \subseteq \Tt_i(e)$.
            This implies that the minimal item fulfilling all requirements remains the same.
    \end{itemize}
    Next, we show monotonicity for $k \neq e$.
    To this end, we distinguish three cases:
    \begin{itemize}
        \item If $k, g \notin \Tt_i(e)$, then $\Tt'_i(e) = \Tt_i(e)$ and thus, \eqref{eq:order-increasement} remains the same.
        \item If $k \in \Tt_i(e)$, it holds that $\Tt_i(k) \subseteq \Tt_i(e)$ by \Cref{lem:tight-sets-weights}.
            Thus, we get that also $g$ is in $\Tt_i(e)$, as $g \in \Tt_i(k) \subseteq \Tt_i(e)$.
            This implies that $z'_i(\Tt_i(e)) = z_i(\Tt_i(e))$.
            Since $\Tt_i(e)$ is a tight set, we have $ z_i(\Tt_i(e)) = \rk(\Tt_i(e))$.
            In other words, $\Tt_i(e)$ remains tight after the push.
            This yields $\Tt'_i(e) \subseteq \Tt_i(e)$, as $\Tt'_i(e)$ is defined as the minimal tight set containing $e$ and $\Tt_i(e)$ is already a valid candidate.
            Monotonicity of \eqref{eq:order-increasement} follows immediately.
        \item If $k \notin \Tt_i(e)$ and $g \in \Tt_i(e)$, it holds that $\Tt'_i(e) \subseteq \Tt_i(e) \cup \Tt_i(k)$.
            This follows since the set $\Tt_i(e) \cup \Tt_i(k)$ is tight and remains tight after the push since it contains $k$ and $g$.
            Now, we distinguish three cases, based on the level of $k$.
            First assume $\level(k) > \ell$.
            Then $\lmin(\Tt_i(k)) \geq \ell$ by \ref{i:l2} and therefore $\{f \in E \mid \level(f) = \ell-1\} \cap \Tt_i'(e) \subseteq \Tt_i(e)$.
            In other words, every item that is a candidate in the minimum after the push was also in $\Tt_i(e)$ and thus a candidate before the push since all items in $\Tt_i(k)$ have a label which is too high.
            Next, consider the case $\level(k) = \ell$.
            Then we get
            \begin{equation*}
                \min \{f' \mid f' \in \Tt_i(k), \level(f') = \ell - 1\} = g \geq \min\{f' \mid f' \in \Tt_i(e), \level(f') = \ell - 1\},
            \end{equation*}
            since we are now assuming $g \in \Tt_i(e)$ and $\level(g) = \ell-1$ and thus, $g$ is a valid candidate for the minimum on the right side.
            This shows the desired monotonicity of \eqref{eq:order-increasement} since we showed that $\Tt'_i(e) \subseteq \Tt_i(e) \cup \Tt_i(k)$.
            Finally, $\level(k) < \ell$ is not possible, as on one hand we have $\level(g) = \level(k)-1 < \ell - 1$ since we push on $k$ with respect to $g$ and on the other hand we have $g \in \Tt_i(e)$ by case distinction and thus, $\level(g) \geq \level(e) - 1 = \ell -1$.
            This is a contradiction.
    \end{itemize}
    Hence monotonicity of \eqref{eq:order-increasement} holds in every case.
\end{proof}

Note that when pushes are only executed with respect to lexicographically minimal item-buyer pairs, the condition of \Cref{lem:lexicographic_selection} is fulfilled.
Thus, we can apply the lemma to obtain monotonicity of \eqref{eq:order-increasement} for every buyer $i$.
This implies that, when fixing $i$, the potential items for a push at $e$ are monotonically increasing, i.e., not below the items considered so far.
Note that this implies that whenever there was no push for a buyer $i$ at item $e$, there will not be any push for this buyer until $e$ is relabeled.
Moreover, at a saturating push \eqref{eq:order-increasement}, increases strictly and thus also the buyer-item pair.
Hence, the procedure in~\ref{alg:polymatroid_push_relabel_subroutine} is indeed a valid implementation of the general push-relabel framework.

\begin{corollary}\label{cor:correctness_polymatroid_sum}
    The \ref{alg:polymatroid_push_relabel_subroutine} returns an optimal solution to~\eqref{eq:poly-union}.
\end{corollary}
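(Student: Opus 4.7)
The plan is to reduce this corollary to \Cref{lem:terminate}, which already establishes that any execution of the generic \ref{alg:polymatroid_union} returns an optimal solution. Hence it suffices to argue that \ref{alg:polymatroid_push_relabel_subroutine} is a faithful implementation of the generic framework: namely, that every push performed is a legal push, every relabel is performed only when no legal push exists at the selected item $e$, and the outer loop terminates only when no undersold item with level below $m$ remains.

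The first two points are almost immediate from the code: whenever a push is executed, the conditions $w_i(e,f) > 0$ and $\level(e) = \level(f)+1$ are checked explicitly, and the update $z_i \coloneqq z_i - \alpha\chi_f + \alpha\chi_e$ with $\alpha \coloneqq \min\{\U(e) - \sum_{\ell} z_\ell(e), w_i(e,f)\}$ matches the definition of a push. The selection of an undersold item $e$ with $\level(e) < m$ in the outer loop likewise matches the generic framework (the preference for items of maximal level is only a tie-breaking rule and does not restrict legality). So the only substantive obligation is to show that whenever the inner for-loop finishes without having performed a push, there is truly no buyer-item pair $(i,f)$ with $w_i(e,f) > 0$ and $\level(f) = \level(e)-1$ at all, not merely none lexicographically at or above the stored pointer $(j_e, g_e)$.

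This is exactly where \Cref{lem:lexicographic_selection} is used, and it will form the core of the argument. I would proceed by induction on the sequence of operations performed by the algorithm, maintaining the invariant that for every undersold item $e$ with $\level(e) < m$, every buyer-item pair $(i,f)$ lexicographically strictly below $(j_e,g_e)$ fails the push conditions at the current state. At initialization this holds trivially because $(j_e,g_e) = (1,1)$. A relabel of $e$ resets $(j_e,g_e)$ to $(1,1)$ and sends $\level(e)$ to $\level(e)+1$; the invariant then becomes vacuous for $e$ at the new level, and for other items the local analysis of \Cref{lem:lexicographic_selection} shows that the minimal admissible item stays monotone, preserving the invariant. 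For pushes, I will do a case split on saturating versus non-saturating and on whether the affected item is $e$ or some $k \neq e$, reusing exactly the monotonicity statements proved in the lemma (the minimum in \eqref{eq:order-increasement} is non-decreasing and strictly increases at a saturating push at $e$), which justify advancing $(j_e,g_e)$ to $(i,f+1)$ or $(i+1,1)$ after a saturating push and leaving it at $(i,f)$ after a non-saturating push.

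The main obstacle is bookkeeping: the monotonicity from \Cref{lem:lexicographic_selection} is stated for one buyer $i$ and one item $e$ at a time, whereas the algorithm interleaves work on all $(i,f)$ pairs for each $e$. I therefore have to verify the invariant not only for the buyer involved in the most recent push but also across the lexicographic jump from $(i,m)$ to $(i+1,1)$, using that pushes involving other items $k\neq e$ either leave the minimal admissible candidate for $(e,i)$ unchanged or can only increase it; the lemma's third bullet covers exactly these cross-item cases. Once this invariant is established, the absence of a push in the for-loop implies the absence of any legal push at $e$, so the relabel taken by the algorithm is exactly the relabel step of the generic framework, and \Cref{lem:terminate} then delivers the corollary.
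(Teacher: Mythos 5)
Your proposal is correct and follows essentially the same route as the paper: reduce to \Cref{lem:terminate} by showing the lexicographic pointer rule never skips a legal push, invoking the monotonicity of \eqref{eq:order-increasement} from \Cref{lem:lexicographic_selection}. Your inductive invariant merely spells out in more detail the bookkeeping that the paper treats briefly in the discussion preceding the corollary.
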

Next, we aim to bound the running time.

\subsubsection{Running time analysis.}
As a first step to show \Cref{thm:polymatroid_runtime}, we bound the number of non-saturating pushes.
This is crucial since for these pushes, we cannot proceed with the next lexicographically higher item-buyer pair, but instead have to reconsider a pair that we looked at already.
The key ingredient of the proof is that we always choose an undersold item of maximal level below $m$ for a push and the fact that an item is not undersold anymore after a non-saturating push and can only get undersold again by a push on an item from a higher level.

\begin{lemma}\label{lem:number_pushes}
    The number of non-saturating pushes in the \ref{alg:polymatroid_push_relabel_subroutine} is at most $m^3$.
\end{lemma}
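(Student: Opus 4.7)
The plan is to partition the execution of the algorithm into rounds according to the relabel operations and then bound the number of non-saturating pushes per round.  Define a \emph{round} to be a maximal time interval during which no relabel occurs, so if $R$ denotes the total number of relabel operations then the execution splits into exactly $R+1$ rounds.  Since an item's level lies in $\{0,1,\ldots,m\}$ and the item is never selected again after reaching level $m$, each item is relabeled at most $m$ times, giving $R\leq m^2$.

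The heart of the argument is a monotonicity property inside a round.  I would first show that within a single round, the maximum level of an undersold item is non-increasing.  Because no relabel happens, the level function is frozen, so this quantity can only change when pushes alter which items are undersold.  By the selection rule, each push is performed at an undersold item $e$ of maximum level, and the push only modifies $\sum_\ell z_\ell(e)$ and $\sum_\ell z_\ell(f)$, where the source $f$ sits at level $\level(e)-1$.  Hence no item at a strictly higher level than $\level(e)$ can become newly undersold, which proves the claim.

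Using this, I would show that within one round each item has at most one non-saturating push.  Suppose for contradiction that item $e$ underwent non-saturating pushes at two times $t_1<t_2$ in the same round.  After the push at $t_1$, the item $e$ is tight and therefore no longer undersold, so in order to be selected again at $t_2$ it must become undersold in the interim.  This requires weight to leave $e$, which happens only when $e$ plays the role of the source $f$ in a push carried out at some item $e'$ that is undersold with $\level(e')=\level(e)+1$.  But the existence of such an undersold $e'$ would raise the maximum level of undersold items to $\level(e)+1$, contradicting the monotonicity observation (since at $t_1$ that maximum was $\level(e)$).  Consequently, each round accommodates at most $m$ non-saturating pushes.

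To finish, I would observe that the very first round, before any relabel, contains no push at all: initially every item is at level $0$, and the push condition requires a source at level $\level(e)-1\geq 0$, i.e.\ $\level(e)\geq 1$.  Thus only the $R$ rounds that follow a relabel can contain non-saturating pushes, each contributing at most $m$, for a total of at most $R\cdot m\leq m^2\cdot m=m^3$.  The main obstacle in this proof is the monotonicity of the maximum undersold level inside a round; once it is established, the combinatorial bookkeeping is immediate.  It is essential here that a push only moves weight between the target $e$ and the single source $f$ at level $\level(e)-1$, so no item at higher level can sneak into the undersold set without a relabel.
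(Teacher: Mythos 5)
Your proof is correct and follows essentially the same route as the paper: you partition the run into phases between relabels (at most $m^2$ of them), and use the highest-level selection rule together with the fact that a push can only create new undersold items one level below the pushed item to show each item admits at most one non-saturating push per phase. The only cosmetic point is that your monotonicity claim should be stated for the maximum level of undersold items \emph{below} level $m$ (the quantity governing selection), which is exactly how the argument is used in your contradiction step anyway.
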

\begin{proof}
    A \emph{phase} of the algorithm is the set of iterations between two relabel operations.
    Once every item $e \in E$ has reached $\level(e) = m$, the algorithm stops since then there is no item left we can choose for a push.
    Since an item with level $m$ is never selected, there are at most $m^2$ phases ($m$ items, that are relabeled at most $m$ times).

    We will show that for each $e \in E$, there can be at most one non-saturating push in each phase.
    Here, we exploit that the algorithm picks an undersold item $e$ with $\level(e) = \ell$ maximal.
    After a non-saturating push, item $e$ is not undersold anymore.
    Thus, it cannot be picked again before it becomes undersold again.
    Push operations of an item at level $\ell'$ may create new undersold items but only on level $\ell'-1$.
    Therefore, item $e$ could become undersold again due to a push at an undersold item $f$ with $\level(f) = \ell + 1$.
    However, no undersold item at level $\ell+1$ may appear without a relabeling after $e$ was picked, i.e., within the same phase.

    Hence, there can be at most one non-saturating push per item, i.e., at most $m$ non-saturating pushes in a phase.
    As there are at most $m^2$ phases, there are at most $m^3$ non-saturating pushes during the algorithm.
\end{proof}

\begin{lemma}\label{lem:runtime_polymatroid_sum}
    The \ref{alg:polymatroid_push_relabel_subroutine} runs in $\mathcal{O}(n \cdot \BO + nm^3 \cdot \EO)$ time.
\end{lemma}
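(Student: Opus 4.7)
The plan is to decompose the total running time into three pieces: the initial demand oracle calls that produce the starting vectors $z_i \in \B_i$, the overhead for bookkeeping (maintaining the quantities $\sum_\ell z_\ell(e)$, the list of undersold items sorted by level, and the pointers $(j_e, g_e)$), and the dominant piece, which is the total number of exchange oracle calls made inside the \textbf{for}-loop. The first is trivially $n \cdot \BO$, and the bookkeeping is easy to implement so that each push and relabel triggers only $\mathcal{O}(1)$ updates; hence neither can exceed the bound we are aiming for. The heart of the argument is therefore to show that the total number of $\EO$ queries is $\mathcal{O}(nm^3)$.

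For this I would fix an item $e \in E$ and amortize the $\EO$ calls triggered by searches that start at $e$'s pointer $(j_e, g_e)$. The key observation is the monotonicity established in \Cref{lem:lexicographic_selection}: as long as $e$ is not relabeled, the smallest candidate pair $(i, f)$ with $w_i(e, f) > 0$ and $\level(f) = \level(e) - 1$ can only move forward in the lexicographic order. By the way the pointer is updated in the algorithm (advance on a saturating push, keep in place on a non-saturating push, reset on a relabel), this means that \emph{between two relabels of $e$} the pointer $(j_e, g_e)$ passes through each buyer-item pair at most a constant number of times. More precisely, each pair is inspected once during a forward scan, plus possibly one extra inspection attributable to the at-most-one non-saturating push per phase guaranteed by \Cref{lem:number_pushes}. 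Since there are $nm$ buyer-item pairs, this yields $\mathcal{O}(nm)$ $\EO$ calls per item per phase.

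Now I would combine this with the global counting. Each item can be relabeled at most $m$ times before its level reaches $m$ and it is no longer selected, so it goes through $\mathcal{O}(m)$ phases. Summing $\mathcal{O}(nm)$ $\EO$ calls per phase over $m$ phases per item and $m$ items gives a total of $\mathcal{O}(nm^3)$ exchange oracle calls. Combining this with the $n \cdot \BO$ initialization and the negligible bookkeeping overhead produces the claimed $\mathcal{O}(n \cdot \BO + nm^3 \cdot \EO)$ bound.

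The step I expect to require the most care is the amortization across non-saturating pushes at a fixed $e$. After such a push the pointer stays put, so naively one could imagine scanning the same prefix of pairs repeatedly once $e$ becomes undersold again. The resolution is the observation, used in \Cref{lem:number_pushes}, that choosing $e$ of \emph{maximum} undersold level below $m$ forces each item to admit at most one non-saturating push per phase, so the pointer reoccupies a given position at most once per phase. I would make this explicit by charging the single re-check of the stalled pair to that unique non-saturating push, keeping all other inspections aligned with genuine forward motion of the pointer and hence bounded by $nm$ per phase.
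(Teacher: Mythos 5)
Your proposal is correct and follows essentially the same route as the paper's proof: initialization with $n$ demand oracle queries, forward lexicographic scans of at most $nm$ buyer-item pairs per item per level justified by \Cref{lem:lexicographic_selection}, the extra re-checks of stalled pairs charged to the $\mathcal{O}(m^3)$ non-saturating pushes of \Cref{lem:number_pushes}, and per-level lists (with amortized scanning) for selecting an undersold item of maximum level. The only caveat is that ``at most one non-saturating push per phase'' must refer to the global phases between consecutive relabel operations as in \Cref{lem:number_pushes}, not to the interval between two relabels of the fixed item $e$ (within which many non-saturating pushes at $e$ can occur); your final accounting, which charges each re-check to its own non-saturating push and bounds their total by $\mathcal{O}(m^3)$, renders this imprecision harmless.
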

\begin{proof}
    We can find an initial allocation in $n \cdot \BO$ time by asking every buyer $i$ for a vector $z_i \in \B_i$.

    According to the previous lemma, the total number of non-saturating pushes is $\mathcal{O}(m^3)$.
    We estimate the running time of all relabel and saturating push operations induced by a fixed item $e$ while it is on level $\level(e) = \ell$.
    Given $e$, we go through all buyer-item combinations in a structured way such that we consider every buyer-item combination at most once unless there is a non-saturating push.
    So, without the non-saturating pushes an item $e$ on a fixed level $\ell$ induces a running time of $mn$ arithmetic operations and edge-weight queries, in particular one $\EO$ query for each item-buyer pair.
    Since there are $m$ items, that can be on at most $m$ different levels with a total addition of $m^3$ overall, the running time is $\mathcal{O}(nm^3 + m^3) = \mathcal{O}(nm^3)$ edge-weight queries and arithmetic operations.

    In the analysis so far, we ignored the time needed to pick an undersold item $e$ on the maximal possible level below $m$.
    We show now that this is not a bottleneck operation.
    We do so by using a standard network flow push-relabel construction (see e.g. \citep[Section 7.8]{ahuja1988network}).
    In the beginning we initialize a list per level giving all undersold items that are on this level.
    In the beginning, all lists are empty except for the level 0 list, which contains all undersold items.
    This construction takes $\mathcal{O}(m)$ but is only needed once.
    Moreover, we keep track of the maximal level that has a non-empty list and is below $m$.
    The list is easily maintainable, since only one item changes the level and the property undersold per iteration.
    The total increase of the maximal relevant level is bounded by $\mathcal{O}(m^2)$ ($m$~items on up to $m$ levels).
    Thus, the total decrease is at most $\mathcal{O}(m^2)$.
    Hence, scanning the lists to find the first non-empty list is not a bottleneck operation.
    Having such a data structure, we can access any item from the list on the level of the pointer in $\mathcal{O}(1)$.
\end{proof}

Now, we are ready to prove our main \Cref{thm:polymatroid_runtime} as we considered correctness and running time of the algorithm.

\begin{proof}[{Proof of \Cref{thm:polymatroid_runtime}}]
    The statement follows directly by \Cref{cor:correctness_polymatroid_sum} and \Cref{lem:runtime_polymatroid_sum}.
\end{proof}

Using the \ref{alg:polymatroid_push_relabel_subroutine} and a breadth-first search in the exchange graph, we can compute the minimal maximal over-/underdemanded sets efficiently.
\begin{corollary}
    Given prices $p$, we can compute the minimal maximal overdemanded [underdemanded] set in time $\mathcal{O}(n \cdot \BO + nm^3 \cdot \EO)$ in the multi-supply setting.
\end{corollary}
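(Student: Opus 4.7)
The statement is essentially a bookkeeping corollary that combines the two main results developed earlier in the paper, so the plan is to assemble these ingredients and observe that the additive running times collapse to the claimed bound. I will treat the overdemanded case explicitly; the underdemanded case is symmetric.

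\textbf{Step 1: Solve the associated polymatroid sum problem.} Given the price vector $p$, set $\B_i \coloneqq \DM_i(p)$ for each buyer $i \in N$. By \Cref{lem:GS-Mconv} these are \M-convex sets, so we may apply the \ref{alg:polymatroid_push_relabel_subroutine} to compute an optimal solution $\bz = (z_1, \ldots, z_n)$ to the polymatroid sum problem~\eqref{eq:max_dcheck}. By \Cref{thm:polymatroid_runtime}, this takes $\mathcal{O}(n \cdot \BO + nm^3 \cdot \EO)$ time, where the $n \cdot \BO$ term comes from querying a single initial minimal preferred bundle from each buyer, and all subsequent work is carried out via $\EO$-queries.

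\textbf{Step 2: Extract the minimal maximal overdemanded set via the exchange graph.} Using $\bz$ from Step 1, build the exchange graph $\check{G}(\bz) = (E, \check{A})$ defined in \Cref{sec:exchangegraph}, and perform a reverse breadth-first search starting from all oversold items. By \Cref{thm:computation_overd-set}, the set $R$ of items from which some oversold item is reachable is precisely the minimal maximal overdemanded set. By \Cref{lem:running_time_overd_set}, the arc-weight queries required to build the graph plus the BFS itself take $\mathcal{O}(nm^2 \cdot \EO)$ time.

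\textbf{Step 3: Combine the bounds.} The total running time is
\[
\mathcal{O}(n \cdot \BO + nm^3 \cdot \EO) + \mathcal{O}(nm^2 \cdot \EO) \;=\; \mathcal{O}(n \cdot \BO + nm^3 \cdot \EO),
\]
since the $nm^2$ term is absorbed into the $nm^3$ term. For the underdemanded case, we instead take $\B_i \coloneqq \dhat_i(p)$, solve~\eqref{eq:max_dhat}, and apply \Cref{thm:computation_underd-set} together with \Cref{lem:running_time_underd_set}; \Cref{thm:polymatroid_runtime} applies uniformly, so the same bound is obtained.

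\textbf{Expected obstacle.} There is no real obstacle: all the technical content (correctness of the push-relabel procedure, the characterization of minimal maximal over-/underdemanded sets via reachability, and the oracle-query accounting) has already been established. The only thing to check is that the oracle model assumed by the push-relabel algorithm matches what is available from the demand/exchange oracles on $\DM_i(p)$ and $\dhat_i(p)$, which is exactly the setting discussed in \Cref{sec:oracle}. Hence the corollary follows immediately.
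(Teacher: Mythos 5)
Your proposal is correct and follows exactly the paper's argument: apply \Cref{thm:polymatroid_runtime} to $\DM_i(p)$ [respectively $\dhat_i(p)$] to solve the polymatroid sum problem, then invoke \Cref{lem:running_time_overd_set} [\Cref{lem:running_time_underd_set}] for the exchange-graph extraction, with the $\mathcal{O}(nm^2\cdot\EO)$ term absorbed into $\mathcal{O}(nm^3\cdot\EO)$. Nothing is missing.
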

This follows by \Cref{lem:running_time_overd_set} [\Cref{lem:running_time_underd_set}, respectively] and \Cref{thm:polymatroid_runtime}.
Recall that the demand and exchange oracle calls are based on the matroids corresponding to the preferred bundles $\DM_i(p)$ or $\dhat_i(p)$, respectively.

\section{Minimal packing prices and maximal covering prices are Walrasian}
\label{sec:minimal-packing}

It is well-known that Walrasian prices form a complete lattice using the component-wise minimum and component-wise maximum as meet and join operations, respectively.
Hence, there also exists a component-wise minimal and maximal Walrasian price vector $\pmin$ and $\pmax$, which we already discussed in the previous sections.

Packing and covering prices are already interesting on their own as they both constitute weaker notions of equilibria (packing: every buyer gets a preferred bundle, covering: all items are sold).
Moreover, packing and covering prices are guaranteed to exist, also for non strong gross substitutes valuation functions, e.g., by setting the prices very high (such that no buyer is interested in any good) or to zero, respectively.
It is a natural question whether these relaxations inherit those lattice properties and whether minimal packing prices or maximal covering prices (if they exist) are guaranteed to be Walrasian.
Particularly, the question whether minimal packing and maximal covering prices exist is interesting.

For non strong gross substitutes valuations, we show in \Cref{appendix:packing_no_lattice} and \ref {appendix:covering_no_lattice} that packing and covering prices do not form a lattice.
Moreover, the example in \Cref{appendix:packing_no_lattice} even shows that there is no unique minimal packing price vector.

However, for strong gross substitutes valuations, the answer is positive: in the following, we show that there is a unique minimal packing price vector and a unique maximal covering price vector.
Moreover, we show that these vectors are equal to the minimal, respectively maximal, Walrasian price vector.

\subsection{Reduction to unit-supply case}
\label{sec:reduction}
Recall that we are given some market consisting of items (types) $e \in E$, where $b(e)$ is the number of units which are available of item $e$ and where each buyer $i \in N$ has a strong gross substitutes valuation function $v_i\colon \rng \to \Z_+$.
Moreover, we set prices $p(e)$ for each unit of item $e$.

Now, we will transfer this setting to a market with unit-supply, i.e., where only one unit of each item is available.
This unit-supply setting is used in the following proofs since with an individual price for every copy, it is easier to show the statements and we can directly use well-known results for the unit-supply case.
This simple trick of copying the items is also used for example in \citep[Appendix A.2]{murota2013computing}.

Given an instance in the multi-supply setting, we construct an instance in the unit-supply setting by defining $\U(e)$ copies $\tilde{e}_1, \dots, \tilde{e}_{\U(e)}$ of item $e \in E$.
We call the set of all copies of items $\tilde{E}$, i.e., $\tilde{E} \coloneqq \{\tilde{e}_1, \dots, \tilde{e}_{\U(e)} : e \in E\}$.
This means that $\tilde{e}_k$ is a copy of item $e$.
For simplicity, we may omit the index if we pick an item of $\tilde{E}$.
To convert the bundles $z$ to the unit-supply setting, we go through the buyers one by one and allocate any $z_i(e)$ (preferably unsold) different copies of $e$ to buyer $i$.
Note that if $z$ is packing there are enough unsold copies available, otherwise, we have to allocate some item copies to multiple buyers.
We convert a bundle $S \subseteq \tilde{E}$ to a bundle $z \in \rng$ by setting $z(e) = \left\vert \{\tilde{e} \in S \mid \tilde{e} \text{ is a copy of } e\}\right\vert$.
Thus, the valuation of $S$ is given by $v(S) \coloneqq v(z)$.

Note that in the unit-supply setting, it is allowed that two different copies $\tilde{e}_j$ and $\tilde{e}_k$ of item $e$ have different prices.
However, in the following lemma we show that in the minimal and maximal Walrasian price vector no two copies of an item have a different price.
Given this, converting a price vector from the unit-supply setting to the multi-supply setting becomes straight-forward.
For the reverse direction, given a multi-supply price vector $p$, we can construct a unit-supply price vector by just setting the price of each copy of $e$ to $p(e)$.
We stick to calling these prices $p$ since it will be clear from the context whether we are talking about the multi-supply or the unit-supply setting.

\begin{lemma}\label{lem:Walrasian-reduction}
	A vector $p$ is the minimal [maximal] Walrasian price vector in the multi-supply setting if and only if $p$ is also the minimal [maximal] Walrasian price vector in the corresponding unit-supply setting, i.e., the prices of each copy of an item $e$ are given by $p(e)$.
\end{lemma}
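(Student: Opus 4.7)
The plan is to reduce the statement to two facts: (a) multi-supply Walrasian prices correspond bijectively to unit-supply Walrasian prices that assign equal prices to copies of the same item (call these \emph{symmetric}), and (b) the component-wise minimal [maximal] Walrasian price in the unit-supply reduction is automatically symmetric. Once both are established, the lemma is immediate: the minimal multi-supply Walrasian price and the minimal unit-supply Walrasian price correspond to each other under the bijection in~(a), and likewise for the maximal case.

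For~(a), I would show that given a multi-supply Walrasian allocation $(z_1,\dots,z_n)$ at prices $p$, distributing the $z_i(e)$ units of each item type $e$ to the buyers among its $\U(e)$ copies yields a unit-supply Walrasian allocation at the symmetric extension $\tilde p$ (this is possible because $\sum_i z_i(e)=\U(e)$). The converse uses the fact that the unit-supply valuation $v_i(S)$ was \emph{defined} to depend only on the profile $z(e)=|\{\tilde e\in S: \tilde e \text{ is a copy of }e\}|$, so each buyer's demand set in the unit-supply problem at a symmetric price is also symmetric in the same way, and the aggregated allocation $z_i(e)$ inherits the preferred-bundle and market-clearing properties.

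For~(b), the key is a symmetrization argument. For any permutation $\sigma$ of $\tilde E$ that permutes copies of the same item among themselves, the unit-supply market instance is invariant: $v_i(\sigma(S))=v_i(S)$ for all $i$ and $S$. Hence, if $\tilde q$ is Walrasian in the unit-supply market, so is $\sigma(\tilde q)$ for every such $\sigma$. Since strong gross substitutes valuations remain strong gross substitutes in the unit-supply reduction, Walrasian prices form a complete lattice under component-wise $\wedge$ and $\vee$ \citep{gul1999walrasian, ausubel2006efficient}. Therefore $\tilde q^\wedge \coloneqq \bigwedge_\sigma \sigma(\tilde q)$ is Walrasian, and by construction it is symmetric (each copy of $e$ receives the minimum of $\tilde q$ over the copies of $e$). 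Analogously, $\tilde q^\vee \coloneqq \bigvee_\sigma \sigma(\tilde q)$ is a symmetric Walrasian price dominating $\tilde q$.

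Applying the symmetrization to the minimal unit-supply Walrasian price $\tilde q^*$ gives a symmetric Walrasian price $(\tilde q^*)^\wedge\leq \tilde q^*$, which must equal $\tilde q^*$ by minimality; so $\tilde q^*$ is symmetric, corresponds via~(a) to some multi-supply Walrasian price $q^*$, and for any other multi-supply Walrasian $q$ we have $\tilde q\geq \tilde q^*=\widetilde{q^*}$, yielding $q\geq q^*$. The maximal case is symmetric, using the join. I expect the main obstacle to be merely bookkeeping for~(a): carefully verifying that the copy-distribution step preserves the preferred-bundle property in both directions, given the slightly informal definition of the reduction. The symmetrization in~(b) is conceptually the heart of the proof but is a short consequence of the lattice property once the invariance of the instance under copy-permutations is observed.
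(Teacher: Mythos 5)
Your proposal is correct and follows essentially the same route as the paper: the paper also reduces the claim to showing that the minimal [maximal] unit-supply Walrasian price vector assigns equal prices to copies of the same item, using the invariance of the instance under swapping copies together with the lattice property of Walrasian prices (a swap-and-contradiction argument rather than your meet/join symmetrization, but the same idea). The only difference is that you spell out the correspondence between multi-supply Walrasian prices and symmetric unit-supply Walrasian prices, which the paper treats as immediate from the copy reduction.
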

\begin{proof}
     It suffices to show that the minimal [maximal] Walrasian price vector $\tilde{p}$ in the unit-supply setting has the same prices for each copy of an item.
	 If two item copies $\tilde{e}_j$ and $\tilde{e}_k$ of item type $e$ do not have the same price, we will show that this price vector is not the minimal, resprectively maximal, Walrasian price vector.

     If $\tilde{p}(\tilde{e}_j) \neq \tilde{p}(\tilde{e}_k)$, we have that $\tilde{p}'$ with $\tilde{p}'(\tilde{f}) = \tilde{p}(\tilde{f})$ for $\tilde{f} \in \tilde{E} \setminus \{ \tilde{e}_j, \tilde{e}_k \}$, $\tilde{p}'(\tilde{e}_j) = \tilde{p}(\tilde{e}_k)$, and $\tilde{p}'(\tilde{e}_k) = \tilde{p}(\tilde{e}_j)$ is Walrasian by symmetry.
	 Hence, as Walrasian prices form a lattice \cite{gul1999walrasian}, $\tilde{p} \wedge \tilde{p}'$ and $\tilde{p} \vee \tilde{p}'$ are Walrasian as well.
	 This is a contradiction to $\tilde{p}$ being the minimal, respectively maximal, Walrasian price vector and thus, all item copies have the same price given the minimal [respectively maximal] Walrasian price vector $\tilde{p}$.
\end{proof}

The following lemma states that when using the described reduction to the unit supply setting (see \Cref{sec:reduction}), the valuation functions of the buyers remain gross substitutes.
\begin{lemma}[{\citep[Proposition A.1]{murota2013computing}}]\label{lem:GS-reduction}
    The valuation function $v\colon 2^{\tilde{E}} \to \Z_+$ is gross substitutes.
\end{lemma}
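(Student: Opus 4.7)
The plan is to verify $\Mnat$-concavity of $\tilde{v}: 2^{\tilde{E}} \to \Z_+$ defined by $\tilde{v}(S) = v(z_S)$, where $z_S(e) = |\{\tilde{e} \in S \mid \tilde{e} \text{ is a copy of } e\}|$. Since on $\{0,1\}^{\tilde{E}}$ the notions of gross substitutes and $\Mnat$-concavity coincide (\Cref{lemma:gs_is_Mnat-convave}), this suffices. So I fix $S, T \subseteq \tilde{E}$ and $\tilde{e} \in \supp^+(\chi_S - \chi_T) = S \setminus T$, let $e \in E$ be the item type of which $\tilde{e}$ is a copy, and aim to verify one of the two $\Mnat$ exchange conditions for $\tilde{v}$.

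I would split into two cases according to the relationship between $z_S(e)$ and $z_T(e)$. In the \emph{generic case} $z_S(e) > z_T(e)$, the index $e$ lies in $\supp^+(z_S - z_T)$, so the $\Mnat$-concavity of $v$ (which holds by \Cref{lemma:gs_is_Mnat-convave}, since $v$ is strong gross substitutes) applied at $(z_S, z_T, e)$ yields either (M1) $v(z_S) + v(z_T) \le v(z_S - \chi_e) + v(z_T + \chi_e)$, or (M2) some $f \in \supp^-(z_S - z_T)$ satisfying $v(z_S) + v(z_T) \le v(z_S - \chi_e + \chi_f) + v(z_T + \chi_e - \chi_f)$. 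In (M1), since $\tilde{e} \in S \setminus T$ and $z_T(e) < z_S(e) \le \U(e)$, I get $z_{S - \tilde{e}} = z_S - \chi_e$ and $z_{T + \tilde{e}} = z_T + \chi_e$, giving the (M1) condition for $\tilde{v}$. In (M2), since $z_T(f) > z_S(f)$, there exists a copy $\tilde{f} \in T \setminus S$ of $f$; then $z_{S - \tilde{e} + \tilde{f}} = z_S - \chi_e + \chi_f$ and $z_{T + \tilde{e} - \tilde{f}} = z_T + \chi_e - \chi_f$ and the (M2) condition transfers to $\tilde{v}$.

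The \emph{degenerate case} $z_S(e) \le z_T(e)$ is the place where the multi-supply axiom is uninformative, and is the main obstacle in the proof. The trick is that, since $\tilde{e} \in S \setminus T$ is a copy of $e$ we have at least $z_{S \cap T}(e) + 1 \le z_S(e) \le z_T(e)$ copies of $e$ in $T$, so there must exist a copy $\tilde{f} \in T \setminus S$ also of type $e$. Swapping $\tilde{e}$ and $\tilde{f}$ across $S$ and $T$ does not change either of the integer vectors $z_S, z_T$, hence $\tilde{v}(S - \tilde{e} + \tilde{f}) + \tilde{v}(T + \tilde{e} - \tilde{f}) = \tilde{v}(S) + \tilde{v}(T)$, and condition (M2) for $\tilde{v}$ holds with equality.

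Combining the two cases gives the $\Mnat$ exchange property for every choice of $\tilde{e} \in S \setminus T$ and hence gross substitutes of $\tilde{v}$. The only nontrivial ingredient beyond the bookkeeping above is the case split, motivated by the observation that a copy-level difference $\tilde{e} \in S \setminus T$ does not imply a multi-supply-level difference $z_S(e) > z_T(e)$; the degenerate case is handled by a trivial copy swap, while the generic case reduces directly to the $\Mnat$-concavity of $v$.
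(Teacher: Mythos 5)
Your proposal is correct: the case split on whether $z_S(e) > z_T(e)$ or $z_S(e) \le z_T(e)$ covers all situations, the copy-level exchanges are legitimately translated to/from the vector level in the generic case, and the degenerate case is settled by the copy swap $\tilde{e} \leftrightarrow \tilde{f}$ (whose existence you justify correctly via $z_{S\cap T}(e)+1 \le z_S(e) \le z_T(e)$), which leaves both $z_S$ and $z_T$ unchanged and hence gives \ref{i:M2} with equality. Note, however, that the paper does not prove this lemma at all: it is imported by citation from \citet{murota2013computing} (Proposition A.1), so there is no internal proof to match. Your argument is essentially the standard ``splitting'' proof of that proposition, and it closely parallels the technique the paper \emph{does} spell out for the analogous lemma in \Cref{sec:minimal-packing} about the projected valuations $v_i'$ on $E'$: there too, the type-level exchange axiom of the original valuation handles the generic case, while coincidences at the type level are absorbed by swapping copies so that \ref{i:M1} or \ref{i:M2} holds with equality. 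Two cosmetic remarks: your appeal to \Cref{lemma:gs_is_Mnat-convave} gives the equivalence with \emph{strong} gross substitutes, from which plain gross substitutes (the statement of the lemma) follows since the strong condition only adds a requirement on $y_i$; and in the generic \ref{i:M2} branch it is worth saying explicitly that $e \neq f$ (automatic from $e \in \supp^+(z_S-z_T)$, $f \in \supp^-(z_S-z_T)$), so the copy bookkeeping is unambiguous. Neither point is a gap; the proof is complete and supplies a self-contained justification where the paper relies on an external reference.
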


\subsection{Minimal packing prices are Walrasian}
It is already known that in the unit-supply setting, packing prices are also Walrasian if they lie below a Walrasian price vector \cite{ben2017walrasian}.
However, this is not sufficient to show that there exists a unique minimal packing price vector (which is Walrasian), as it is not a priori clear whether there can be a packing price vector that is in some component smaller (but in another larger) than the minimal Walrasian price vector.
In this section, we show that this is never the case as the minimal Walrasian price vector is a componentwise lower bound on every packing price vector.

\begin{theorem}\label{thm:min_packing_is_min_walrasian}
    Given an instance where all valuation functions are strong gross substitutes, let $\pmin$ be the minimal Walrasian price vector.
    Then $\pmin \leq q$ for any packing price vector $q$.
\end{theorem}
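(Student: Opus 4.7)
The plan is to show that $q \wedge \pmin$ is itself a minimizer of the Lyapunov function $L$; then the minimality of $\pmin$ among Walrasian price vectors will force $q \wedge \pmin = \pmin$, i.e., $\pmin \leq q$. The argument combines a single-buyer lower bound exploiting the packing allocation at $q$ with the submodularity of $L$, no reduction to unit supply required.

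First I would show that for any packing $q$ and any $q' \geq q$ (componentwise), we have $L(q') \geq L(q)$. Let $(y_1,\dots,y_n)$ be a packing allocation certifying that $q$ is packing, so $y_i \in \dset_i(q)$ and $\sum_i y_i \leq \U$. For any buyer $i$ and any $q' \geq q$, the bundle $y_i$ is feasible in the maximization defining $V_i(q')$, hence
\[
 V_i(q') \;\geq\; v_i(y_i) - \pr{q'}{y_i} \;=\; V_i(q) - \pr{q'-q}{y_i}.
\]
Summing over $i \in N$ and using $\sum_i y_i \leq \U$ together with $q'-q \geq \bO$ yields $\sum_i V_i(q') \geq \sum_i V_i(q) - \pr{q'-q}{\U}$, which rearranges to $L(q') \geq L(q)$ by the definition of $L$.

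Applying this monotonicity to $q' = q \vee \pmin$ gives $L(q \vee \pmin) \geq L(q)$. Combining with submodularity of $L$ (noted after \Cref{lem:lyapunov}), we obtain
\[
 L(q) + L(\pmin) \;\geq\; L(q \vee \pmin) + L(q \wedge \pmin) \;\geq\; L(q) + L(q \wedge \pmin),
\]
so $L(q \wedge \pmin) \leq L(\pmin)$. Since $\pmin$ is a global minimizer of $L$, equality must hold; by \Cref{lem:lyapunov} the price vector $q \wedge \pmin$ is therefore Walrasian. Minimality of $\pmin$ among Walrasian prices yields $\pmin \leq q \wedge \pmin$, and together with the trivial inequality $q \wedge \pmin \leq \pmin$ this gives $q \wedge \pmin = \pmin$, i.e., $\pmin \leq q$.

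The only delicate point is the first step: one has to notice that the packing allocation at $q$ directly bounds $V_i(q')$ from below by $V_i(q) - \pr{q'-q}{y_i}$, and that the packing inequality $\sum_i y_i \leq \U$ is exactly what is needed to cancel the $\pr{q'-q}{\U}$ term in $L$. The remainder is a standard submodularity and lattice-minimizer manipulation, so no appeal to the M-convex structure of the demand correspondences is required here.
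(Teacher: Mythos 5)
Your proof is correct, and it takes a genuinely different (and shorter) route than the paper. The key step you supply — that a packing allocation $(y_1,\dots,y_n)$ at $q$ certifies $L(q')\geq L(q)$ for every $q'\geq q$, because $V_i(q')\geq V_i(q)-\pr{q'-q}{y_i}$ and $\sum_i y_i\leq \U$ cancels the price term — is sound, and the rest is the standard lattice argument: submodularity of $L$ (equivalently of the indirect utilities $V_i$, which the paper also invokes via Ausubel--Milgrom) gives $L(q\wedge\pmin)\leq L(\pmin)$, so $q\wedge\pmin$ is a minimizer of the Lyapunov function, hence Walrasian by \Cref{lem:lyapunov}, and minimality of $\pmin$ forces $q\wedge\pmin=\pmin$, i.e.\ $\pmin\leq q$. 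The paper instead reduces to the unit-supply setting, proves an intermediate result (\Cref{lemma:min_walrasian_leq_packing}) only for items that are actually sold in the packing allocation — using a modified price vector with price $\infty$ on unsold copies and the same submodular meet/join trick — and then handles unsold items separately via an \Mnat-concavity exchange argument plus a minimality assumption on $q$. Your monotonicity observation makes that case analysis and the copying reduction unnecessary, and it proves the statement for an arbitrary packing $q$ directly, whereas the paper's written proof starts from a \emph{minimal} packing price vector. What the paper's route buys is the finer-grained \Cref{lemma:min_walrasian_leq_packing} (a per-item statement tied to a specific packing allocation), but since nothing outside that subsection uses it, your argument could stand alone as a proof of the theorem; just make sure to attribute the lattice submodularity of $p\mapsto V_i(p)$ to the strong gross substitutes assumption (it fails for general valuations), as the paper does by citing \citep[Theorem 10]{ausubel2002ascending}.
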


Remember that every Walrasian price vector is also packing, but not vice versa.
Thus, we can reformulate the main theorem of this section as follows.

\begin{corollary}
    There exists a component-wise minimal packing price vector and it is equal to the minimal Walrasian price vector.
\end{corollary}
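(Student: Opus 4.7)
The corollary is an almost immediate reformulation of \Cref{thm:min_packing_is_min_walrasian}, so my plan is to verify the two ingredients needed to turn the theorem's inequality into a ``component-wise minimum'' statement.

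First, I would verify that $\pmin$ itself is a packing price vector. By definition, $\pmin$ is Walrasian, so there exists an allocation $(z_1,\ldots,z_n)$ with $z_i\in\dset_i(\pmin)$ for every $i\in N$ and $\sum_{i\in N}z_i(e)=\U(e)$ for every $e\in E$. This allocation trivially satisfies the weaker packing constraint $\sum_{i\in N}z_i(e)\leq\U(e)$, so it witnesses that $\pmin$ is packing. This step is purely a check against the two definitions and requires no additional machinery.

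Next, I would invoke \Cref{thm:min_packing_is_min_walrasian} as a black box to conclude $\pmin\leq q$ for every packing price vector $q$. Combined with the previous paragraph, this tells us that $\pmin$ is a packing price vector and a component-wise lower bound on all packing price vectors, so it is the unique component-wise minimal packing price vector.

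The main obstacle is therefore not the corollary but the underlying \Cref{thm:min_packing_is_min_walrasian}. A natural route to that theorem would be to first reduce to the unit-supply setting via \Cref{lem:Walrasian-reduction,lem:GS-reduction}, and then to show the key auxiliary fact that $q\wedge\pmin$ remains packing whenever $q$ is packing. Once closure of the packing property under component-wise meet is in hand, one can combine it with the known fact (from \cite{ben2017walrasian}) that a packing price vector lying below a Walrasian vector is itself Walrasian, yielding that $q\wedge\pmin$ is Walrasian. Minimality of $\pmin$ then forces $\pmin\leq q\wedge\pmin\leq q$. Proving that the meet of two packing price vectors is packing looks like the crucial technical step and would likely exploit the \Lnat-convexity of the Lyapunov function, for instance through the inequality $L(q)+L(\pmin)\geq L(q\vee\pmin)+L(q\wedge\pmin)$ together with the fact that $L$ is non-decreasing along upward moves from any packing point.
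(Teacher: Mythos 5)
Your derivation of the corollary matches the paper's: $\pmin$ is packing because any Walrasian allocation trivially satisfies the packing inequality, and \Cref{thm:min_packing_is_min_walrasian} gives $\pmin\leq q$ for every packing $q$, so $\pmin$ is the unique component-wise minimal packing price vector. Your final paragraph sketching a proof of the theorem itself (via closure of packing prices under meets and the result of \cite{ben2017walrasian}) is not needed for the corollary and differs from the paper's actual argument, which goes through the Lyapunov function and \Mnat-concavity, but since you only invoke the theorem as a black box this does not affect the correctness of your proof of the stated corollary.
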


To show \Cref{thm:min_packing_is_min_walrasian}, we will use the following lemma, that compares $\pmin(e)$ and $q(e)$ for the items $e$ which are at least partially sold in a packing allocation w.r.t.\ prices $q$.

\begin{lemma}\label{lemma:min_walrasian_leq_packing}
    Let $\pmin$ be the minimal Walrasian price vector and let $q$ be any packing price vector.
    Moreover, let $(z_1, \dots, z_n)$ be a packing allocation w.r.t.\ prices $q$.
    Then $\pmin(e) \leq q(e)$ for all $e \in E$ with $\sum_{i \in N} z_i(e) > 0$.
\end{lemma}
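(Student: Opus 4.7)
The plan is to derive a contradiction from the assumption that some $e^*\in E$ satisfies both $q(e^*)<\pmin(e^*)$ and $\sum_{i\in N}z_i(e^*)>0$. I would introduce $E^*:=\{e\in E\mid q(e)<\pmin(e)\}\ni e^*$ and the vector $p_0:=\pmin\wedge q$, which satisfies $p_0\leq\pmin$, $p_0\neq\pmin$, $p_0=q$ on $E^*$, and $p_0=\pmin$ on $E\setminus E^*$.

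Next I would invoke the \ref{alg:generic_ascending_auction} started at $p_0$. Since $p_0\leq\pmin$ the auction terminates at $\pmin$ (as recalled in \Cref{sec:auctions}); prices only increase and end exactly at $\pmin$, so every intermediate vector remains $\leq\pmin$. Because $p_0\neq\pmin$, the auction performs at least one step, raising prices by $1$ on some minimal maximal overdemanded set $S_0$ at $p_0$. For each $e\in S_0$, the post-update price $p_0(e)+1\leq\pmin(e)$ forces $e\in E^*$; thus $S_0\subseteq E^*$ and $p_0=q$ on $S_0$.

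The key ingredient is then a monotonicity statement: for strong gross substitutes valuations and $p\leq p'$ with $p(e)=p'(e)$ on $T\subseteq E$, $\check\mr_i^p(T)\leq\check\mr_i^{p'}(T)$ for every $i\in N$. I would derive this from two classical consequences of the $\Mnat$--$\Lnat$ conjugacy (cf.~\Cref{lemma:gs_is_Mnat-convave} and~\cite{murota2003book}): lattice submodularity of each indirect utility $V_i$, and the derivative identity $V_i(p)-V_i(p+\chi_T)=\check\mr_i^p(T)$. Applying submodularity at $p_0+\chi_{S_0}$ and $q$, and observing $(p_0+\chi_{S_0})\vee q=q+\chi_{S_0}$ together with $(p_0+\chi_{S_0})\wedge q=p_0$, yields
\[
V_i(p_0+\chi_{S_0})+V_i(q)\,\geq\,V_i(q+\chi_{S_0})+V_i(p_0).
\]
Substituting the derivative identity gives $\check\mr_i^{p_0}(S_0)\leq\check\mr_i^q(S_0)$ for every $i$; summing over $i\in N$ yields $\overd^q(S_0)\geq\overd^{p_0}(S_0)>0$, so $S_0$ is overdemanded at $q$, contradicting~\Cref{lem:packing}.

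The main obstacle will be justifying the derivative identity $V_i(p)-V_i(p+\chi_T)=\check\mr_i^p(T)$. The $\leq$ direction is immediate: any $z^*\in\DM_i(p)$ with $z^*(T)=\check\mr_i^p(T)$ attains utility $V_i(p)-\check\mr_i^p(T)$ at prices $p+\chi_T$. For the $\geq$ direction I would show that this same $z^*$ remains a preferred bundle at $p+\chi_T$ via an $\Mnat$-exchange on $v_i$ (\Cref{def:Mnat_concave}): if some $z'$ strictly improved on $z^*$ at prices $p+\chi_T$, then necessarily $z'(T)<\check\mr_i^p(T)$, and iteratively applying the $\Mnat$-exchange to the pair $(z^*,z')$ would produce a preferred bundle at $p$ with $T$-count strictly smaller than $\check\mr_i^p(T)$, contradicting the definition of $\check\mr_i^p(T)$.
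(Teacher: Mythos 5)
Your route is genuinely different from the paper's: the paper proves the lemma by a Lyapunov-function meet/join argument (restricting the Lyapunov function to the sold item copies and using lattice submodularity of the indirect utilities to show $\pmin\wedge q'$ is again a minimizer), whereas you argue via the first step of an ascending auction started at $p_0=\pmin\wedge q$ and a comparison of overdemandedness at $p_0$ and at $q$. The heart of your argument is fine: the identity $V_i(p)=V_i(p+\chi_S)+\check{\mr}_i^p(S)$ is exactly \Cref{lem:BenGS}\ref{lem:BenGS:2} (so you need not reprove it), and combining it with lattice submodularity of $V_i$ at the pair $p_0+\chi_{S_0}$, $q$ (whose join and meet are indeed $q+\chi_{S_0}$ and $p_0$, using $p_0=q$ on $S_0$ and $p_0\le q$ elsewhere) correctly yields $\overd^q(S_0)\ge\overd^{p_0}(S_0)>0$, contradicting \Cref{lem:packing}. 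Note also that your argument never uses the hypothesis $\sum_i z_i(e)>0$, so if it goes through it proves the stronger \Cref{thm:min_packing_is_min_walrasian} directly; that is not a flaw in itself, since the auction facts you import come from prior work and not from that theorem.

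The gap is in the auction step. The lemma allows $q$ to be an arbitrary packing price vector in $\R_+^E$, so $p_0=\pmin\wedge q$ may be non-integral. The fact recalled in \Cref{sec:auctions} — that the \ref{alg:generic_ascending_auction} started at any vector $\le\pmin$ terminates exactly at $\pmin$ — is a statement about the integer-increment auction from integral starting prices; from a fractional $p_0$ the unit-increment auction cannot terminate at the (integral) $\pmin$ at all, so both of the conclusions you draw from it — that an overdemanded set exists at $p_0$ whenever $p_0\neq\pmin$, and that the first minimal maximal overdemanded set $S_0$ satisfies $p_0+\chi_{S_0}\le\pmin$, hence $S_0\subseteq E^*$ — are unjustified in the generality the lemma requires. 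What you actually need is the invariant that for any (real) $p\le\pmin$ with $p\neq\pmin$ there is an overdemanded set and the minimal maximal one avoids every coordinate with $p(e)=\pmin(e)$; this is true, but it is itself a nontrivial consequence of the $\Lnat$-convexity/lattice structure of the Lyapunov function and cannot simply be cited from the paper — establishing it takes an argument of essentially the same depth as the paper's own proof. So either restrict your proof to integral $q$ (which is too weak for \Cref{thm:min_packing_is_min_walrasian} as stated) or supply a proof of this invariant for real price vectors.
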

\begin{proof}
    To prove this lemma, we will use the reduction to the unit-supply case.
    Therefore, let $\tilde{E}$ denote the set containing all item copies and let $(S_1, \dots, S_n)$ be the sets of bundles in the unit-supply setting corresponding to $(z_1, \dots, z_n)$.

    By \Cref{lem:Walrasian-reduction} and since $\pmin$ is a minimal Walrasian price vector in the multi-supply setting, it is also a minimal Walrasian price vector in the unit-supply setting.
    Hence, in the unit-supply setting, it minimizes the submodular Lyapunov function $L(p) \coloneqq \sum_{i \in N} V_i(p) + \sum_{\tilde{e} \in \tilde{E}} p(\tilde{e})$ (see \Cref{lem:lyapunov}), i.e., $\pmin$ is an optimal solution to
    \begin{equation}\label{lyapunov_p}
        \alpha \coloneqq \min \left\{\sum_{i \in N} V_i(p) + \sum_{\tilde{e} \in \tilde{E}} p(\tilde{e}) \mid p(\tilde{e}) \geq 0 \text{ for all } \tilde{e} \in \tilde{E} \right\}.
    \end{equation}
    Let $S = \bigcup_{i \in N} S_i$ be the set of item copies chosen in the preferred bundles.
    Now, we define a new price vector $q'$ by
    \begin{equation*}
        q'(\tilde{e}) \coloneqq \begin{cases}
            q(\tilde{e}) & \text{if } \tilde{e} \in S,\\
            \infty & \text{otherwise}.
        \end{cases}
    \end{equation*}
    Note that $V_i(q) = V_i(q')$ holds by definition.
    The price vector $q$ restricted to $S$ is Walrasian.
    The same holds for $q'$, thus, it minimizes the Lyapunov function restricted to $S$.
    In other words, $q'$ is an optimal solution to
    \begin{equation}\label{lyapunov_q}
        \beta \coloneqq \min \left\{\sum_{i \in N} V_i(p) + p(S) \mid p(\tilde{e}) \geq 0 \text{ for all } \tilde{e} \in S \text{ and } p(\tilde{e}) = \infty \text{ for all } \tilde{e} \in \tilde{E} \setminus S \right\}.
    \end{equation}
    Next, we define the join $\check{p} \coloneqq \pmin \vee q'$ and the meet $\hat{p} \coloneqq \pmin \wedge q'$ of the two price vectors $\pmin$ and $q'$ (i.e., $\check{p}(\tilde{e}) \coloneqq \max\{\pmin(\tilde{e}), q'(\tilde{e})\}$ and $\hat{p}(\tilde{e}) \coloneqq \min\{\pmin(\tilde{e}), q'(\tilde{e})\}$).
    Then, the following holds:
    \begin{align*}
        \alpha + \beta &= \sum_{j \in N} V_j(\pmin) + \pmin(\tilde{E}) + \sum_{j \in N} V_j(q') + q'(S)
        = \sum_{j \in N} (V_j(\pmin) + V_j(q'))+(\pmin(\tilde{E}) + q'(S))\\
        &\geq \sum_{j \in N} (V_j(\hat{p}) + V_j(\check{p})) + (\hat{p}(\tilde{E}) + \check{p}(S)) = \sum_{j \in N} V_j(\hat{p}) + \hat{p}(\tilde{E}) + \sum_{j \in N} V_j(\check{p}) + \check{p}(S)\\
        &\geq \alpha + \beta,
    \end{align*}
    where the first inequality follows from submodularity of $p \mapsto V_j(p)$ (cf. \citep[Theorem 10]{ausubel2002ascending}) and since $\pmin(\tilde{e}) + q'(\tilde{e}) = \check{p}(\tilde{e}) + \hat{p}(\tilde{e})$ for all $\tilde{e} \in S$ and $\hat{p}(\tilde{e}) = \pmin(\tilde{e})$ for all $\tilde{e} \in \tilde{E} \setminus S$.
    The second inequality follows since $\hat{p}$ is a feasible price vector for \eqref{lyapunov_p} and since $\check{p}$ is a feasible price vector for \eqref{lyapunov_q}.
    Therefore, equality must hold in the equation system.
    This implies that $\hat{p}$ is an optimal solution for \eqref{lyapunov_p} and thus a Walrasian price vector.
    Since $\pmin$ is the minimal Walrasian price vector, it holds that $\pmin(\tilde{e}) \leq \hat{p}(\tilde{e}) = \min \{\pmin(\tilde{e}), q'(\tilde{e})\} \leq q'(\tilde{e}) = q(\tilde{e})$ for $\tilde{e} \in S$.
    By definition of $S$, it follows that $\pmin(e) \leq q(e)$ holds for all $e \in E$ with $\sum_{i \in N} z_i(e) > 0$.
\end{proof}

\begin{proof}[Proof of \Cref{thm:min_packing_is_min_walrasian}]
Let $q$ be a minimal packing price vector and consider a packing allocation $(z_1, \ldots, z_n)$ w.r.t.\ prices $q$.
    By \Cref{lemma:min_walrasian_leq_packing}, we already know that $\pmin(e) \leq q(e)$ for all $e$ with $\sum_{i \in N} z_i(e) > 0$.
    For $e$ with $z_i(e) = 0$ for all $i \in N$, we consider two cases.

    If $z_i'(e) > 0$ for some $z_i' \in \dset_i(q)$ for some buyer $i \in N$, we show that there is $z_i'' \in \dset_i(q)$ with $z''(e)>0$ and $z_i'' \leq z_i + \chi_e$:
    Since both $z_i$ and $z_i'$ are preferred bundles of buyer $i$, they maximize the utility function $u_i$.
    Since the utility function $u_i$ is \Mnat-concave, for $e \in \supp^+(z_i' - z_i)$ one of the following cases holds (see \Cref{def:Mnat_concave}):
    \begin{enumerate}[label=(\alph*)]
        \item It holds that $u_i(z_i') + u_i(z_i) \leq u_i(z_i' - \chi_e) + u_i(z_i + \chi_e)$.
            Hence, equality holds since $z_i'$ and $z_i$ are maximizers of the utility and thus, $z_i'' = z_i+\chi_e$ is also a preferred bundle.
        \item There exists an $f \in \supp^-(z_i' - z_i)$ with $u_i(z_i') + u_i(z_i) \leq u_i(z_i' - \chi_e + \chi_f) + u_i(z_i + \chi_e - \chi_f)$.
            Again, equality must hold and thus $z_i'' = z_i + \chi_e - \chi_f$ is a preferred bundle.
    \end{enumerate}
    In both cases, there exists a packing allocation $\bar{\bz} = (z_1, \dots, z_{i-1}, z_i'', \dots, z_n)$ with respect to prices $q$ with $\sum_{i \in N}\bar z_i(e) > 0$.
    Thus, it follows by \Cref{lemma:min_walrasian_leq_packing} that $\pmin(e) \leq q(e)$.

    If $e \notin \supp^+(z_i')$ for any $z_i' \in \dset_i(q)$ and any buyer $i \in N$, we can reduce the price of $e$ slightly\footnote{Note, that we are talking about preferred bundles which do not need to be minimal.
	Hence, if $p(e)=0$, it can be added to every preferred bundle of every buyer.}, which results in a smaller packing price vector and thus in a contradiction to the choice of~$q$.
\end{proof}

\subsection{Maximal covering prices are Walrasian}
In this section, we establish that maximal covering prices are Walrasian.
More precisely, we show that first of all, there exists a unique component-wise maximal covering price vector, and second that this price vector coincides with the maximal Walrasian price vector.

The main result of this section can be formulated as follows:
\begin{theorem}\label{thm:max_market_clearing_is_max_walrasian}
	Given an instance where all valuation functions are strong gross substitutes, let $\pmax$ be the maximal Walrasian price vector.
	Then $\pmax \geq q$ for any covering price vector $q$.
\end{theorem}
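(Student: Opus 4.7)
The plan is to mirror the proof of \Cref{thm:min_packing_is_min_walrasian} in a symmetric fashion. I will construct an auxiliary Lyapunov function $L''$ whose minimizer is $q$, then combine it with the standard Lyapunov $L$ (minimized at $\pmax$) via the lattice submodularity of the indirect utilities $V_i$ to show that $\pmax \vee q$ is itself Walrasian. Maximality of $\pmax$ will then force $\pmax \vee q = \pmax$, i.e., $q \leq \pmax$.

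First I reduce to the unit-supply case via the copy trick (\Cref{lem:Walrasian-reduction} and \Cref{lem:GS-reduction}), so that every item in $\tilde{E}$ has supply one. Let $(S_1, \ldots, S_n)$ be a covering allocation at $q$ in the reduced market, and define multiplicities $k_{\tilde{e}} \coloneqq \abs{\{i \in N : \tilde{e} \in S_i\}} \geq 1$. Set
\[
    L''(p) \coloneqq \sum_{i \in N} V_i(p) + \sum_{\tilde{e} \in \tilde{E}} k_{\tilde{e}}\, p(\tilde{e}) = \sum_{i \in N}\bigl(V_i(p) + p(S_i)\bigr).
\]
Since $V_i(p) + p(S_i) \geq v_i(S_i)$ always, with equality at $p = q$ (as $S_i \in \dset_i(q)$), $q$ minimizes $L''$ over $p \geq 0$. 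Setting $\hat{p} \coloneqq \pmax \wedge q$ and $\check{p} \coloneqq \pmax \vee q$, the crucial estimate is
\[
    L(\check{p}) + L''(\hat{p}) \leq L(\pmax) + L''(q),
\]
which follows from submodularity of each $V_i$ together with a coordinatewise comparison of the linear terms: when $\pmax(\tilde{e}) > q(\tilde{e})$ both sides of the linear contribution coincide, and when $\pmax(\tilde{e}) \leq q(\tilde{e})$ the per-coordinate difference equals $(b(\tilde{e}) - k_{\tilde{e}})(q(\tilde{e}) - \pmax(\tilde{e})) \leq 0$, using $k_{\tilde{e}} \geq 1 = b(\tilde{e})$. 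Combined with the reverse inequality from $\pmax$ minimizing $L$ and $q$ minimizing $L''$, equality must hold throughout, and in particular $L(\check{p}) = L(\pmax)$, so $\check{p}$ is Walrasian. By maximality of $\pmax$ we get $\check{p} \leq \pmax$, while $\check{p} \geq \pmax$ by construction, so $\pmax = \check{p} \geq q$; this passes back to the multi-supply setting because $\pmax$ and the extension of $q$ both assign a common value to all copies of each item.

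The main obstacle is identifying the correct auxiliary Lyapunov $L''$. Where the packing proof uses coefficient $1$ on the support $S$ of a disjoint allocation (with $\infty$ outside), the covering proof must instead use the multiplicity $k_{\tilde{e}} \geq 1$ of a possibly overlapping covering allocation; it is precisely the relation $k_{\tilde{e}} \geq b(\tilde{e})$, rather than $\leq$, that flips the per-coordinate price inequality in the direction needed to couple the two minimization problems. A minor simplification over the packing proof is that the covering condition guarantees every item lies in some preferred bundle at $q$, so no separate argument is required for items outside the support of the allocation.
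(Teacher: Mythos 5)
Your argument is correct, and it reaches the conclusion by a leaner route than the paper. The paper also reduces to unit-supply and couples two Lyapunov-type functions via lattice submodularity of the $V_i$, but it realizes the ``covering-weighted'' objective by building an auxiliary market $E'$ with $k_e$ physical copies of each item: this requires proving that the lifted valuations $v_i'(S')=v_i(\pi(S'))$ are again gross substitutes (a full case analysis), relating the demand sets of the two markets (\Cref{lemma:connection_demand_sets}), and showing that the uniform extension $q'$ is Walrasian in the expanded market (\Cref{lem:qprime_is_Walrasian}); only then is $q'$ a Lyapunov minimizer there, and the coupling uses a non-uniform meet price $\hat q'$ chosen so that the linear terms match exactly and all slack sits in the $V_i$ terms. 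You bypass all of that by working on the same reduced instance with the weighted function $L''(p)=\sum_i V_i(p)+\sum_{\tilde e}k_{\tilde e}\,p(\tilde e)=\sum_i\bigl(V_i(p)+p(S_i)\bigr)$, whose minimality at $q$ is immediate from $V_i(p)+p(S_i)\ge v_i(S_i)$ with equality at $q$ --- no gross-substitutes lift needed --- and you absorb the multiplicities into a per-coordinate linear inequality $(1-k_{\tilde e})(q-\pmax)\le 0$ instead of the paper's exact equality. The remaining ingredients (submodularity of $V_i$, \Cref{lem:lyapunov}, \Cref{lem:Walrasian-reduction}, \Cref{lem:GS-reduction}, and the fact that a multi-supply covering vector stays covering after copying) are used identically in both proofs. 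What the paper's construction buys in exchange for its extra length is the standalone statements that the lifted valuations are gross substitutes and that $q'$ is Walrasian for the duplicated supply, which may have independent interest; your version buys a substantially shorter proof of \Cref{thm:max_market_clearing_is_max_walrasian} itself, structurally parallel to the packing-side argument in \Cref{lemma:min_walrasian_leq_packing}.
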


\begin{corollary}
	There exists a component-wise maximal covering price vector and it is equal to the maximal Walrasian price vector.
\end{corollary}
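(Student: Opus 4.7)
The plan is to follow the structure of the proof of \Cref{thm:min_packing_is_min_walrasian}, but to work directly with the multi-supply Lyapunov function rather than reducing to unit-supply. Concretely, I will show that for any covering price vector $q$, the join $\pmax \vee q$ is again Walrasian; by maximality of $\pmax$, this forces $\pmax \vee q \leq \pmax$, which combined with $\pmax \vee q \geq q$ yields $q \leq \pmax$.

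The heart of the argument is the following covering-analogue of an optimality property of $q$ ``from above'': if $q$ is covering with witnessing allocation $(z_1,\dots,z_n)$ (so $z_i \in \dset_i(q)$ and $\sum_{i\in N} z_i \geq \U$), then $L(p) \geq L(q)$ for every price vector $p$ with $p \leq q$. To see this, I use the estimate $V_i(p) \geq v_i(z_i) - \pr{p}{z_i} = V_i(q) + \pr{q-p}{z_i}$, where the equality exploits $z_i \in \dset_i(q)$. Summing over $i \in N$ and using $q - p \geq \bO$ together with the covering inequality $\sum_i z_i \geq \U$, I obtain $\sum_i V_i(p) \geq \sum_i V_i(q) + \pr{q-p}{\U}$, which rearranges to $L(p) \geq L(q)$.

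The proof then finishes by a standard submodularity argument. The map $p \mapsto V_i(p)$ is lattice submodular (as already used in the proof of \Cref{lemma:min_walrasian_leq_packing}), so the Lyapunov function $L$ is submodular, and
\[
L(\pmax) + L(q) \geq L(\pmax \wedge q) + L(\pmax \vee q).
\]
Applying the previous step with $p = \pmax \wedge q \leq q$ gives $L(\pmax \wedge q) \geq L(q)$, and hence $L(\pmax \vee q) \leq L(\pmax)$. Since $\pmax$ minimizes $L$ by \Cref{lem:lyapunov}, this must hold with equality, so $\pmax \vee q$ also minimizes $L$ and is therefore Walrasian. Maximality of $\pmax$ now gives $\pmax \vee q = \pmax$, i.e.\ $q \leq \pmax$.

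The only delicate point is keeping the inequality directions straight when exploiting covering versus packing: the packing version of the key lemma yields $L(p) \geq L(q)$ for $p \geq q$ (so one takes a meet with $\pmin$ to reduce $L$), while the covering version yields the same for $p \leq q$ (so one takes a join with $\pmax$). Beyond getting these inequalities right, the proof is routine; in fact, the same style of argument gives a shorter direct proof of \Cref{thm:min_packing_is_min_walrasian} as well.
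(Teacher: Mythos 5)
Your proof is correct, but it takes a genuinely different and noticeably shorter route than the paper. The paper derives the corollary from \Cref{thm:max_market_clearing_is_max_walrasian}, which it proves by reducing to unit-supply (\Cref{lem:Walrasian-reduction}), duplicating each item according to its multiplicity $k_e$ in the covering allocation, verifying that the duplicated valuations $v_i'$ are again gross substitutes, showing via \Cref{lemma:connection_demand_sets} and \Cref{lem:qprime_is_Walrasian} that $q'$ is Walrasian for the duplicated supply, and only then running a submodularity argument that mixes the two Lyapunov functions $L$ and $L'$ with carefully chosen meet/join vectors. You bypass the duplication entirely with the observation that a covering price vector $q$ is a ``minimizer of $L$ from below'': for any $p\leq q$, the chain $V_i(p)\geq v_i(z_i)-\pr{p}{z_i}=V_i(q)+\pr{q-p}{z_i}$, summed over $i$ and combined with $\sum_i z_i\geq \U$ and $q-p\geq \mathbf{0}$, gives $L(p)\geq L(q)$; then lattice submodularity of $L$ (which the paper itself invokes, and which holds directly in the multi-supply setting since $L$ is even $\Lnat$-convex) plus \Cref{lem:lyapunov} yields that $\pmax\vee q$ minimizes $L$, hence is Walrasian, hence $q\leq\pmax$. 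Together with the trivial fact that $\pmax$, being Walrasian, is itself covering, this gives the corollary; and, as you note, the symmetric ``minimizer from above'' argument for packing prices shortens the proof of \Cref{thm:min_packing_is_min_walrasian} and \Cref{lemma:min_walrasian_leq_packing} as well. What the paper's longer detour buys is mainly the structural byproduct that covering prices become genuinely Walrasian once the supply is inflated to the multiplicities used by the covering allocation; your argument does not exhibit such an instance, but for the stated monotonicity consequences it is fully sufficient and arguably cleaner.
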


The remainder of this section is dedicated to the proof of \Cref{thm:max_market_clearing_is_max_walrasian}.
By \Cref{lem:Walrasian-reduction}, it suffices to establish \Cref{thm:max_market_clearing_is_max_walrasian} in the unit-supply setting because for any covering vector $q$ in the multi-supply setting, $q$ will also be covering in the corresponding unit-supply instance.

Fix a unit-supply instance with item set $E$ and buyers $i \in N$ with gross substitutes valuation functions $v_i\colon 2^E \to \Z_{+}$.
Denote the maximal Walrasian price vector by $\pmax$ and let $q$ be covering.
Fix preferred bundles $(S_i)_{i \in N}$ for prices $q$ with $\bigcup_{i \in N} S_i = E$.
For $e \in E$, let $k_e \coloneqq |\{i \in N \mid e \in S_i\}|$ denote the multiplicity of item $e$ among the bundles $(S_i)_{i \in N}$.

Our strategy to prove \Cref{thm:max_market_clearing_is_max_walrasian} is to construct an instance in which we have $k_e$ copies of each item, and argue that $q$ will be Walrasian prices for that instance (assigning to each copy of $e$ a price of $q(e)$).

To this end, let $E' \coloneqq \{e'_1, \ldots, e'_{k_e} \mid e \in E\}$ and let $\pi\colon E' \to E, e'_i \mapsto e$ be the projection of $E'$ onto $E$.
We define new valuation functions $v'_i\colon E' \to \Z_{+}$ for $i \in N$ by setting
\[
     v'_i(S') \coloneqq v_i(\pi(S')).
\]
For a price vector $p \in \R_+^E$ [$p' \in \R_+^{E'}$], we denote the set of preferred bundles of buyer $i$ w.r.t. prices $p$ [prices $p'$] by $\dset_i(p)$ [$\dset'_i(p')$].

For a price vector $p' \in \R_+^{E'}$, we denote by $p^\downarrow \in \R_+^E$ the price vector with
\[
    p^\downarrow(e) \coloneqq \min \{p'(e'_i) \mid i \in \{1,\dots, k_e \} \}.
\]
\begin{lemma}\label{lemma:connection_demand_sets}
	Let $p'\in \R_+^{E'}$ and $S' \subseteq E'$.
	  Then $S' \in \dset_i'(p')$ for an $i \in N$ if and only if
	\begin{itemize}
		\item $\pi(S') \in \dset_i(p^\downarrow)$,
		\item $S'$ only contains cheapest copies of each item, and
		\item whenever $S'$ contains multiple copies of an item, all of them have price $0$.
	\end{itemize}
\end{lemma}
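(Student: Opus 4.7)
The proof plan is a direct application of the definition $v'_i(S') = v_i(\pi(S'))$, together with two elementary swap/drop arguments. The key identity to aim for is $p'(S') = p^\downarrow(\pi(S'))$ whenever the three stated conditions hold, because then the utility in the primed and the unprimed instance agree on such sets. First I would record that for any $T' \subseteq E'$ one has $v'_i(T') - p'(T') = v_i(\pi(T')) - p'(T')$ and, since $p'(T')$ is at least $p^\downarrow(e)$ for every $e \in \pi(T')$ (with equality only when $T'$ contains exactly one cheapest copy of $e$, or several copies of an item whose cheapest copy has price $0$), the inequality $p'(T') \geq p^\downarrow(\pi(T'))$.

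For the forward direction, assume $S' \in \dset'_i(p')$. To establish the second bullet, suppose $e'_j \in S'$ while a strictly cheaper copy $e'_k$ of the same item is not in $S'$. Then $\pi((S' \setminus \{e'_j\}) \cup \{e'_k\}) = \pi(S')$, so the valuation is unchanged while the total price strictly decreases, contradicting optimality of $S'$. For the third bullet, if $S'$ contains copies $e'_{j_1}, \dots, e'_{j_t}$ of the same item with $t \geq 2$, removing any single $e'_{j_r}$ leaves the projection (and hence the valuation) unchanged and drops the price by $p'(e'_{j_r})$; optimality forces $p'(e'_{j_r}) = 0$ for every such $r$. These two conditions together imply $p'(S') = p^\downarrow(\pi(S'))$. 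For the first bullet, pick any $T \subseteq E$ and lift it to $T' \subseteq E'$ by selecting one cheapest copy of each $e \in T$; then $v'_i(T') = v_i(T)$ and $p'(T') = p^\downarrow(T)$, so optimality of $S'$ in the primed instance yields
\[
    v_i(\pi(S')) - p^\downarrow(\pi(S')) \;=\; v'_i(S') - p'(S') \;\geq\; v'_i(T') - p'(T') \;=\; v_i(T) - p^\downarrow(T),
\]
proving $\pi(S') \in \dset_i(p^\downarrow)$.

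For the reverse direction, assume all three conditions. As noted, they give $p'(S') = p^\downarrow(\pi(S'))$. For an arbitrary $T' \subseteq E'$, combining the initial observation with $\pi(S') \in \dset_i(p^\downarrow)$ gives
\[
    v'_i(T') - p'(T') \;\leq\; v_i(\pi(T')) - p^\downarrow(\pi(T')) \;\leq\; v_i(\pi(S')) - p^\downarrow(\pi(S')) \;=\; v'_i(S') - p'(S'),
\]
so $S' \in \dset'_i(p')$.

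I do not anticipate any serious obstacle here: the statement is essentially bookkeeping about how price and valuation behave under the duplication map $\pi$. The only place requiring a little care is verifying the equation $p'(S') = p^\downarrow(\pi(S'))$ correctly under the two conditions on cheapest copies and zero-price duplicates, which handles both the case of exactly one copy per item in $S'$ and the case of several copies sharing price $0$. Everything else reduces to substitution and a single swap/drop argument.
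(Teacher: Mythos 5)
Your proof is correct and follows essentially the same route as the paper: both rest on the key inequality $p'(T') \geq p^\downarrow(\pi(T'))$ for arbitrary $T' \subseteq E'$ and on lifting a preferred bundle of the unprimed instance by choosing one cheapest copy per item. The only difference is presentational: you derive the cheapest-copy and zero-price-duplicate conditions via explicit swap/drop arguments, whereas the paper obtains them (together with $\pi(S') \in \dset_i(p^\downarrow)$) as the tightness conditions of a single chain of inequalities.
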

\begin{proof}
	Let $p' \in \R^{E'}_+$ and let $S' \subseteq E'$ be an arbitrary set and $S \coloneqq \pi(S')$ its projection.
    Further, let $T \in \dset_i(p^\downarrow)$ and let $T'$ arise from $T$ by picking a cheapest copy of each item in $T$.
	Then
	\begin{align*}
		v'_i(S') - p'(S') &= v_i(S) - p'(S') \stackrel{(*)}{\leq} v_i(S)-p^\downarrow(S) \stackrel{(**)}{\leq} v_i(T)-p^\downarrow(T) = v'_i(T')-p'(T').
	\end{align*}
    The first inequality holds by definition of $p^\downarrow$ and the second inequality holds by definition of $T$.
	This chain of inequalities shows that $T' \in \dset_i'(p')$ because $S' \subseteq E'$ was chosen arbitrarily.
    Knowing this, we have $S' \in \dset'_i(p')$ if and only if the outer inequality is tight, which is the case if and only if $(*)$ and $(**)$ are tight.

	$(*)$ is tight if and only if $p'(S') = p^\downarrow(S)$, i.e., $S'$ only contains cheapest copies of any item and whenever $S$ contains multiple copies of an item, all of them have price $0$.
    $(**)$ is tight if and only if $S \in \dset_i(p^\downarrow)$.
\end{proof}
\begin{lemma}
    The valuation function $v'_i$ is gross substitutes for each $i \in N$.
\end{lemma}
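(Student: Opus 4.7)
The plan is to take any two price vectors $p' \le q'$ on $E'$ and any $x' \in \dset'_i(p')$, and produce a $y' \in \dset'_i(q')$ with $x'(e') \le y'(e')$ for every $e' \in E'$ with $p'(e') = q'(e')$. The key idea is to lift the problem to the base instance via the projection $\pi$, apply gross substitutes of $v_i$ there, and then pull the result back by choosing copies carefully. First I would observe that $p' \le q'$ implies $p^\downarrow \le q^\downarrow$ and, by \Cref{lemma:connection_demand_sets}, $\pi(x') \in \dset_i(p^\downarrow)$. Then, applying the gross substitutes property of $v_i$ to $p^\downarrow \le q^\downarrow$, I obtain $T \in \dset_i(q^\downarrow)$ with $e \in T$ for every $e \in \pi(x')$ such that $p^\downarrow(e) = q^\downarrow(e)$.

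Next I would construct $y'$ by choosing, for each $e \in T$, a set of copies lying over $e$. Concretely, for each $e \in T$ define
\[
D_e \coloneqq \{\,e'_j \in x' \cap \pi^{-1}(e) \mid p'(e'_j) = q'(e'_j)\,\};
\]
if $D_e \neq \emptyset$ take these copies, and otherwise take any single cheapest copy of $e$ under $q'$. A crucial observation is that if $e'_j \in x'$ satisfies $p'(e'_j) = q'(e'_j)$, then $e'_j$ is cheapest under $p'$ (since $x' \in \dset'_i(p')$), so $p^\downarrow(e) = p'(e'_j) = q'(e'_j) \ge q^\downarrow(e) \ge p^\downarrow(e)$ forces $p^\downarrow(e) = q^\downarrow(e)$ and makes $e'_j$ also cheapest under $q'$. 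Consequently $e = \pi(e'_j) \in T$ by the gross substitutes property applied above, and $e'_j \in D_e \subseteq y'$, which gives the required inequality $x'(e') \le y'(e')$ on the price-preserving copies.

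Finally I would verify that $y' \in \dset'_i(q')$ via the three-part characterization of \Cref{lemma:connection_demand_sets}. By construction $\pi(y') = T \in \dset_i(q^\downarrow)$, which handles the first condition. The second condition (only cheapest copies) holds because each copy chosen lies in some $D_e$ with $q'$-price equal to $q^\downarrow(e)$, or else was picked explicitly as a cheapest copy. The third condition (multiple copies only when the price is zero) is the most delicate point: if $|y' \cap \pi^{-1}(e)| \geq 2$ then $|D_e| \ge 2$, hence $|x' \cap \pi^{-1}(e)| \geq 2$, so applying the same characterization to $x' \in \dset'_i(p')$ gives $p'(e'_j) = 0$ for all those copies, and by $p'(e'_j) = q'(e'_j)$ (from the definition of $D_e$) the $q'$-prices of these copies are also zero.

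The proof is essentially bookkeeping once the construction is set up; the main obstacle I expect is ensuring the third condition of the characterization is preserved, which requires noticing that every copy we deliberately add on top of a single cheapest representative comes from $x'$ itself and therefore inherits the zero-price property from $x'$'s own multiplicity constraint. No further machinery beyond the gross substitutes property of $v_i$ and \Cref{lemma:connection_demand_sets} should be needed.
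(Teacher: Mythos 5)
Your proof is correct, but it takes a genuinely different route from the paper. The paper never touches prices at all: it invokes the equivalence of gross substitutes with \Mnat-concavity and verifies the exchange axiom \ref{i:M1}/\ref{i:M2} for $v'_i$ directly from that of $v_i$, with a case analysis on whether the projected item $e=\pi(e')$ already lies in $\pi(Y')$ (using monotonicity of $v_i$ to absorb the fact that $\pi$ may merge copies). You instead verify the Kelso--Crawford price-based definition head-on: lift $p'\le q'$ to $p^\downarrow\le q^\downarrow$, apply gross substitutes of $v_i$ there, and pull the resulting bundle $T$ back to $E'$ by keeping exactly the price-preserved copies from $x'$ and otherwise one cheapest copy, then check membership in $\dset'_i(q')$ via the three-part characterization of \Cref{lemma:connection_demand_sets}. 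Your key observation (a copy of $x'$ with unchanged price is cheapest under both $p'$ and $q'$, forces $p^\downarrow(e)=q^\downarrow(e)$, hence $e\in T$) and the zero-price argument for multiplicities are exactly the points that make the pull-back work, and they hold as you state them. What each approach buys: the paper's argument is self-contained at the level of valuations and needs no demand-set machinery, but it relies on the GS/\Mnat-concavity equivalence (\Cref{lemma:gs_is_Mnat-convave}) and monotonicity of $v_i$; yours stays entirely in the price--demand language, reuses the already-proved correspondence lemma, and avoids the exchange-axiom case analysis, at the cost of more delicate bookkeeping. One small point to tidy up: \Cref{lemma:connection_demand_sets} is stated for nonnegative price vectors, while \Cref{def:GS} quantifies over all of $\R^{E'}$; since the lemma's proof never uses nonnegativity, this is cosmetic, but you should either note it or state the lemma for arbitrary real prices before using it.
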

\begin{proof}
    By \Cref{lem:GS-Mconv}, it suffices to establish that $v'_i$ is \Mnat-concave for each $i \in N$.

	Let $X', Y' \subseteq E'$ and let $X \coloneqq \pi(X')$ and $Y \coloneqq \pi(Y')$.
	Let further $e' \in X' \setminus Y'$ and define $e \coloneqq \pi(e')$.
    Then $e \in X$.

    \begin{description}[leftmargin=2ex]
        \item[Case 1:] It holds that $e \notin Y$.\\
            Then, as $v_i$ is gross substitutes, it is \Mnat-concave (see \Cref{lem:GS-Mconv}), so at least one of the equations \ref{i:M1} or \ref{i:M2} holds:

            \begin{description}[leftmargin=2ex]
                \item[Case 1.1:] Equation \ref{i:M1} holds, i.e., $v_i(X) + v_i(Y) \leq v_i(X \setminus \{e\}) + v_i(Y \cup \{e\})$.\\
                    As $X \setminus \{e\} \subseteq \pi(X' \setminus \{e'\})$ and $\pi(Y' \cup \{e'\}) = Y \cup \{e\}$, it holds by monotonicity of $v_i$ that
                    $v'_i(X') + v'_i(Y') = v_i(X) + v_i(Y) \leq v_i(X \setminus \{e\}) + v_i(Y \cup \{e\}) \leq v'_i(X' \setminus \{e'\}) + v'_i(Y' \cup \{e'\}).$
		          \item[Case 1.2:] Equation \ref{i:M2} holds, i.e., $v_i(X) + v_i(Y) \leq v_i(X \setminus \{e\} \cup \{f\}) + v_i(Y \setminus \{f\} \cup \{e\})$ for an $f \in Y \setminus X$.
                    Let $f' \in Y'$ with $\pi(f') = f$.
                    Then $f' \in Y' \setminus X'$.
                    Moreover, $X \setminus \{e\} \cup \{f\} \subseteq \pi(X' \setminus \{e'\} \cup \{f'\})$ and $Y \setminus \{f\} \cup \{e\} \subseteq \pi(Y' \setminus \{f'\} \cup \{e'\})$.
                    Hence, by monotonicity of $v_i$, we get
		              \begin{align*}
                        v'_i(X') + v'_i(Y') &= v_i(X) + v_i(Y) \leq v_i(X \setminus \{e\} \cup \{f\}) + v_i(Y \setminus \{f\} \cup \{e\})\\
	       		          &\leq v'_i(X' \setminus \{e'\} \cup \{f'\}) + v'_i(Y' \setminus \{f'\} \cup \{e'\}).
                    \end{align*}
            \end{description}
            Hence, either \ref{i:M1} or \ref{i:M2} holds for $v_i'$.
        \item[Case 2:] It holds that $e \in Y$.
            \begin{description}[leftmargin=2ex]
                \item[Case 2.1:] There exists an $e'' \in X' \setminus \{e'\}$ with $\pi(e'') = e$.\\
                Then we have $\pi(X' \setminus \{e'\}) = \pi(X')=X$ and $\pi(Y' \cup \{e'\}) = \pi(Y')=Y$.
                This yields
                \[
                    v'_i(X') + v'_i(Y') = v_i(X) + v_i(Y)
                    = v'_i(X' \setminus \{e'\}) + v'_i(Y' \cup \{e'\}).
                \]
    	        \item[Case 2.2:] There is no $e'' \in X' \setminus \{e'\}$ with $\pi(e'') = e$.\\
                    As $e \in Y$, there exists an $f' \in Y'$ with $\pi(f') = e$.
                    It holds that $e' \notin Y'$, so $f' \neq e'$.
					In particular, it follows that $f' \notin X'$.
                    Hence, $f' \in Y' \setminus X'$.
                    We further have $\pi(X' \setminus \{e'\} \cup \{f'\}) = X$ and $\pi(Y' \setminus \{f'\} \cup \{e'\}) = Y$.
                    This results in
    	            \[
                        v'_i(X') + v'_i(Y') = v_i(X) + v_i(Y)
	       	            = v'_i(X' \setminus \{e'\} \cup \{f'\}) + v'_i(Y' \setminus \{f'\} \cup \{e'\}).
                    \]
            \end{description}
            Hence, also either \ref{i:M1} or \ref{i:M2} holds for $v_i'$.
	\end{description}
    Thus, $v_i'$ is \Mnat-convex and thus gross substitutes by \Cref{lem:GS-Mconv}.
\end{proof}

\begin{lemma}
    \label{lem:qprime_is_Walrasian}
    Let $q \in \R^E_+$ be a covering price vector.
	Then $q' \in \R^{E'}_+$ with $q'(e') \coloneqq q(\pi(e'))$ for $e' \in E'$ is a Walrasian price vector with respect to supply $E'$ and valuation functions $v'_i$ for $i \in N$.
\end{lemma}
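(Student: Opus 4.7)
The plan is to prove that $q'$ is Walrasian by exhibiting an explicit allocation $(S'_i)_{i \in N}$ of $E'$ such that each $S'_i$ is a preferred bundle of buyer $i$ at prices $q'$. The construction is the obvious one dictated by the very definition of the multiplicities $k_e$: for every $e \in E$, there are exactly $k_e$ buyers $i$ with $e \in S_i$ and exactly $k_e$ copies $e'_1, \dots, e'_{k_e}$ of $e$ in $E'$, so we can fix an arbitrary bijection between these two sets and define $S'_i$ to contain, for each $e \in S_i$, the copy of $e$ assigned to buyer $i$. By construction, $(S'_i)_{i \in N}$ is a partition of $E'$ (every copy of every item is handed out exactly once), and $\pi(S'_i) = S_i$.

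It then remains to verify that $S'_i \in \dset'_i(q')$ for every $i \in N$, and here I would invoke the preceding lemma characterizing $\dset'_i(p')$. First observe that since $q'(e') = q(\pi(e'))$ for all $e' \in E'$, we have $(q')^\downarrow = q$: for any $e \in E$, all copies $e'_1,\dots,e'_{k_e}$ carry the same price $q(e)$, so their minimum is $q(e)$. Consequently, the three conditions of the lemma become: (i) $\pi(S'_i) = S_i \in \dset_i(q)$, which holds by the choice of the $S_i$; (ii) $S'_i$ contains only cheapest copies of each item in $\pi(S'_i)$, which is automatic because all copies of any fixed item have the same price under $q'$; and (iii) $S'_i$ never contains two copies of the same item, which is immediate from the construction.

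Combining these two ingredients yields that $(S'_i)_{i \in N}$ is an allocation in which each buyer receives a preferred bundle and every copy in $E'$ is allocated exactly once, which is precisely the definition of a Walrasian price vector for the supply $E'$ with valuation functions $v'_i$.

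I do not expect any real obstacle here: once the previous lemma is in hand and the identity $(q')^\downarrow = q$ is noted, the rest is a bookkeeping argument about matching copies to owners. The only subtlety worth being explicit about is that "partition" holds because the number of copies of $e$ was defined as $k_e = |\{i \in N \mid e \in S_i\}|$, so the bijection between copies and owning buyers exists by counting; writing down this bijection cleanly is the main thing to do.
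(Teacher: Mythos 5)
Your proof is correct and follows essentially the same route as the paper: partition $E'$ into bundles $(S'_i)_{i\in N}$ with $\pi(S'_i)=S_i$ (possible exactly because $k_e$ counts the owning buyers), and then apply the characterization of $\dset'_i(q')$ together with the observation $(q')^\downarrow = q$ to conclude $S'_i \in \dset'_i(q')$. Your write-up merely spells out the bijection and the three conditions of that lemma in more detail than the paper does; no gap.
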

\begin{proof}
    By our definition of the numbers $(k_e)_{e \in E}$, we can partition $E'$ into sets $(S'_i)_{i \in N}$ with $\pi(S'_i) = S_i$ and $|S'_i| = |S_i|$ for $i \in N$.
    By \Cref{lemma:connection_demand_sets}, we have $S'_i \in \dset'_i(q')$ for all $i \in N$.
\end{proof}

Now, we are ready to prove the main theorem of this section.

\begin{proof}[Proof of \Cref{thm:max_market_clearing_is_max_walrasian}]
    As mentioned before, it suffices to show this statement in the unit-supply setting, i.e., we assume $\U(e)=1$ for all $e \in E$.
    Let $\pmax$ be the maximal Walrasian price vector and let $q$ be any covering price vector.

    In the following, we will use that Walrasian prices are minimizers of the Lyapunov function (see \Cref{lem:lyapunov}).
    We distinguish the settings by using $L$ for the Lyapunov function with respect to $E$ and $v_i$ for all $i \in N$, and $L'$ for the Lyapunov function with respect to $E'$ and $v'_i$.
    Analogously, let $V_i(p) = \max_{S \subseteq E} v_i(S) - p(S)$ and $V'_i(p') = \max_{S' \subseteq E'} v'_i(S') - p'(S')$.

    We know that $\pmax$ is a minimizer of the Lyapunov function in the original setting, i.e., it minimizes
    \[
        L(p) = \sum_{i \in N} V_i(p) + p(E).
    \]
    Moreover, $q'$ with $q'(e') \coloneqq q(\pi(e'))$ for $e' \in E'$ is a Walrasian price vector for $E'$ and $(v'_i)_{i \in N}$ (see \Cref{lem:qprime_is_Walrasian}), i.e., it minimizes
    \[
        L'(p') = \sum_{i \in N} V'_i(p') + p'(E').
    \]
    Note that $q^\downarrow = q$ by definition of $q'$.
	By \Cref{lemma:connection_demand_sets}, we have
    \[
        V_i'(q') = \max_{S' \subseteq E'} v'_i(S') - q'(S') = \max_{S \subseteq E} v_i(S) - q^\downarrow(S) = V_i(q^\downarrow) = V_i(q)
    \]
    for all $i \in N$.
    Thus,
    \[
        L'(q') = \sum_{i \in N} V_i(q) + q'(E').
    \]
    Define $\check{p} \in \R_+^E$ and $\hat{q}' \in \R_+^{E'}$ via
    \[
        \check{p}(e) \coloneqq \max \{\pmax(e), q(e)\} \qquad \text{ and } \qquad
        \hat{q}'(e'_\ell) \coloneqq \begin{cases}
        	\min \{\pmax(e), q(e)\} & \text{for }\ell = 1,\\
            q(e) & \text{otherwise}.
        \end{cases}
    \]
    This yields
    \begin{align*}
        L(\pmax) + L'(q') &= \sum_{i \in N} V_i(\pmax) + \pmax(E) + \sum_{i \in N} V_i(q) + q'(E')\\
        &= \sum_{i \in N} \left( V_i(\pmax) + V_i(q) \right) + \sum_{e \in E} \left( \pmax(e) + k_e \cdot q(e) \right)\\
        &\geq \sum_{i \in N} \left( V_i(\max \{\pmax, q\}) + V_i(\min \{\pmax, q\}) \right)\\
            &\phantom{\geq}\ + \sum_{e \in E} \left( \max \{\pmax(e), q(e)\} + \min \{\pmax(e), q(e)\} + (k_e-1) \cdot q(e) \right)\\
        &= \sum_{i \in N} V_i(\check{p}) + V_i(\hat{q}'^\downarrow) + \check{p}(E) + \hat{q}'(E')\\
        &= L(\check{p}) + L'(\hat{q}'),
    \end{align*}
    where in the first inequality, we used that $V_i$ is lattice submodular (cf. \citep[Theorem 10]{ausubel2002ascending}).
    Hence, $\check{p}$ and $\hat{q}'$ are minimizers of $L$ and $L'$, respectively, as well.
    Thus, $\check{p}$ is Walrasian.
    As $\pmax$ is the maximal Walrasian price vector, $q(e) \leq \max \{\pmax(e), q(e)\} = \check{p}(e) \leq \pmax(e)$ for all $e \in E$.
    This yields $q \leq \pmax$.
\end{proof}

\section{Monotonicity analysis}\label{sec:sensitivity}
In this section, we show monotonicity results in supply and demand for the (unique) buyer-optimal, as well as for the (again unique) seller-optimal Walrasian prices, provided that all buyers' valuations are strong gross substitutes.
Independently, \citet{raach2024monotonicity} proved the same monotonicity results.
We also provide an example showing that monotonicity cannot be guaranteed if the valuation functions are changed.
We model the decrease in supply by restricting the market to less copies of items $b' \leq b$ with $b'(e) < b(e)$ for at least one $e \in E$ and by restricting all valuations to $v_i' = v_i|_{[{\bf0},\U']_\Z}$.

The decrease in demand is modeled as item-truncation, i.e., we assume that a buyer $i$ is only interested in $d_i$ items in total.
Since it is known that strong gross substitutes functions are closed under item-truncation (cf.~\citep{collina2020approximability}\footnote{\citet{collina2020approximability} showed it in the unit-supply setting.
The statement for the multi-supply setting follows by copying the items as presented in \Cref{sec:reduction}.}), all valuation functions remain (strong) gross substitutes after this operation.
Formally, this leads to the following definition.
\begin{definition}
    Let $v_i \colon \rng \to \Z_+$ be a valuation function and $d \in \Z^N_+$.
    The \emph{truncation} $v_i^d$ of buyer $i$ to demand $d$ is given by $v_i^d(z) \coloneqq \max\{v_i(y) \mid y \leq z, \sum_{e\in E} y(e) \leq d_i\}$.
\end{definition}

We write $\dset_i^d(p)$ [respectively $\DM_i^d(p)$ and $\dhat_i^d(p)$] to denote the set of [minimal/maximal] preferred bundles of buyer $i$ at prices $p$ with valuation function $v_i^d$.
Note that if we choose $d_i$ large enough, it holds that $v_i(z) = v_i^{d}(z)$ for all $z \in \rng$.
Hence, without loss of generality, we assume that each buyer indeed has a demand and denote the demand vector by $d = (d_1, \dots, d_n)$.

The monotonicity properties we want to show are the following:
\begin{itemize}
    \item \emph{in supply:}
        Let $p$ be the buyer-optimal [seller-optimal] Walrasian price vector.
        If the supply of some items decreases, i.e., if there are only $b' \leq b$ items to be sold, then the corresponding buyer-optimal [seller-optimal] Walrasian prices $p'$ for items $e \in E' = \{ f \in E \mid \U'(f) >0 \}$ will only increase, i.e., $p(e) \leq p'(e)$ for all $e \in E'$.
    \item \emph{in demand:}
        Consider two demand vectors $d'$ and $d$ in $\Z_+^N$ with $d' \leq d$, and let $p$ and $p'$ be the buyer-optimal [seller-optimal] Walrasian price vectors if the valuation functions $v_i\colon \rng \to \Z_+$ of the buyers $i$ are truncated at demands $d$ and $d'$, respectively.
        Then the buyer-optimal [seller-optimal] Walrasian prices $p'$ will only decrease if the demand decreases, i.e., $p' \leq p$.
\end{itemize}

\begin{theorem}\label{thm:monotonicity}
    Monotonicity in supply and demand with respect to the minimal and the maximal Walrasian price vector can be guaranteed if all buyers' valuations are strong gross substitutes.
\end{theorem}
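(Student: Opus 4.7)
The plan is to deduce all four monotonicity claims from the packing/covering characterizations of Walrasian prices, namely \Cref{thm:min_packing_is_min_walrasian} (minimal Walrasian $=$ minimal packing) and \Cref{thm:max_market_clearing_is_max_walrasian} (maximal Walrasian $=$ maximal covering). By \Cref{lem:Walrasian-reduction} it is enough to work in the unit-supply setting: a supply decrease amounts to deleting a single copy $e_0\in\tilde E$, while a demand decrease amounts to replacing $v_i$ by its item-truncation $v_i^{d'_i}$ (which remains strong gross substitutes by the truncation closure result cited in the paper). By iteration it suffices to treat one atomic change at a time.

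Supply side. For $\pmin$, set $\pi' := \pmin^{\tilde E\setminus\{e_0\}}$ and extend it to $\tilde E$ by $\pi'(e_0) := \max_{i\in N} v_i(\tilde E) + 1$. Submodularity of each $v_i$ bounds the marginal value of $e_0$ in any bundle by $v_i(\{e_0\}) \leq v_i(\tilde E) < \pi'(e_0)$, so no buyer ever includes $e_0$ in a preferred bundle at the extended prices. Consequently the Walrasian allocation in the smaller market is a valid packing allocation in the enlarged market (with $e_0$ left unsold), and \Cref{thm:min_packing_is_min_walrasian} yields $\pmin^{\tilde E}(e) \leq \pi'(e)$ for every $e\neq e_0$. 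For $\pmax$, let $(S_i)$ be a Walrasian allocation at $\pmax^{\tilde E}$; removing $e_0$ is equivalent to raising its price to infinity, so the (strong) gross substitutes property provides, for each buyer $i$, a preferred bundle $T_i$ in the $(\tilde E\setminus\{e_0\})$-market with $T_i \supseteq S_i\setminus\{e_0\}$. Their union still covers $\tilde E\setminus\{e_0\}$, so $\pmax^{\tilde E}|_{\tilde E\setminus\{e_0\}}$ is covering in the new market, and \Cref{thm:max_market_clearing_is_max_walrasian} yields $\pmax^{\tilde E\setminus\{e_0\}}(e) \geq \pmax^{\tilde E}(e)$ for every $e\neq e_0$.

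Demand side. For $\pmin$, I show that $\pmin^d$ is itself packing in the $d'$-market. Given a packing allocation $(S_i)$ at $\pmin^d$ in the $d$-market, for each buyer $i$ I construct $T_i\subseteq S_i$ with $|T_i|\leq d'_i$ and $T_i\in \dset^{v_i^{d'}}(\pmin^d)$ as follows: pick any $T\in \dset^{v_i^{d'}}(\pmin^d)$, and while $T\not\subseteq S_i$ choose $e\in T\setminus S_i$ and apply \ref{i:M1}/\ref{i:M2} of $v_i$ to the pair $(T,S_i)$. Writing $u(\cdot)=v_i(\cdot)-\langle\pmin^d,\cdot\rangle$ and using the bounds $u(T)=\max_{|z|\leq d'_i}u(z)$ and $u(S_i)=\max_{|z|\leq d_i}u(z)$, the exchanged bundles of sizes $\leq d'_i$ and $\leq d_i$ cannot strictly improve on these maxima, which forces the M-natural-concavity inequality to be tight and produces either a strictly smaller preferred bundle (case \ref{i:M1}) or a same-size preferred bundle swapping $e$ for some $f\in S_i\setminus T$ (case \ref{i:M2}); iterating drives $T$ into $S_i$. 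The resulting $T_i\subseteq S_i$ inherit disjointness from $(S_i)$, so \Cref{thm:min_packing_is_min_walrasian} gives $\pmin^{d'}\leq\pmin^d$. The dual argument grows each bundle of a covering allocation at $\pmax^{d'}$ in the $d'$-market into a preferred bundle in the $d$-market via the same exchange mechanism, yielding a covering allocation in the $d$-market and, via \Cref{thm:max_market_clearing_is_max_walrasian}, $\pmax^{d'}\leq\pmax^d$.

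The chief obstacle lies in the exchange step of the demand argument: one must verify that after each application of \ref{i:M1} or \ref{i:M2} the updated bundle still lies in $\dset^{v_i^{d'}}(\pmin^d)$, which requires a careful case distinction on whether the truncation constraint $|S_i|\leq d_i$ is tight and which of the two axioms is available. When \ref{i:M2} applies, both exchanged bundles preserve their sizes and the tightness of the inequality is immediate; when only \ref{i:M1} is available, one uses that the resulting size change stays within the relevant feasible range or can be absorbed by a further exchange. The opposite directions in \Cref{thm:monotonicity} (supply increase, demand increase) then follow by interchanging the roles of the old and new markets.
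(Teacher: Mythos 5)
Your supply-side arguments (for both $\pmin$ and $\pmax$) are correct and essentially identical to the paper's: reduce to unit supply via \Cref{lem:Walrasian-reduction}, price the deleted copy prohibitively high, and invoke \Cref{thm:min_packing_is_min_walrasian} resp.\ \Cref{thm:max_market_clearing_is_max_walrasian}. The demand side, however, is where you diverge from the paper, and it is exactly there that your proof has a genuine gap. You want, for each buyer $i$, a bundle $T_i\subseteq S_i$ with $T_i\in\dset_i^{d'}(\pmin^d)$, and you try to produce it by repeatedly applying \ref{i:M1}/\ref{i:M2} of $v_i$ to a pair $(T,S_i)$ where $T$ is optimal for the $d'_i$-truncation and $S_i$ is optimal for the $d_i$-truncation. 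The tightness argument that keeps $T$ optimal after an exchange relies on being able to say that the partner bundle cannot improve: in case \ref{i:M2} this works because both exchanged bundles keep their cardinalities, but in the case where only \ref{i:M1} is available you need $u_i(S_i+\chi_e)\leq u_i(S_i)$, which is \emph{not} available when the cardinality cap is tight, $|S_i|=d_i$, since $S_i+\chi_e$ is outside the feasible range over which $S_i$ is optimal. In that situation \ref{i:M1} only tells you $u_i(S_i+\chi_e)\geq u_i(S_i)+\bigl(u_i(T)-u_i(T-\chi_e)\bigr)$, which forces nothing; neither $T-\chi_e$ nor any swap is certified to remain a preferred bundle of $v_i^{d'}$, so the induction can stall. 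Your sentence that the size change ``stays within the relevant feasible range or can be absorbed by a further exchange'' is precisely the missing argument, not a proof of it; ruling this case out (or showing a different exchange always exists) is nontrivial — I could verify it only in very small configurations, and nothing in \Cref{def:Mnat_concave} alone excludes it in general. (A smaller, fixable omission: before applying \Cref{def:Mnat_concave} to $v_i$ you must first normalize $T$ and $S_i$ so that $|T|\leq d'_i$, $|S_i|\leq d_i$ and the truncated values agree with $v_i$ on them; shrinking $S_i$ preserves packing, so this part is harmless.)

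The paper closes exactly this hole with a different tool: well-layeredness of gross substitutes valuations and \Cref{cor:well-layered}. A preferred bundle $z_i\in\dset_i^{d}(\pmin)$ can be generated by a greedy sequence with non-negative marginal utilities; truncating that sequence after $d'_i$ items yields $y_i\leq z_i$ with $y_i\in\dset_i^{d'}(\pmin)$, and dually, greedily extending a bundle of a covering allocation for demand $d'$ yields a preferred bundle for demand $d$ containing it (\Cref{monotone_in_demand}, \Cref{monotone_in_demand_max}). Your dual ``growing'' argument for $\pmax$ inherits the mirrored version of the same gap. So either import the greedy/well-layered machinery as the paper does, or supply a complete treatment of the tight-cardinality \ref{i:M1} case; as written, the demand half of the theorem is not proved.
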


The main idea of the proof is to use the equivalence between minimal packing [maximal covering] and minimal [maximal] Walrasian prices.
We first show monotonicity in supply.

\begin{lemma}\label{lem:sensitivity_supply}
    Monotonicity in supply with respect to the minimal Walrasian price vector can be guaranteed if all buyers' valuations are strong gross substitutes.
\end{lemma}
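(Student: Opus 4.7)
The plan is to leverage Theorem \ref{thm:min_packing_is_min_walrasian}, which says that the minimal Walrasian price vector is a componentwise lower bound on every packing price vector. Concretely, I will construct a packing price vector $\tilde{q}$ for the original market whose value on every surviving item coincides with $\pmin'$, forcing $\pmin \leq \tilde{q}$ on $E'$.

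To avoid dealing with unequal individual supplies, first reduce both markets to the unit-supply setting as in \Cref{sec:reduction}. The original market becomes a unit-supply market on a copy set $\tilde{E}$ with valuations $\tilde{v}_i$; the reduced market becomes a unit-supply market on the subset $\tilde{E}'\subseteq\tilde{E}$ obtained by deleting $b(e)-b'(e)$ copies of each item $e$ (all copies if $e\notin E'$). By \Cref{lem:GS-reduction}, the valuations $\tilde{v}_i$ remain gross substitutes, and by \Cref{lem:Walrasian-reduction} the minimal Walrasian prices $\tilde{\pmin}$ and $\tilde{\pmin}'$ assign the same value to every copy of an item, matching $\pmin$ and $\pmin'$ respectively. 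Hence it suffices to prove $\tilde{\pmin}(\tilde{e})\leq \tilde{\pmin}'(\tilde{e})$ for every $\tilde{e}\in\tilde{E}'$.

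Define $\tilde{q}\in\R^{\tilde{E}}_+$ by $\tilde{q}(\tilde{e})\coloneqq\tilde{\pmin}'(\tilde{e})$ for $\tilde{e}\in\tilde{E}'$ and $\tilde{q}(\tilde{e})\coloneqq M$ for $\tilde{e}\in\tilde{E}\setminus\tilde{E}'$, where $M>\max_{i\in N}\tilde{v}_i(\tilde{E})$ is prohibitive. The key observation is that at prices $\tilde{q}$, no bundle containing a removed copy can be utility-maximizing, so $\dset_i(\tilde{q})\subseteq 2^{\tilde{E}'}$; since $\tilde{q}$ coincides with $\tilde{\pmin}'$ on $\tilde{E}'$ and the valuations agree on $2^{\tilde{E}'}$, the identity $\dset_i(\tilde{q})=\dset_i'(\tilde{\pmin}')$ holds. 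Hence any Walrasian allocation $(S'_1,\dots,S'_n)$ in the reduced market (which exists by \Cref{lem:lyapunov}) satisfies $S'_i\in\dset_i(\tilde{q})$ and $\sum_i\chi_{S'_i}(\tilde{e})\leq 1$ for every $\tilde{e}\in\tilde{E}$ (with equality on $\tilde{E}'$ and value $0$ on $\tilde{E}\setminus\tilde{E}'$), so $\tilde{q}$ is packing in the original unit-supply market.

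Applying \Cref{thm:min_packing_is_min_walrasian} to the original unit-supply market (valid by \Cref{lem:GS-reduction}) yields $\tilde{\pmin}\leq \tilde{q}$; restricting to $\tilde{E}'$ and translating back via \Cref{lem:Walrasian-reduction} gives $\pmin(e)\leq \pmin'(e)$ for all $e\in E'$. The only slightly delicate point is step three — verifying that making the deleted copies prohibitively expensive exactly identifies the two demand correspondences — but this is immediate once $M$ is chosen above all valuations. Everything else is bookkeeping around the unit-supply reduction.
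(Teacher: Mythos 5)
Your proposal is correct and follows essentially the same route as the paper: reduce to the unit-supply setting, assign a prohibitive price to the removed copies so that a Walrasian allocation of the smaller market becomes a packing allocation for the larger market, and then invoke \Cref{thm:min_packing_is_min_walrasian} together with \Cref{lem:Walrasian-reduction}. The only (cosmetic) difference is that the paper first reduces WLOG to removing a single unit of a single item and prices out one copy $\tilde{f}$, whereas you price out all deleted copies simultaneously, which avoids the implicit chaining over one-unit decrements.
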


\begin{proof}
    Let $b$ and $b'$ be two supply vectors with $b' \leq b$.
    Assume without loss of generality that $\U$ and $\U'$ only differ in one item $f$ and only in one unit, i.e., $b(f)-b'(f)=1$ and $b(e)=b'(e)$ for all $e \in E \setminus \{f\}$.
    We fix a Walrasian allocation $(z_1,\dots,z_n)$ given the minimal Walrasian prices $p'$ at supply $E'$.
    Now, we use the reduction to the unit-supply setting.
    Therefore, let $\tilde{E}$ denote the set containing all item copies with respect to $\U$.
    Further let $(S_1,\dots,S_n)$ be the packing allocation corresponding to $(z_1,\dots,z_n)$.

    We fix an arbitrary copy $\tilde{f} \in \tilde{E}$ of the unique item $f \in \supp^+(\U - \U')$.
    Next, we adapt the prices $p'$ to prices $\bar{p}$ by setting $\bar{p}(\tilde{e}) = p'(\tilde{e})$ for $\tilde{e} \in \tilde{E}\setminus \{\tilde{f}\}$ and $\bar{p}(\tilde{f}) = \max_{i \in N} v_i(E) + 1$.
    Thus, no buyer wants to buy item $\tilde{f}$, so any packing allocation with respect to supply $\tilde{E} \setminus \{\tilde{f}\}$ and prices $p'$ is also packing with respect to supply $\tilde{E}$ and prices $\bar{p}$.
	As there is a packing allocation in the first setting, we know that the same allocation is packing for prices $\bar{p}$ at supply $\tilde{E}$.
    Using that $p$ is the component-wise minimal packing price vector at supply $\tilde{E}$, we get by \Cref{thm:min_packing_is_min_walrasian} that $p(\tilde{e}) \leq \bar{p}(\tilde{e}) = p'(\tilde{e})$ for all $\tilde{e} \in \tilde{E} \setminus \{\tilde{f}\}$.

    Thus, $p(e)\leq p'(e)$ holds for every item $e \in E$ with a copy in $\tilde{E} \setminus \{\tilde{f}\}$, i.e., for the items $e \in E$ with $\U'(e)>0$.
\end{proof}

\begin{lemma}\label{lem:sensitivity_supply_max}
	Monotonicity in supply with respect to the maximal Walrasian price vector can be guaranteed if all buyers' valuations are strong gross substitutes.
\end{lemma}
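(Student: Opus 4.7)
The plan is to mirror the structure of the proof of \Cref{lem:sensitivity_supply}, but replace the use of \Cref{thm:min_packing_is_min_walrasian} by the dual statement \Cref{thm:max_market_clearing_is_max_walrasian}: instead of constructing a packing price vector at the larger supply from a Walrasian allocation at the smaller supply, I will construct a covering price vector at the smaller supply from a Walrasian allocation at the larger supply. Concretely, let $p$ denote the maximal Walrasian price vector at supply $b$ and $p'$ the maximal Walrasian price vector at the reduced supply $b' \leq b$; I aim to show $p(e) \leq p'(e)$ for every $e \in E$ with $b'(e) > 0$.

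First, I will reduce to the unit-supply setting using \Cref{lem:Walrasian-reduction} and \Cref{lem:GS-reduction}, and assume without loss of generality that $b$ and $b'$ differ by exactly one copy of a single item $f \in E$. In the unit-supply instance this amounts to removing a single copy $\tilde{f} \in \tilde{E}$ from the supply. Fix a Walrasian allocation $(S_1,\dots,S_n)$ for $p$ at supply $\tilde{E}$, which partitions $\tilde{E}$; let $i^* \in N$ be the unique buyer with $\tilde{f} \in S_{i^*}$.

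Next, I will raise the price of $\tilde{f}$ to a prohibitively high value, defining
\[
\bar{p}(\tilde{e}) \coloneqq p(\tilde{e}) \text{ for } \tilde{e} \in \tilde{E}\setminus\{\tilde{f}\}, \qquad \bar{p}(\tilde{f}) \coloneqq \max_{i \in N} v_i(\tilde{E}) + 1,
\]
so that at prices $\bar{p}$ no buyer can obtain positive marginal utility from $\tilde{f}$. Since $p \leq \bar{p}$ and the two vectors agree on $\tilde{E}\setminus\{\tilde{f}\}$, strong gross substitutes applied to each buyer $i$ yields $S'_i \in \dset_i(\bar{p})$ satisfying $S_i(\tilde{e}) \leq S'_i(\tilde{e})$ for all $\tilde{e} \neq \tilde{f}$, and the choice of $\bar{p}(\tilde{f})$ forces $\tilde{f} \notin S'_i$. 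In particular $S_i \subseteq S'_i$ for $i \neq i^*$ and $S_{i^*}\setminus\{\tilde{f}\} \subseteq S'_{i^*}$, so $\bigcup_i S'_i \supseteq \bigl(\bigcup_i S_i\bigr)\setminus\{\tilde{f}\} = \tilde{E}\setminus\{\tilde{f}\}$.

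Finally, because each $S'_i$ avoids $\tilde{f}$, it is also a preferred bundle for buyer $i$ in the reduced market at prices $\bar{p}|_{\tilde{E}\setminus\{\tilde{f}\}}$ (the utility of every bundle in $\tilde{E}\setminus\{\tilde{f}\}$ is the same under $\bar{p}$ and $\bar{p}|_{\tilde{E}\setminus\{\tilde{f}\}}$, so maximizers agree). Hence $\bar{p}|_{\tilde{E}\setminus\{\tilde{f}\}}$ is a covering price vector in the market at supply $\tilde{E}\setminus\{\tilde{f}\}$. \Cref{thm:max_market_clearing_is_max_walrasian} then yields $\bar{p}|_{\tilde{E}\setminus\{\tilde{f}\}} \leq p'$, so $p(\tilde{e}) = \bar{p}(\tilde{e}) \leq p'(\tilde{e})$ for every $\tilde{e} \in \tilde{E}\setminus\{\tilde{f}\}$, which in the original instance translates to $p(e) \leq p'(e)$ for every $e \in E$ with $b'(e) > 0$. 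The main subtlety—and the step most worth verifying carefully—is that the ``lifted'' preferred bundles $S'_i$ supplied by strong gross substitutes remain preferred in the smaller market; this is precisely where the ``blow-up'' price $\bar{p}(\tilde{f})$ does the work, simultaneously ensuring that no $S'_i$ contains $\tilde{f}$ (so they are feasible at the smaller supply) and that passing from $\bar{p}$ to $\bar{p}|_{\tilde{E}\setminus\{\tilde{f}\}}$ preserves optimality.
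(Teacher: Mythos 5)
Your proposal is correct and follows essentially the same route as the paper's proof: reduce to the unit-supply setting, remove one copy $\tilde{f}$ by assigning it a prohibitively high price, use (strong) gross substitutes to lift the Walrasian allocation at the larger supply to preferred bundles avoiding $\tilde{f}$ that still cover $\tilde{E}\setminus\{\tilde{f}\}$, and then invoke \Cref{thm:max_market_clearing_is_max_walrasian} to compare with the maximal Walrasian prices at the reduced supply. The only difference is cosmetic: the paper translates back to the multi-supply setting before applying the theorem, while you apply it directly in the unit-supply instance and then translate via \Cref{lem:Walrasian-reduction}.
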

\begin{proof}
    As before, let $b$ and $b'$ be two supply vectors with $b' \leq b$ and assume without loss of generality that there is $f \in E$ such that $b(f) = b'(f) + 1$ and $b(e) = b'(e)$ for all $e \in E \setminus \{f\}$.
    Let $(z_1, \dots, z_n)$ be a Walrasian allocation for the maximal Walrasian prices $p$ at supply $b$.
    Again, we reduce to the unit-supply setting.
    Let $\tilde{E}$ denote the set containing all item copies with respect to $b$ and let $\tilde{p} \in \R_+^{\tilde{E}}$ be the price vector with $\tilde{p}(\tilde{e}) = p(e)$ for every copy $\tilde{e}$ of $e \in E$.
    Let further $(S_1, \dots, S_n)$ be the partition of $\tilde{E}$ corresponding to $(z_1, \dots, z_n)$.

    We fix an arbitrary copy $\tilde{f}$ of $f$ and define a price vector $\bar{p} \in \R_+^{\tilde{E}}$ by setting $\bar{p}(\tilde{e}) = \tilde{p}(\tilde{e})$ for all $\tilde{e} \in \tilde{E} \setminus \{\tilde{f}\}$ and $\bar{p}(\tilde{f}) = \max_{i \in N} v_i(E) + 1$.
    As the valuation functions $(\tilde{v}_i)_{i \in N}$ are gross substitutes, we know that there exist preferred bundles $(\bar{S}_i)_{i \in N}$ with respect to $\bar{p}$ such that $S_i \setminus \{\tilde{f}\} \subseteq \bar{S}_i$.
    Moreover, by our choice of $\bar{p}(\tilde{f})$, we can infer that $\tilde{f} \notin \bar{S}_i$ for $i \in N$.
    In particular, the bundles $(\bar{S}_i)_{i \in N}$ also constitute preferred bundles for supply $\tilde{E} \setminus \{\tilde{f}\}$ and the restriction of $\tilde{p}$ to $\tilde{E} \setminus \{\tilde{f}\}$.
    Furthermore, $\bigcup_{i \in N} S_i = \tilde{E}$ implies $\bigcup_{i \in N} \bar{S}_i = \tilde{E} \setminus \{\tilde{f}\}$.

    Going back to the multi-supply setting, this tells us that the prices $p$ are covering for supply $b'$.
    By \Cref{thm:max_market_clearing_is_max_walrasian}, we may conclude $p\leq p'$ where $p'$ is the minimal Walrasian price vector for supply $b'$.
\end{proof}

Now, we examine the effect of changes in the buyer's demand on the minimal [maximal] Walrasian prices.
\begin{lemma}\label{monotone_in_demand}
    Monotonicity in demand with respect to the minimal Walrasian price vector can be guaranteed if all buyers' valuations are strong gross substitutes.
\end{lemma}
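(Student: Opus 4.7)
The plan is to mirror the supply monotonicity proof of \Cref{lem:sensitivity_supply} by showing that the minimum Walrasian price vector $p$ for demand $d$ remains a \emph{packing} price vector under the reduced demand $d' \le d$; \Cref{thm:min_packing_is_min_walrasian} then forces $p' \leq p$. By chaining single-step decreases, I may assume $d' = d - \chi_i$ for a single buyer~$i$. Since $p$ is Walrasian for $d$, it is packing by \Cref{lem:packing}, so I can fix a packing allocation $(\bar z_1, \ldots, \bar z_n)$ with each $\bar z_j \in \DM_j^d(p)$ (obtained by descending a Walrasian allocation to minimal preferred bundles, which only decreases $\sum_j \bar z_j$). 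Because $v_j^{d'} = v_j^d$ for every $j \neq i$, each such $\bar z_j$ is still a preferred bundle for demand $d'$, so the entire argument reduces to the following key claim: there exists $z'_i \in \dset_i^{d'}(p)$ with $z'_i \le \bar z_i$. The tuple $(\bar z_1, \ldots, \bar z_{i-1}, z'_i, \bar z_{i+1}, \ldots, \bar z_n)$ will then witness that $p$ is packing for $d'$.

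To prove the key claim, the easy case is $\|\bar z_i\|_1 \le d'_i$: then $\bar z_i$ is $d'$-feasible, its utility equals $V_i^d(p) \ge V_i^{d'}(p)$, and maximality of $V_i^{d'}(p)$ forces equality, so $\bar z_i \in \dset_i^{d'}(p)$ and we set $z'_i \coloneqq \bar z_i$. The nontrivial case is $\|\bar z_i\|_1 = d_i$ (with $d'_i = d_i - 1$), which is where the main obstacle lies. I would pick any $z^* \in \DM_i^{d'}(p)$, so $\|z^*\|_1 \le d'_i$, and iteratively apply the $\Mnat$-concavity of $u_i = v_i - \pr{p}{\cdot}$ (\Cref{lemma:gs_is_Mnat-convave}) to the pair $(\bar z_i, z^*)$ along some $e \in \supp^+(\bar z_i - z^*)$, which is nonempty since $\|\bar z_i\|_1 > \|z^*\|_1$.

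If case \ref{i:M1} holds, the size bounds $\|\bar z_i - \chi_e\|_1 = d'_i$ and $\|z^* + \chi_e\|_1 \le d_i$ make both new bundles feasible for their respective optimization problems, and the \ref{i:M1} inequality combined with the maximality of $\bar z_i$ and $z^*$ forces equality throughout; this yields $\bar z_i - \chi_e \in \dset_i^{d'}(p)$ and we take $z'_i \coloneqq \bar z_i - \chi_e$. Otherwise case \ref{i:M2} supplies some $f \in \supp^-(\bar z_i - z^*)$ for which an analogous tightness argument makes $\bar z_i - \chi_e + \chi_f$ a preferred bundle for demand $d$, still of $\ell_1$-norm $d_i$; I then replace $\bar z_i$ by this new bundle and iterate. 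The hard part is to show that this loop really does terminate: at each \ref{i:M2} step $\|\bar z_i - z^*\|_1$ strictly drops by exactly two (the $e$-coordinate falls by one and the $f$-coordinate rises by one, both moving toward $z^*$), so this nonnegative integer cannot decrease forever. Hence the loop must eventually terminate either via \ref{i:M1} (immediate success) or because $\supp^-(\bar z_i - z^*) = \emptyset$, in which case $z^* \le \bar z_i$ and we set $z'_i \coloneqq z^*$. Either outcome establishes the key claim and therefore the lemma.
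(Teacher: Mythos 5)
Your overall strategy is the same as the paper's (show that the minimal Walrasian prices $p$ for demand $d$ remain \emph{packing} for $d'$, then invoke \Cref{thm:min_packing_is_min_walrasian}), but your proof of the key claim has a genuine gap. In the \ref{i:M2} case you replace $\bar z_i$ by $\bar z_i - \chi_e + \chi_f$ with $f \in \supp^-(\bar z_i - z^*)$, i.e.\ you add a unit of an item beyond the multiplicity in the \emph{original} $\bar z_i$. After one such step the current bundle is no longer $\leq \bar z_i$, and at termination your $z'_i$ is only dominated by the current, modified bundle. Since the packing allocation $(\bar z_1,\dots,\bar z_n)$ was fixed with the original $\bar z_i$, the tuple $(\bar z_1,\dots,z'_i,\dots,\bar z_n)$ need not be packing: $z'_i$ may demand units of $f$ held by other buyers. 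Concretely, take unit supply, $E=\{a,b,c\}$, prices $0$, buyer $i$ with the weighted matroid rank valuation in which $a$ and $c$ are parallel, $w(a)=w(c)=2$, $w(b)=1$, and $d_i=2$, $d'_i=1$; a second buyer values only $c$, so $p=0$ is (minimal) Walrasian with $\bar z_i=\chi_a+\chi_b$, $\bar z_j=\chi_c$. With $z^*=\chi_c$ and the admissible choice $e=a$, \ref{i:M1} fails ($3+2\not\leq 1+2$), \ref{i:M2} holds with $f=c$, your loop moves to $\chi_b+\chi_c$ and then outputs $z'_i=\chi_c\not\leq\bar z_i$, which collides with buyer $j$. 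So the procedure as written does not deliver $z'_i\leq\bar z_i$, which is exactly what your packing witness needs.

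The repair is small: keep $\bar z_i$ fixed and, in the \ref{i:M2} case, update $z^*$ to $z^*+\chi_e-\chi_f$ instead; the very same tightness computation shows this stays in $\dset_i^{d'}(p)$ with unchanged size, your potential $\|\bar z_i - z^*\|_1$ still drops by two, and termination now yields either $\bar z_i-\chi_e \in \dset_i^{d'}(p)$ (which is $\leq\bar z_i$) or $z^*\leq\bar z_i$. Two smaller points should also be made explicit: the ``maximality of $\bar z_i$ and $z^*$'' you invoke is maximality only subject to the size bounds $d_i$ and $d'_i$ (your size bookkeeping, together with the fact that the truncated and untruncated utilities agree on bundles within the size bound, is what legitimizes forcing equality), and $\|z^*\|_1\leq d'_i$ needs the short argument that minimal preferred bundles of a truncated valuation respect the truncation. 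The paper sidesteps the exchange loop entirely: it obtains the dominated $d'$-preferred bundle directly from well-layeredness (\Cref{cor:well-layered}), by realizing $z_i$ through a greedy sequence and truncating it to its first $d'_i$ elements.
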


\begin{lemma}\label{monotone_in_demand_max}
	Monotonicity in demand with respect to the maximal Walrasian price vector can be guaranteed if all buyers' valuations are strong gross substitutes.
\end{lemma}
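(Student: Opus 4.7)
My plan is to argue analogously to \Cref{lem:sensitivity_supply_max}, using \Cref{thm:max_market_clearing_is_max_walrasian} as the bridge between maximal Walrasian prices and maximal covering prices: to show $p' \leq p$, it suffices to verify that $p'$ is a covering price vector in the market with demand $d$, since \Cref{thm:max_market_clearing_is_max_walrasian} applied there identifies $p$ as the componentwise maximum among such vectors.

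I will reduce by iteration to the elementary case $d_j = d'_j + 1$ and $d_i = d'_i$ for $i \neq j$. Given a Walrasian allocation $(z'_i)_{i \in N}$ at prices $p'$ in the $d'$-market, a preprocessing step will allow me to assume $\|z'_j\|_1 \leq d'_j$: trim $z'_j$ to a minimum-size subbundle $w_j$ achieving $v_j(w_j) = v_j^{d'}(z'_j)$, observe that every dropped item $e$ must have $p'(e) = 0$ (otherwise $w_j$ would strictly improve on $z'_j$), and redistribute each such $e$ to any other buyer $i$, whose bundle absorbs it while remaining preferred because non-decreasingness and optimality of $z'_i$ force $v_i^{d'}(z'_i + \chi_e) = v_i^{d'}(z'_i)$. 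For $i \neq j$, the truncations at $d_i$ and $d'_i$ coincide, so $z'_i \in \dset_i^d(p')$; hence it remains to produce a bundle $y_j \in \dset_j^d(p')$ with $y_j \geq z'_j$, since $(z'_1, \ldots, z'_{j-1}, y_j, z'_{j+1}, \ldots, z'_n)$ will then provide a covering allocation for the $d$-market.

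For the construction of $y_j$, I will use that strong gross substitutes valuations are closed under item-truncation \citep{collina2020approximability}, so $v_j^d$ is \Mnat-concave by \Cref{lemma:gs_is_Mnat-convave}, and that $v_j^d(z) = v_j^{d'}(z)$ for $\|z\|_1 \leq d'_j$, giving $u_j^d(z'_j) = u_j^{d'}(z'_j) = V_j^{d'}(p')$. Picking $y^* \in \dhat_j^d(p')$ that maximizes the componentwise meet $\sum_{e \in E} \min\{z'_j(e), y^*(e)\}$ with $z'_j$, I will argue by contradiction that $\supp^+(z'_j - y^*) = \emptyset$: for any $e$ in that set, \Cref{def:Mnat_concave} applied to $(z'_j, y^*)$ yields either an augmentation $y^* \mapsto y^* + \chi_e$ or a swap $y^* \mapsto y^* + \chi_e - \chi_f$ with $f \in \supp^+(y^* - z'_j)$ that is utility-preserving in $\dset_j^d(p')$ and strictly increases the meet, contradicting the choice of $y^*$. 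The main obstacle will be this case analysis: the asymmetry caused by $z'_j$ not being a preferred bundle in the $d$-market means that utility-preservation of the swap on $y^*$ has to be chained together from the optimality inequality $u_j^d(y^* + \chi_e - \chi_f) \leq u_j^d(y^*)$ and the size-bound consequence $u_j^d(z'_j - \chi_e + \chi_f) \leq u_j^{d'}(z'_j) = u_j^d(z'_j)$, and the latter hinges crucially on the preprocessing step ensuring $\|z'_j\|_1 \leq d'_j$.
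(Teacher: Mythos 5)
Your proposal is correct, and it follows the same outer skeleton as the paper's proof — exhibit a covering allocation for prices $p'$ in the $d$-market and invoke \Cref{thm:max_market_clearing_is_max_walrasian} — but the way you build that covering allocation is genuinely different. The paper applies \Cref{cor:well-layered} (well-layeredness of gross substitutes valuations) to extend \emph{every} bundle $z'_i$ of the Walrasian allocation at demand $d'$ upward along a greedy sequence to some $z_i \in \dset_i^d(p')$ with $z_i \geq z'_i$, in one uniform step and without any reduction. You instead reduce to a single unit increase of one buyer's demand, trim that buyer's bundle to size at most $d'_j$ (correctly noting that dropped items have price zero, so they can be reabsorbed by other buyers while keeping all bundles preferred and the allocation exactly covering), and then run an \Mnat-exchange argument on $u_j^d$: choosing $y^* \in \dhat_j^d(p')$ maximizing the meet with the trimmed bundle, each case of \Cref{def:Mnat_concave} produces a preferred bundle with strictly larger meet, where the needed inequality $u_j^d(z'_j - \chi_e + \chi_f) \leq u_j^d(z'_j)$ indeed follows from the size bound $\|z'_j - \chi_e + \chi_f\|_1 \leq d'_j$ together with optimality of $z'_j$ at demand $d'$ — this is exactly where your preprocessing is essential, since $z'_j$ need not be preferred at demand $d$. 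The trade-off: the paper's route is shorter and handles all buyers and arbitrary demand decreases at once, but leans on the greedy/well-layered machinery imported for this purpose; your route uses only \Mnat-concavity (via \Cref{lemma:gs_is_Mnat-convave}) and closure under truncation, at the cost of the unit-increment reduction, the trimming and redistribution step, and a more delicate case analysis.
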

Clearly, \Cref{thm:monotonicity} follows directly form \Cref{lem:sensitivity_supply}, \Cref{lem:sensitivity_supply_max}, \Cref{monotone_in_demand} and \Cref{monotone_in_demand_max}.
To show the last two lemmas, we will use that in the unit-supply setting, gross substitutes valuations are well-layered (cf.~\citep{dress1995well,leme2017gross}).
Intuitively, well-layered means that the greedy algorithm (always selecting an item with the highest marginal value) computes in each iteration a bundle with the highest utility among all bundles of that size.

\begin{definition}[\citep{dress1995well}]
    In the unit-supply setting, let $v\colon 2^E \to \Z_+$ and $p\in \Z^E$.
	A \emph{greedy sequence for $v$ and $p$} is an ordering $e_1,\dots,e_{|E|}$ of the elements of $E$ such that for every $j\in \{1,\dots,|E|\}$, we have
\[e_j\in \arg\max_{e \notin \{e_1,\dots,e_{j-1}\}} \{v(\{e_1,\dots,e_{j-1},e\})-v(\{e_1,\dots,e_{j-1}\})-p(e)\}.\]
\end{definition}
Based on our reduction to the unit-supply case, we obtain the following analogous notion of a greedy sequence for the multi-supply setting:
\begin{definition}
	In the multi-supply setting, let $v \colon \rng \to \Z_+$ and $p\in\Z^E$.
	A \emph{greedy sequence} for $v$ and $p$ is a sequence $e_1,\dots,e_{b(E)}$ of elements of $E$ with the following properties:
	For $j\in\{0,\dots,b(E)\}$, let $z^j\in \rng$ such that $z^j(e)$ equals the number of occurrences of $e$ among $e_1,\dots,e_j$.
	Then $z^{b(E)}=b$ and
	for each $j\in \{1,\dots,b(E)\}$, we have $e_j \in \arg\max_{e \in \supp^+(b-z^{(j-1)})} \{v (z^{(j-1)} + \chi_e) - v(z^{(j-1)}) - p(e)\}$.
\end{definition}
\begin{definition}[\citep{dress1995well}]
A valuation function $v\colon 2^E \to \Z_+$	is called \emph{well-layered} if for any price vector $p\in \Z^E$ and any greedy sequence $e_1,\dots,e_{|E|}$ for $v$ and $p$, we have
\[\{e_1,\dots,e_j\}\in \arg\max\{v(S)-p(S)\colon S\subseteq E,|S|=j\} \text{ for all }j\in\{1,\dots,|E|\}.\]
\end{definition}

\begin{lemma}[\citep{dress1995well}]
    In the unit-supply setting, a non-decreasing gross substitutes function $v\colon 2^E \to \Z_+$ is \emph{well-layered}.
\end{lemma}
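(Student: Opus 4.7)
The plan is to prove this classical result of Dress and Wenzel by induction on $j$. The base case $j=0$ is immediate since $\emptyset$ is the unique size-$0$ subset. For the inductive step, write $u \coloneqq v - p$ and $G_j \coloneqq \{e_1,\dots,e_j\}$. Since $v$ is \Mnat-concave by \Cref{lemma:gs_is_Mnat-convave} and \Mnat-concavity is preserved under subtraction of a linear function, $u$ is also \Mnat-concave. Assume inductively that $G_{j-1}$ maximizes $u$ among size-$(j-1)$ subsets; the goal is to show the same for $G_j$ at size $j$.

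The key tool would be the following equal-cardinality swap exchange property of \Mnat-concave functions: for any $X, Y \subseteq E$ with $|X|=|Y|$ and any $e \in X \setminus Y$, there exists $f \in Y \setminus X$ with
\[
u(X) + u(Y) \;\leq\; u((X \setminus \{e\}) \cup \{f\}) + u((Y \setminus \{f\}) \cup \{e\}).
\]
I would derive this from \Cref{def:Mnat_concave} by an $\ell_1$-balance argument: when $|X|=|Y|$, the one-sided option \ref{i:M1} changes total cardinality, so (possibly after embedding $u$ as an $\mathrm{M}$-concave function on $\Z^{E \cup \{0\}}$ via an auxiliary coordinate and invoking the pure swap exchange of $\mathrm{M}$-concave functions) an \ref{i:M2}-type swap with $f \in Y \setminus X$ must be available.

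Granting this, I would show $u(G_j) \geq u(S)$ for every size-$j$ subset $S$ by a secondary induction on $|G_j \triangle S|$. If $G_j = S$, done. Otherwise, pick the smallest index $k$ with $e_k \notin S$ (so $G_{k-1} \subseteq S$) and apply the swap exchange to $(X, Y, e) = (G_j, S, e_k)$ to obtain $f \in S \setminus G_j$ satisfying
\[
u(G_j) + u(S) \;\leq\; u((G_j \setminus \{e_k\}) \cup \{f\}) + u((S \setminus \{f\}) \cup \{e_k\}).
\]
The second summand corresponds to $S' \coloneqq (S \setminus \{f\}) \cup \{e_k\}$, which has size $j$ and $|G_j \triangle S'| = |G_j \triangle S| - 2$, so the inner induction yields $u(S') \leq u(G_j)$. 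For the first summand, the greedy choice at step $k$ gives $u(G_k) - u(G_{k-1}) \geq u(G_{k-1} \cup \{f\}) - u(G_{k-1})$, which combined with iterated \Mnat-concavity exchanges along the suffix $e_{k+1},\dots,e_j$ of the greedy sequence produces the bound $u((G_j \setminus \{e_k\}) \cup \{f\}) \leq u(G_j)$. Adding both bounds yields $u(S) \leq u(G_j)$, closing both inductions.

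The main obstacle is this final propagation step: the local greedy inequality at step $k$ does not directly compare $u((G_j \setminus \{e_k\}) \cup \{f\})$ with $u(G_j)$, and bridging the gap requires a delicate chain of \Mnat-concavity exchanges that transports the size-$k$ comparison up to size $j$. This iterated exchange argument, together with the extraction of the equal-cardinality swap exchange from \Cref{def:Mnat_concave}, is the technical heart of the Dress--Wenzel proof and the reason we content ourselves with citing \cite{dress1995well}.
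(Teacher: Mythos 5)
The paper does not prove this lemma at all; it is quoted verbatim from \citet{dress1995well}, so there is no internal argument to compare against. Judged as a standalone proof, your proposal has the right architecture -- it is essentially the classical Dress--Wenzel greedy argument: pass to $u=v-p$, use \Cref{lemma:gs_is_Mnat-convave} to get \Mnat-concavity, extract the equal-cardinality swap exchange (your derivation via an auxiliary coordinate is the standard and correct way to see that the restriction of an \Mnat-concave function to a cardinality layer satisfies the \M-exchange property), and then run an induction on $|G_j\triangle S|$.

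However, there is a genuine gap, and you have located it yourself: the inequality $u\bigl((G_j\setminus\{e_k\})\cup\{f\}\bigr)\leq u(G_j)$ is asserted but never established. This is not a small bridging step. It says that the greedy prefix $G_j$ is a local optimum under single swaps, and once you have the swap-exchange property, local swap optimality implies global optimality at size $j$ by exactly the symmetric-difference induction you already wrote down -- so this one claim carries essentially the entire content of the lemma. It does not follow from the single greedy inequality $u(G_{k-1}+e_k)\geq u(G_{k-1}+f)$ alone; one needs an inductive transport of that inequality along the suffix $e_{k+1},\dots,e_j$, each step invoking the exchange axiom again, and this is precisely the part you wave at with ``a delicate chain of exchanges.'' (A further cosmetic point: your outer induction hypothesis that $G_{j-1}$ is optimal at size $j-1$ is never actually used in the displayed argument; the work is done by the inner induction plus the missing local-optimality claim.) Since the crucial step is left unproven and deferred to the citation, the proposal is an accurate roadmap of the Dress--Wenzel proof rather than a proof; in the context of this paper, which likewise only cites \cite{dress1995well}, that is an acceptable resolution, but it should be presented as a citation, not as an argument.
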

For non-decreasing gross substitutes valuations, it is further known that \emph{every} bundle that is optimum for a given size can be obtained via a greedy sequence.
\begin{lemma}[{\cite[Remark 6.5]{leme2017gross}}]
    In the unit-supply setting, let $v\colon 2^E \to \Z_+$ be a non-decreasing gross substitute	valuation function and let $p\in\Z^E$.
	Let further $j\in\{1,\dots,|E|\}$ and $S^*\in\arg\max\{v(S)-p(S)\colon S\subseteq E,|S|=j\}$.
	Then there exists a greedy sequence $e_1,\dots,e_{|E|}$ for $v$ and $p$ such that $S^*=\{e_1,\dots,e_j\}$.
\end{lemma}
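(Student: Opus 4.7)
The plan is to establish the lemma via an iterated exchange argument that builds a nested chain of optima. First, I will introduce the utility function $f \colon 2^E \to \Z$ given by $f(S) := v(S) - p(S)$; since $v$ is gross substitutes (equivalently, \Mnat-concave in the unit-supply setting by \Cref{lemma:gs_is_Mnat-convave}) and subtracting a linear function preserves \Mnat-concavity, $f$ is \Mnat-concave as well. A sequence $e_1, \ldots, e_{|E|}$ is greedy for $v$ and $p$ exactly when the partial sets $T_k := \{e_1, \ldots, e_k\}$ each satisfy $T_k \in \arg\max\{f(S) \colon |S| = k\}$: the greedy marginal criterion $e_k \in \arg\max_{e \notin T_{k-1}} f(T_{k-1} \cup \{e\})$ combined with the optimality of $T_{k-1}$ is equivalent to this size-$k$ optimality. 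Hence it will suffice to construct a nested chain $\emptyset = T_0 \subset T_1 \subset \cdots \subset T_j = S^*$ with each $T_k$ optimal of size $k$, and then extend the sequence beyond position $j$ arbitrarily by the greedy rule.

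The crucial auxiliary claim is: for every size-$k$ optimum $T$ with $k \geq 1$, there exists $s \in T$ such that $T \setminus \{s\}$ is optimal among size-$(k-1)$ sets. Granting this claim, a reverse induction starting from $T_j := S^*$ and peeling off one element at a time produces the desired chain, and labelling the removed elements in reverse order as $s_1, \ldots, s_j$ yields the initial segment of the greedy sequence.

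To prove the claim, I will fix an arbitrary optimal size-$(k-1)$ set $T'$. If $T' \subset T$, then $T \setminus T'$ consists of a single element $s$, and removing it leaves the optimal set $T'$. Otherwise, I will pick $e \in T \setminus T'$ and invoke \Mnat-concavity of $f$ (\Cref{def:Mnat_concave}) applied to the indicator vectors $\chi_T$ and $\chi_{T'}$, with $e \in \supp^+(\chi_T - \chi_{T'}) = T \setminus T'$: either \ref{i:M1} $f(T) + f(T') \leq f(T \setminus \{e\}) + f(T' \cup \{e\})$, or \ref{i:M2} there exists $g \in T' \setminus T$ with $f(T) + f(T') \leq f((T \setminus \{e\}) \cup \{g\}) + f((T' \cup \{e\}) \setminus \{g\})$. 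In case \ref{i:M1}, size-$k$ optimality of $T$ gives $f(T' \cup \{e\}) \leq f(T)$ and size-$(k-1)$ optimality of $T'$ gives $f(T \setminus \{e\}) \leq f(T')$, so both inequalities must be equalities and $s := e$ works. In case \ref{i:M2}, the same optimality reasoning forces $(T \setminus \{e\}) \cup \{g\}$ to be another optimum of size $k$, and by construction its symmetric difference with $T'$ is exactly $|T \triangle T'| - 2$. Replacing $T$ by $(T \setminus \{e\}) \cup \{g\}$ and iterating, the process will terminate in finitely many rounds, ending in either case \ref{i:M1} or the situation $T' \subset T$.

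The step I expect to be the main obstacle is guaranteeing progress in the exchange iteration, since case \ref{i:M2} only yields a \emph{new} optimum of the same size rather than an immediate witness for the claim; this is controlled by the monovariant $|T \triangle T'|$, which strictly decreases at every occurrence of case \ref{i:M2} and thus bounds the number of rounds by at most $|E|/2$.
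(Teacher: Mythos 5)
First, note that the paper does not prove this lemma at all---it is quoted directly from \citet{leme2017gross} (Remark~6.5)---so there is no in-paper proof to compare against; your argument has to stand on its own. Most of it does: reducing to the \Mnat-concave utility $f=v-p$, observing that a nested chain $\emptyset=T_0\subset\dots\subset T_j=S^*$ of size-$k$ optima automatically satisfies the greedy criterion up to step $j$ (because $T_k$ beats every $T_{k-1}\cup\{e\}$), extending greedily afterwards, and the treatment of case (M1) are all correct.

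The gap is in the exchange iteration for your key claim. You apply \Mnat-concavity to a size-$k$ optimum $T$ and a size-$(k-1)$ optimum $T'$, and in case (M2) you replace $T$ by $(T\setminus\{e\})\cup\{g\}$. The monovariant $|T\triangle T'|$ does decrease, but at termination the conclusion ``some element can be removed leaving a size-$(k-1)$ optimum'' then concerns the \emph{drifted} set, not the original $T$. Since the claim must be applied with $T=S^*$ (and then with each successively peeled subset of $S^*$), your induction only yields a chain ending at \emph{some} size-$j$ optimum, not at $S^*$, which is exactly what the lemma requires. The fix is to update the other set: the same equality argument you use for the size-$k$ set also shows that $(T'\cup\{e\})\setminus\{g\}$ is again a size-$(k-1)$ optimum, with symmetric difference to $T$ reduced by two. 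Keeping $T$ fixed and replacing $T'$ by this set, the iteration terminates either in case (M1), giving $f(T\setminus\{e\})=f(T')$ and hence a removable element of the \emph{original} $T$, or with $T'\subset T$, where the unique element of $T\setminus T'$ works. With this one-line correction the rest of your construction goes through.
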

In the multi-supply setting, we can obtain a similar result by using the reduction to the unit supply case as described in \Cref{sec:reduction}, in particular, that the corresponding valuation functions are gross substitutes and thus well-layered.
\begin{corollary}\label{cor:well-layered}
    In the multi-supply setting, for a strong gross substitutes valuation function $v$, demand $d$ and prices $p \in \Z_+^E$, and a preferred bundle $z \in \dset_i^d(p)$, there exist a greedy sequence $e_1,\dots,e_{b(E)}$ for $v^d$ and $p$, as well as $j^*\in\{0,\dots,b(E)\}$, such that
    \begin{itemize}
    	\item $z^{j^*}=z$,
    	\item $\max_{e \in \supp^+(b-z^{(j-1)})} \{v (z^{(j-1)} + \chi_e) - v(z^{(j-1)}) - p(e)\}\geq 0$ for all $j\in\{1,\dots,j^*\}$ and
    	\item $\max_{e \in \supp^+(b-z^{(j-1)})} \{v (z^{(j-1)} + \chi_e) - v(z^{(j-1)}) - p(e)\}\leq 0$ for all $j\in\{j^*+1,\dots,b(E)\}$.
    \end{itemize}
\end{corollary}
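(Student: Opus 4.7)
The plan is to lift the known well-layered property from the unit-supply setting to the multi-supply setting via the copy reduction from \Cref{sec:reduction} and then project back. I introduce the ground set $\tilde E$ of $b(e)$ copies of each item $e \in E$ together with the projection $\pi\colon\tilde E \to E$, the copy-valuation $\tilde v^d(S) \coloneqq v^d(\pi(S))$, and the lifted price vector $\tilde p(\tilde e) \coloneqq p(\pi(\tilde e))$. Since truncation preserves both monotonicity and gross substitutes (the latter cited just before the statement), and since \Cref{lem:GS-reduction} states that the copy construction preserves gross substitutes, $\tilde v^d$ is a non-decreasing gross substitutes function on $2^{\tilde E}$.

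Next, I pick any $\tilde S \subseteq \tilde E$ projecting to $z$, so that $j^* \coloneqq |\tilde S| = \|z\|_1 \in \{0,\dots,b(E)\}$. Because $\tilde v^d(S) - \tilde p(S) = v^d(\pi(S)) - \langle p, \pi(S)\rangle$ for every $S \subseteq \tilde E$, the hypothesis $z \in \dset_i^d(p)$ implies that $\tilde S$ is a preferred bundle for $\tilde v^d$ at prices $\tilde p$ and in particular lies in $\arg\max\{\tilde v^d(S) - \tilde p(S) \mid |S| = j^*\}$. Applying \cite[Remark~6.5]{leme2017gross} then yields a unit-supply greedy sequence $\tilde e_1,\dots,\tilde e_{b(E)}$ for $\tilde v^d$ and $\tilde p$ with $\tilde S = \{\tilde e_1,\dots,\tilde e_{j^*}\}$. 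I set $e_j \coloneqq \pi(\tilde e_j)$. For every step $j$ and every $e \in \supp^+(b - z^{(j-1)})$, the multi-supply marginal gain of $e$ coincides with the unit-supply marginal gain of any still-unselected copy of $e$ (both values depend only on the multiplicity vector $z^{(j-1)}$); conversely every unselected copy projects to such an $e$. Hence the two maxima agree and $e_j$ attains the multi-supply maximum, so $e_1,\dots,e_{b(E)}$ is a valid greedy sequence for $v^d$ and $p$, and $z^{j^*}=z$ is immediate from $\pi(\tilde S)=z$.

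The remaining task, which I expect to be the main obstacle, is verifying the sign pattern of the max marginal gains. By well-layeredness, $\{\tilde e_1,\dots,\tilde e_j\}$ is optimal among size-$j$ bundles for every $j$, so the greedy's $j$-th marginal gain equals $g(j) - g(j-1)$, where $g(k) \coloneqq \max\{\tilde v^d(S) - \tilde p(S) \mid |S|=k\}$. The crux is the standard consequence of \Mnat-concavity (see \cite{murota2003book}) that $k \mapsto g(k)$ is concave. Combined with the fact that $\tilde S$ is a \emph{globally} preferred bundle, so $g(j^*) \geq g(k)$ for all $k$, a strictly negative increment $g(j)-g(j-1) < 0$ at some $j \leq j^*$ would, by concavity, propagate negativity to every later step and force $g(j^*) < g(j-1)$, contradicting global optimality; symmetrically, a strictly positive increment at some $j > j^*$ would yield $g(j^*+1) > g(j^*)$, the same contradiction. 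This gives the two required inequalities and finishes the proof.
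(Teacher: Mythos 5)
Your proof is correct and takes essentially the approach the paper itself indicates for this corollary: reduce to the unit-supply setting via the copy construction of \Cref{sec:reduction}, use that the copied (and truncated) valuations remain non-decreasing gross substitutes (\Cref{lem:GS-reduction}), invoke well-layeredness together with \citep[Remark 6.5]{leme2017gross} to realize a copy-lift of $z$ as a greedy prefix, and project the sequence back, noting that marginal gains only depend on the multiplicity vector. The sign-pattern verification via concavity of $k \mapsto \max\{\tilde v^d(S)-\tilde p(S) \mid |S|=k\}$ is a legitimate way to fill in the detail the paper leaves implicit (equivalently, one can argue directly from submodularity of the utility and global optimality of the lifted bundle), so no gap remains.
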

With this knowledge, we can prove \Cref{monotone_in_demand}.

\begin{proof}[Proof of \Cref{monotone_in_demand}]

    Let $d$ and $d'$ be two demand vectors with $d' \leq d$.
    Let $\pmin$ and $\pmin'$ be the minimal Walrasian prices for demand $d$ and $d'$, respectively, and let $(z_1, \dots, z_n)$ be a Walrasian allocation for prices $\pmin$ and demand $d$.

    We will show that for each buyer $i$, there is a vector $y_i \leq z_{i}$ with $y_i \in \dset_{i}^{d'}(\pmin)$.
    If $z_{i}(E) \leq d'_{i}$ we are done (we set $y_i=z_i$).
    Thus, we assume $z_i(E) > d'_i$ in the sequel.
    Since $v_i^d$ is strong gross substitutes and $z_i \in \dset_{i}^{d}(\pmin)$, we can obtain $z_{i}$ using the greedy algorithm (\Cref{cor:well-layered}).
    Let $y_i$ be the bundle of the $d'_{i}$ items that were selected first.
    Hence, $y_i \leq z_i$.
    Moreover, all marginal returns in the greedy sequence leading to $y_i$ are non-negative and by definition of $v_i^{d'}$, every further item has a non-positive marginal utility w.r.t.\ $y_i$.
	Hence, $y_i \in \dset_{i'}^{d'}(\pmin)$.

    Thus, for demand $d'$, we know that $(y_1, \dots, y_n)$ is a packing allocation at prices $\pmin$.
    Hence, $\pmin$ is a packing price vector.
    Since $\pmin'$ is the buyer-optimal Walrasian price vector at demand $d'$,
    it holds, by \Cref{thm:min_packing_is_min_walrasian}, that $\pmin' \leq \pmin$.

\end{proof}

\begin{proof}[Proof of \Cref{monotone_in_demand_max}]
Let $d$ and $d'$ be two demand vectors with $d'\leq d$.
Let $p'$ be Walrasian prices for demand $d'$ and let $(z'_1,\dots,z'_n)$ be a Walrasian allocation for prices $p'$ and demand $d'$.
By \Cref{cor:well-layered}, we can extend each bundle $z'_i$ to a preferred bundle $z_i\in\dset_i^d(p')$ with $z_i\geq z'_i$ (component-wise) by continuing the greedy allocation process until the marginal returns become negative.
In particular, $(z_1,\dots,z_n)$ will be a covering allocation for prices $p'$ and demand $d$.
By \Cref{thm:max_market_clearing_is_max_walrasian}, we can conclude that for the maximal Walrasian prices $p$ for demand $d$, we have $p' \leq p$.
\end{proof}

In contrast to the monotonicity in supply and demand, there is no monotonicity if we change the valuation functions for some buyers and some items.
We provide an example showing this for minimal Walrasian prices with purely additive valuation functions (which are strong gross substitutes):
\begin{example}
    Let there be three items $E = \{e_1, e_2, e_3\}$ and three buyers with unit-demand valuations $v_1 = (2, 3, 0)$, $v_2 = (0,1,1)$ and $v_3 = (0,1,1)$, i.e., $v_i(S) = \max_{e \in S} v_{ie}$ for all $S \subseteq E$.
    The buyer-optimal Walrasian prices are given by $\pmin = (0, 1, 1)$.
    Now, assume that the valuation of the first buyer for the second item $e_2$ is decreased by one, i.e., $v_1' = (2,2,0)$.
    In this setting, the prices $\pmin' = (0, 0, 0)$ is the buyer-optimal Walrasian price vector.
    Hence, the price of the items whose valuation was not changed are the same or reduced by one unit.

    Consider the same setting again and assume that the first buyer decreases her valuation for item $e_1$ as well, i.e., $v''_1 = (1, 2, 0)$.
    The minimal Walrasian price vector is $\pmin'' = (0, 1, 1)$.
    Thus, the price of the items whose valuation was not changed are increased by one unit.
\end{example}

The example shows that the minimal Walrasian prices do not change in a monotone way when the valuations change (an example for maximal Walrasian prices can be found in \Cref{appendix:maxWalrasian_valuation_monotone}).
This seems to be intuitive since if the valuation of one item is decreased, it could increase the demand for other items which then increases the price.
On the other hand, the item itself is less attractive, such that the demand can reduce (and thus also potential conflicts).

Note also that the question regarding price monotonicity w.r.t. changes in valuations is a somewhat very restricted problem, as the strong gross substitutes property might get lost if one is not careful with the changes in the valuation function.

\section{Conclusion}
We provide a simple, combinatorial algorithm to compute minimal maximal overdemanded sets and minimal maximal underdemanded sets for markets where items are available in multiple copies and all buyers have strong gross substitutes valuation functions.
The algorithm is essentially an implementation of a polymatroid sum algorithm using an appropriate oracle model.
It turns out to be quite useful as it allows for a fast execution of the dynamic auction step.

Moreover, we prove that unique minimal packing prices as well as unique maximal covering prices always exist.
The unique minimal packing prices coincide with the minimal Walrasian prices, while the unique maximal packing prices coincide with the maximal Walrasian prices.
The distinction between packing prices and Walrasian prices had been mostly overlooked in the literature.
The clear distinction made here and the aforementioned results allow us to prove monotonicity properties of minimal and maximal Walrasian prices w.r.t. changes in demand and supply.

\medskip
\noindent
\textbf{Acknowledgment.}\quad We thank Andr\'as Frank and Georg Loho for fruitful discussions and comments.

\noindent
\textbf{Data availability statement.} No data are associated with this article.
Data sharing is not applicable to this article.

\bibliographystyle{abbrvnat}
\bibliography{literature}

\appendix
\section*{Appendix}
\section{Examples}\label{app:examples}
\subsection{Walrasian prices may not exist if valuations are not strong gross substitutes}\label{app:Walrasian_prices_dont_exist_without_GS}
\begin{example}\label{example:nongs_nowalrasian}
    Consider an example with two buyers $N = \{1, 2\}$ and three items $E = \{e_1, e_2, e_3\}$.
    The multiplicity of every item is one, i.e., $\U(e) = 1$ for all $e \in E$.
    The valuation functions of the buyers are as follows (note that they are not strong gross substitutes):
    \begin{align*}
        v_1(\chi_S) = \begin{cases}
            2 & \text{if } \{e_1, e_2\} \subseteq S,\\
            1 & \text{if } \{e_1, e_2\} \not\subseteq S, \{e_3\} \subseteq S,\\
            0 & \text{otherwise,}
        \end{cases}
        \qquad
        v_2(\chi_S) = \begin{cases}
            2 & \text{if } \{e_2, e_3\} \subseteq S,\\
            1 & \text{if } \{e_2, e_3\} \not\subseteq S, \{e_1\} \subseteq S,\\
            0 & \text{otherwise.}
        \end{cases}
    \end{align*}
    Here $p = (0, 0, 0)$ is not packing since the unique minimal preferred bundle of buyer 1 is $(1, 1, 0)$, while the unique minimal preferred bundle of buyer 2 is $(0, 1, 1)$.
    The vector $p = (0, 1, 0)$ is packing but it is not covering, since item $e_2$ is not sold.

    Assume that there is a packing and covering allocation $\bz$, i.e. $\sum_{i \in \{1, 2\}} z_i(e) = 1$ for all $e \in E$ and $z_i \in \dset_i(p)$ for $i \in \{1,2\}$.
    \begin{itemize}[leftmargin=*]
        \item In case that $z_1 = (0, 0, 0)$, it holds that $p(e_1) + p(e_2) \geq 2$ and $p(e_3) \geq 1$, since $z_1 \in \dset_1(p)$.
            So, the utility for the second buyer is $v_2(\chi_E) - p(E) \leq 2 - 3 = -1$ and thus, $\chi_E = (1, 1, 1) \notin \dset_2(p)$.
        \item In case that $z_1 = (1, 0, 0) \in \dset_1(p)$, the utility of buyer $1$ needs to be zero and $p(e_1) = 0$, $p(e_2) \geq 2$ and $p(e_3) \geq 1$.
            Hence, the second buyer receives items $e_2$, $e_3$ with utility $v_2(\chi_{\{e_2, e_3\}}) - p(\{e_2, e_3\}) \leq 2 - 3 = -1$ which is a contradiction to $\chi_{\{e_2, e_3\}} \in \dset_2(p)$.
        \item If $z_1 = (0, 1, 0) \in \dset_1(p)$, the utility of buyer $1$ is zero again and $p(e_1) \geq 2$, $p(e_2) = 0$ and $p(e_3) \geq 1$.
            Then, buyer~2 does not receive a preferred bundle since the utility of $\chi_{\{e_1, e_3\}} = (1, 0, 1)$ is given by $v_2(\chi_{\{e_1, e_3\}}) - p(\{e_1, e_3\}) \leq 1 - 2 = -2$.
        \item In case that $z_1 = (0, 0, 1) \in \dset_1(p)$, we know that $p(e_3) \leq 1$ and $p(e_1) + p(e_2) \geq p(e_3) + 1$.
            Since $\chi_{\{e_1, e_2\}} = z_2 \in \dset_2(p)$, it holds that $p(e_2) = 0$ and $v_2(\chi_{\{e_1, e_2\}}) - p(\{e_1, e_2\}) \geq v_2(\chi_{\{e_2, e_3\}}) - p(\{e_2, e_3\})$ which is equivalent to $p(e_1) + 1 \leq p(e_3)$.
           This is a contradiction to $p(e_1) \geq p(e_3) + 1$.
    \end{itemize}
    In case that $\|z_1\|_1 \geq 2$, we obtain the same contradictions by using $\|z_2\|_1 \leq 1$ and the symmetry of the valuations.
    To sum up, there exists no Walrasian price vector $p$ in this market.
\end{example}

\subsection{Packing vectors do not form a lattice given non strong gross substitutes valuations}\label{appendix:packing_no_lattice}
In the following, we show that the component-wise minimal vector of packing prices is not necessarily packing if valuations are not strong gross substitutes.
Note that Walrasian prices are not guaranteed to exist if buyer valuations are not all strong gross substitutes \citep{kelso1982job}.
Buyer-optimal packing prices are neither unique nor covering in the following example.

\begin{example}\label{example:nogs_nolattice}
    Given a market with four items $E = \{e_1, e_2, e_3, e_4\}$ where $\U(e) = 1$ for each $e \in E$ and two buyers.
    The valuations for the first buyer are given by the item-wise valuations $\tilde{v}_1 = (6, 6, 6, 10)$ and the independent sets $\I_1 = \{I \subseteq F \mid F \in \{\{e_1, e_2, e_3\}, \{e_4\}\}\}$ (i.e., not a matroid).
    Thus, the valuation of a bundle $\chi_S$ is given by $v_1(S) = \max_{I \in \I_1} \sum_{e \in I \cap S} \tilde{v}_1(e)$.
    The valuation function for the second buyer is defined similarly using item-wise valuations $\tilde{v}_2 = (10, 6, 6, 6)$ and the independent sets $\I_2 = \{I \subseteq F \mid F \in \{\{e_1\}, \{e_2, e_3, e_4\}\}\}$.

    First we show that there is no packing price vector $p$ with $\sum_{e \in E} p(e) < 8$.
    Consider such a vector $p$ and let $E'$ be all elements in $e \in E$ with $p(e) < 6$.
    By definition, $\abs{E \setminus E'} \leq 1$.
    The unique preferred bundle for buyer~1 is $\{e_1, e_2, e_3\} \cap E'$ and for buyer~2, it is $\{e_2, e_3, e_4\} \cap E'$.
    Since $\abs{\{e_1, e_2, e_3\} \cap \{e_2, e_3, e_4\}} = 2$ the intersection of the preferred bundles is non-empty, so $p$ cannot be a packing price vector.

    Moreover, there are packing price vectors with $\sum_{e \in E} p(e) = 8$, namely $p = (0, 2+\alpha, 6-\alpha, 0)$ with $\alpha \in [0, 4]$ are packing price vectors, since the first buyer is indifferent between $\{e_1, e_2, e_3\}$ and $\{e_4\}$ and the second one is indifferent between $\{e_1\}$ and $\{e_2, e_3, e_4\}$.
    Thus, the allocation $\{e_4\}$ to the first buyer and $\{e_1\}$ to the second buyer is packing.
    This family of packing prices gives a component-wise minimal vector $(0, 2, 2, 0)$ which itself is not packing.
    However, since $\sum_{e \in E} p(e) = 8$ for all price vectors in the defined family, they are all buyer-optimal packing prices.
    Hence, clearly, the buyer-optimal packing prices are not unique and moreover these buyer-optimal packing price vectors are all not covering.
\end{example}

Note that the valuations functions in this example are not strong gross substitutes (in contrast to weighted matroid rank valuations the independent sets do not fulfill the exchange property and thus, do not form a matroid).
Consider the valuation function of the first buyer.
It holds that$ \{e_1, e_2, e_3\} \in D_1(0)$.
But if we increase the prices of items $e_1$ and $e_2$ by $6$, i.e., $p = (6, 6, 0, 0)$, there is no demand for $e_3$ anymore since $D_1(p) = \{\{e_4\}\}$.
Similarly, the valuation function of the second buyer is also not strong gross substitutes.

\subsection{Covering vectors do not form a lattice given non strong gross substitutes valuations}
\label{appendix:covering_no_lattice}

Here we give a example that the join of two packing price vectors is not necessarily packing if the valuation functions are not gross substitutes.
\begin{example}
    Given a market with three items $E=\{e_1,e_2,e_3\}$ and unit-supply, i.e, $b(e)=1$ for each $e \in E$, and two buyers.
	Both buyers have the same valuation function, given by the following table:
    \begin{center}
    \begin{tabular}{C{7ex}|C{7ex}C{7ex}C{7ex}C{7ex}C{7ex}C{7ex}C{7ex}C{7ex}C{1ex}}
        $S$ & $\emptyset$ & $\{e_1\}$ & $\{e_2\}$ & $\{e_3\}$ & $\{e_1,e_2\}$ & $\{e_1,e_3\}$ & $\{e_2,e_3\}$ & $\{e_1,e_2,e_3\}$ & \\
        \midrule
        $v_i(S)$ & 0 & 7 & 7 & 8 & 14 & 13 & 13 & 18
    \end{tabular}
    \end{center}
    Then $p=(6,7,7)$ is covering, since $\dset_i(p)=\{\{e_1\},\{e_1,e_2\},\{e_3\}\}$.
	Moreover, $p'=(6,7,7)$ is covering with $\dset_i(p)=\{\{e_2\},\{e_1,e_2\},\{e_3\}\}$.
    However, $q = p \vee p'=(7,7,7)$ is not covering, as $\dset_i(q)=\{\{e_3\}\}$.

    Note there is a Walrasian price vector above all these price vectors, which is $\pmax=(7,7,8)$.
\end{example}

\subsection{Maximal Walrasian prices are not monotone with respect to changes in the valuations}
\label{appendix:maxWalrasian_valuation_monotone}

As already seen for minimal Walrasian prices, the maximal Walrasian prices do not change in a monotone way when the valuations are changed:
\begin{example}
    Let there be four items $E=\{e_1,e_2,e_3\}$ and three buyers with unit-demand valuations i.e., $v_i(S) = \max_{e\in S} v_{ie}$.
	The valuations per item and the corresponding maximal Walrasian prices can be found in the following table:

    \vspace{1ex}
    \begin{center}
    \begin{tabular}{ccc|c}
        \toprule
        valuations buyer 1 & valuations buyer 2 & valuations buyer 3 & max Walrasian price \\
        $(v_{11},v_{12},v_{13},v_{14})$ & $(v_{21},v_{22},v_{23},v_{24})$ & $(v_{31},v_{32},v_{33},v_{34})$ & $\pmax$ \\
        \midrule
        $(0,9,1,1)$ & $(6,10,0,0)$ & $(4,0,1,1)$ & $(4,8,0,0)$\\
        $(0,9,2,1)$ & $(6,10,0,0)$ & $(4,0,1,1)$ & $(3,7,0,0)$\\
        $(0,9,2,2)$ & $(6,10,0,0)$ & $(4,0,1,1)$ & $(3,7,0,0)$\\
        $(0,10,2,2)$ & $(6,10,0,0)$ & $(4,0,1,1)$ & $(4,8,0,0)$\\
        \bottomrule
    \end{tabular}
    \end{center}
    Hence, also for the maximal Walrasian prices, the price can either stay the same, go up or go down, i.e., there is no monotone behavior.
\end{example}

\subsection{Example of running time of different oracle calls}\label{app:oracle_calls}
Let us consider an OXS valuation function $v_i$ in the unit-supply case.
This is described by a weighted bipartite graph $(E \cup T_i, H_i, w^{(i)})$.
For a vector $z_i \in \{0, 1\}^E$, we let $S_i = \supp(z_i)$.
Then, $v_i(z_i)$ is the maximum weight of a matching between the sets $S_i$ and $T_i$.

Thus, the time $\BO$ needed to find a set $z_i \in \DM_i(p)$ amounts to the time needed to compute a maximum weight matching.
This can be done using the Hungarian algorithm; better algorithms exist if the graph is not dense, or in the weakly polynomial regime.
However, given $z_i \in \DM_i(p)$ along with an optimal matching and an optimal dual solution, the time $\EO$ of a $w_i(e, f)$ query is the time of finding a single augmenting path.

\subsection{Example of an exchange graph}\label{apx:examples:exchange-graph}
Suppose there are six items $E = \{e_1, e_2, e_3, e_4, e_5, e_6\}$, every $e \in E$ is available in quantity $\U(e) = 2$, and there are three buyers $N = \{{\color{rwthblue}Blue}, {\color{rwthred}Red}, {\color{rwthgreen}Green}\}$.

Further suppose at some stage of the auction, the allocation $\bz = (z_\BB, z_\RR, z_\GG)$ is as follows:
\begin{align*}
    z_\BB &= (2, 2, 2, 1, 0, 0),\\
    z_\RR &= (2, 1, 0, 0, 1, 1),\\
    z_\GG &= (0, 2, 0, 1, 1, 0).
\end{align*}
Thus, the items $\{e_1, e_2\}$ are oversold and $\{e_6\}$ is undersold at current prices.
The auctioneer also queries each buyer $i \in N$ for the values $w_i(e, f)$ for each $e, f \in E$ with $z_i(f) > 0$ and $z_i(e) < b(e)$.
Suppose the buyers answer with the following non-zeros:
\begin{alignat*}{6}
    w_\BB(e_4, e_1) &= 1, \qquad    & w_\RR(e_3, e_1) &= 2, \qquad & w_\GG(e_1, e_5) &= 1,\\
    w_\BB(e_4, e_2) &= 1, \qquad    & w_\RR(e_3, e_2) &= 1, \qquad & w_\GG(e_3, e_4) &= 1,\\
    w_\BB(e_4, e_3) &= 1, \qquad    & w_\RR(e_5, e_6) &= 1, \qquad & w_\GG(e_4, e_2) &= 1,\\
                    &               &                 &            & w_\GG(e_4, e_5) &= 1.
\end{alignat*}
Then the exchange graph looks as depicted as in \Cref{fig:ex-exchange-graph}.
Note that there is no path from an undersold to an oversold item (otherwise the polymatroid sum problem would not have been solved to optimality).
The set of items from which we can reach an oversold item is $\{e_1, e_2, e_3, e_4\}$, which then by \Cref{thm:computation_overd-set} is a minimal maximal overdemanded set (i.e., not just the oversold items themselves but also suitable substitutes for them).
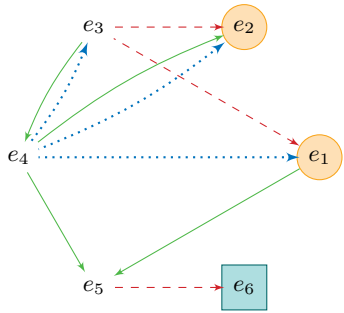
\begin{figure}
    \centering
    \begin{tikzpicture}
        \node (1) at (0:2) {$e_1$};
        \node (2) at (60:2) {$e_2$};
        \node (3) at (120:2) {$e_3$};
        \node (4) at (180:2) {$e_4$};
        \node (5) at (240:2) {$e_5$};
        \node (6) at (300:2) {$e_6$};

        \draw[<-,rwthblue,dotted,thick] (1) to (4);
        \draw[<-,rwthblue,dotted,thick] (2) to[bend left=8] (4);
        \draw[<-,rwthblue,dotted,thick] (3) to[bend left=8] (4);

        \draw[<-,rwthred,dashed] (1) to (3);
        \draw[<-,rwthred,dashed] (2) to (3);
        \draw[<-,rwthred,dashed] (6) to (5);

        \draw[<-,rwthgreen,solid] (4) to[bend left=8] (3);
        \draw[<-,rwthgreen,solid] (2) to[bend right=8] (4);
        \draw[<-,rwthgreen,solid] (5) to (1);
        \draw[<-,rwthgreen,solid] (5) to (4);

        \begin{pgfonlayer}{background}
			\tube{(1)}{rwthorange,fill=rwthorange!30}{17}
			\tube{(2)}{rwthorange,fill=rwthorange!30}{17}
            \ubox{(6)}{rwthpetrol,fill=rwthturquoise!30}{17}
		\end{pgfonlayer}
    \end{tikzpicture}
    \caption{Exchange graph $G(\bz) = (E, A)$ where blue (dotted) lines indicate $A_\BB$, red (dashed) lines indicate $A_\RR$, and green (solid) lines indicate $A_\GG$.
	Oversold items are indicated by orange cycles, undersold items are indicated by turquoise squares.}
    \label{fig:ex-exchange-graph}
\end{figure}

\newpage
\section{Missing proofs}\label{apx:missing-proofs}

\subsection{Packing, Covering and Walrasian}
\label{apx:packing_covering}

A Walrasian price vector is packing and covering.
Now, we show the other direction, i.e., that $p$ is Walrasian if it is packing and covering.

\begin{lemma}
    \label{lem:packing_covering_Walrasian}
	Let $p$ be a price vector in an auction with strong gross substitutes valuations that is both packing and covering.
	Then $p$ is Walrasian.
\end{lemma}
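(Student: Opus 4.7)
The plan is to apply \Cref{lem:lyapunov} by showing that $p$ is a global minimizer of the Lyapunov function $L$. To this end, I would first translate the packing and covering assumptions into local optimality of $L$ at $p$, and then invoke the local-to-global minimality principle for $\Lnat$-convex functions.

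For the first step, a direct calculation---using that the minimum of $z(S)$ over $z \in \dset_i(p)$ is attained inside $\DM_i(p)$, together with the polymatroid base description in \Cref{lemma:rk-demand_sets}---yields the identities
\[ L(p + \chi_S) - L(p) = -\overd^p(S) \quad \text{and} \quad L(p - \chi_S) - L(p) = -\underd^p(S) \quad \text{for every } S \subseteq E. \]
This is implicit in the discussion preceding \Cref{lem:minmax-over} and is proved via the copy trick in \Cref{apx:missing-proofs:max-overd-steep}. Since $p$ is packing, \Cref{lem:packing} gives $\overd^p(S) \leq 0$ for every $S \subseteq E$, and hence $L(p + \chi_S) \geq L(p)$. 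Symmetrically, since $p$ is covering, \Cref{lem:covering} gives $L(p - \chi_S) \geq L(p)$ for every $S \subseteq E$.

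For the second step, I would appeal to \citet{murota2013computing}, who showed that for strong gross substitutes valuations the Lyapunov function is $\Lnat$-convex. A standard property of $\Lnat$-convex functions is the local-to-global principle: $p$ is a global minimizer of $L$ if and only if $L(p \pm \chi_S) \geq L(p)$ for every $S \subseteq E$. The inequalities just established are precisely this local optimality, so $p$ globally minimizes $L$ and is Walrasian by \Cref{lem:lyapunov}. The main obstacle is verifying the two drop identities cleanly; once those are in hand, the proof reduces to a one-line invocation of the $\Lnat$-convex local-to-global principle.
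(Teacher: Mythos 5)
Your proposal is essentially correct and takes a genuinely different route from the paper. The paper's own proof (\Cref{apx:packing_covering}) never touches the Lyapunov function: it fixes a covering allocation $\bz$ and a packing allocation $\mathbf{y}$ minimizing $\sum_{i\in N}\sum_{e\in E}|z_i(e)-y_i(e)|$, and uses \Mnat-concavity of the utilities (\Cref{lemma:gs_is_Mnat-convave}) to exchange an item into $\mathbf{y}$ whenever some item is not fully sold, contradicting minimality; this is elementary, works verbatim for arbitrary real price vectors, and yields strictly more, namely a single allocation that is simultaneously packing and covering. Your route --- packing $\Leftrightarrow$ no overdemanded set $\Rightarrow L(p+\chi_S)\geq L(p)$, covering $\Leftrightarrow$ no underdemanded set $\Rightarrow L(p-\chi_S)\geq L(p)$, then the \Lnat-convex local-to-global principle and \Cref{lem:lyapunov} --- is a legitimate alternative; the identities you need are exactly \Cref{lem:max-overd-is-steepest-descent} and \Cref{lem:max-underd-is-steepest-descent} with $T=\emptyset$, and the dependencies are non-circular, since \Cref{lem:packing} and \Cref{lem:covering} are proved from the polymatroid min-max theorem independently of this lemma. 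What your approach buys is brevity modulo cited machinery and a clean restatement ``$p$ packing and covering iff $p$ minimizes $L$''; what the paper's approach buys is self-containedness and validity without any integrality considerations.

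The one genuine caveat concerns non-integral prices, which the lemma as stated allows. First, the drop identity $V_i(p)=V_i(p+\chi_S)+\check{\mr}_i^p(S)$ underlying your equalities holds only at integral prices (with integral valuations): for a single item with $v(\chi_e)=1$ and $p(e)=1/2$ one gets $V_i(p)=1/2$ but $V_i(p+\chi_e)+\check{\mr}_i^p(\{e\})=1$. This is harmless, because the inequality you actually need, $L(p+\chi_S)-L(p)\geq-\overd^p(S)$, follows for every real $p$ by evaluating $V_i(p+\chi_S)$ at a bundle $z^*\in\dset_i(p)$ minimizing $z(S)$, and analogously for $-\chi_S$ and $\underd^p(S)$. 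Second, and more substantively, the local-to-global principle you invoke is a statement about unit steps on the integer lattice; at a non-integral $p$, the condition $L(p\pm\chi_S)\geq L(p)$ does not by itself imply global minimality of a piecewise linear convex function (a univariate kink at $0.4$ already defeats the unit-step test at $0$). So, as written, your argument proves the lemma for integral $p$ only. It can be repaired to cover all real $p$ by passing to directional derivatives --- one checks $L'(p;\chi_S)=-\overd^p(S)\geq 0$ and $L'(p;-\chi_S)=-\underd^p(S)\geq 0$ and invokes the minimality criterion for polyhedral \Lnat-convex functions --- or one restricts attention to the integral prices arising in the auctions. With that repair the proof is complete.
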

\begin{proof}
    Let $\bz = (z_1, \dots, z_n)$ be a covering allocation with respect to $p$.
    Further, let $\mathbf{y} = (y_1, \dots, y_n)$ be a packing allocation with respect to $p$ that, among all such collections, minimizes $\sum_{i \in N} \sum_{e \in E} |z_i(e) - y_i(e)|$.
    Note that such allocations exists as $p$ is both packing and covering.
    We will show that we have $\sum_{i \in N} y_i(e) = b(e)$ for every $e \in E$, i.e., $\mathbf{y}$ is also covering and thus $p$ is Walrasian.

    Assume towards a contradiction that this is not the case, i.e., there is an $e \in E$ such that $\sum_{i \in N} y_i(e) \leq b(e)-1$ (where we used that $b$ and all $y_i$ are integral).
    As $ b(e) \leq \sum_{i \in N} z_i(e)$, we know that there is a $j \in N$ such that $y_j(e) < z_j(e)$.
    As $v_j$ is gross substitutes, it is \Mnat-concave by \Cref{lemma:gs_is_Mnat-convave}.
    Hence, $u_j$ is \Mnat-concave too, as the difference of an \Mnat-concave function and a modular function.
    As $e \in \supp^+(z_j-y_j)$, we can infer that one of the following holds:
    \begin{enumerate}[label=(M\arabic*),leftmargin=*]
        \item $u_j(z_j) + u_j(y_j) \leq u_j(z_j-\chi_e) + u_j(y_j+\chi_e)$, or\label{M1}
        \item there exists $f \in \supp^-(z_j-y_j)$ such that $u_j(y_j) + u_j(z_j) \leq u_j(z_j-\chi_e+\chi_f) + u_j(y_j+\chi_e-\chi_f)$.\label{M2}
    \end{enumerate}
    First, assume that \ref{M1} holds true.
    Then $y_j, z_j \in \dset_j(p)$ implies that both $z_j-\chi_e$ and $y_j+\chi_e$ are contained in $\dset_j(p)$ as well.

    Define an allocation $\mathbf{y}' = (y'_1, \dots, y'_n)$ by setting $y'_i = y_i$ for all $i \in N \setminus \{j\}$ and $y'_j = y_j + \chi_e$.
    Then $y'_i \in \dset_i(p)$ holds for all $i \in N$.
    Furthermore, we have $\sum_{i \in N} y'_i(g) = \sum_{i \in N} y_i(g) \leq b(g)$ for all $g \in E \setminus \{e\}$, and $\sum_{i \in N} y'_i(e) = \sum_{i \in N} y_i(e) + 1 \leq b(e)$.
    Moreover, we have $|z_i(g) - y'_i(g)| = |z_i(g) - y_i(g)|$ for all $(i, g) \in N \times E \setminus \{(j, e)\}$, and $|z_j(e) - y'_j(e)| = |z_j(e) - y_j(e)| - 1$ because $e \in \supp^+(z_j-y_j)$.
    But this contradicts our assumption that $\mathbf{y}$ minimizes $\sum_{i \in N} \sum_{g \in E} |z_i(g)-y_i(g)|$.

    Next, assume that \ref{M2} holds.
    Again, $y_j, z_j \in \dset_j(p)$ allows us to conclude that both $z_j-\chi_e+\chi_f$ and $y_j+\chi_e-\chi_f$ are contained in $\dset_j(p)$ as well.
    Define a collection of bundles $y'$ by setting $y'_i = y_i$ for all $i \in N \setminus \{j\}$, and $y'_j = y_j+\chi_e-\chi_f$.
    Then $y'_i \in \dset_i(p)$ for all $i \in N$.
    Moreover, $\sum_{i \in N} y'_i(g) = \sum_{i \in N} y_i(g) \leq b(g)$ for all $g \in E \setminus \{e,f\}$, $\sum_{i \in N} y'_i(e) = \sum_{i \in N} y_i(e) + 1 \leq b(e)$ and $\sum_{i \in N} y'_i(f) = \sum_{i \in N} y_i(f) - 1 \leq b(f)$.
    Additionally, for $(i, g) \in N \times E \setminus \{(j, e), (j, f)\}$, we have $|z_i(g) - y'_i(g)| = |z_i(g) - y_i(g)|$.
    As $e \in \supp^+(z_j-y_j)$ and $f \in \supp^-(z_j-y_j)$, we further know that $|z_j(e) - y'_j(e)| = |z_j(e) - y_j(e)| - 1$ and $|z_j(f) - y'_j(f)| = |z_j(f) - y_j(f)| - 1$.
    But again, this contradicts the fact that $y$ minimizes $\sum_{i \in N} \sum_{e \in E} |z_i(e)-y_i(e)|$.

    To sum up, we obtained a contradiction in both cases and hence, we have $\sum_{i \in N} y_i(e) = b(e)$ for all $e \in E$.
    Thus, $\mathbf{y}$ is packing and covering.
    We conclude that $p$ is a Walrasian price vector.
\end{proof}

\subsection{Gross substitutes valuations yield \M-convex demand sets}\label{apx:missing-proofs:GS-Mconv}
\LemGSMconv*
\begin{proof}
    By \Cref{lemma:gs_is_Mnat-convave}, the valuation function $v_i$ is \Mnat-concave.
    Thus, the utility function $u_i$ is also \Mnat-concave as the sum of an \Mnat-concave and a modular function.
    Let $x, y \in \DM_i(p)$ and let $e \in \supp^+(x-y)$.
    Then, since $u_i$ is \Mnat-concave, either $u_i(x) + u_i(y) \leq u_i(x - \chi_e) + u_i(y + \chi_e)$, or there is $f \in \supp^-(x-y)$ with $u_i(x) + u_i(y) \leq u_i(x - \chi_e + \chi_f) + u_i(y + \chi_e -\chi_f)$.
    In the first case, $x - \chi_e, y + \chi_e \in \dset_i(p)$, and in the second case, $x - \chi_e + \chi_f, y + \chi_e - \chi_f \in \dset_i(p)$.
    The first case is impossible since $\DM_i(p)$ is the set of inclusion wise minimal vectors in $\dset_i(p)$, but $x - \chi_e \leq x$.
    Hence, the second case holds and it remains to show that there is no preferred bundle strictly included in $x' \coloneqq x - \chi_e + \chi_f$ or $y' \coloneqq y + \chi_e - \chi_f$.

    W.l.o.g.\ consider $x'$ and assume that there is a preferred bundle $z \neq x'$ with $z \leq x'$.
	Hence, there exist an item $g$ with $z(g) < x'(g) \leq x(g)$.
    As $x$ is a minimal preferred bundle, we must have $z(f)=x(f)+1$, as otherwise $z \leq x$ as well.
    Moreover, $z(e) \leq x(e)-1$ as $z \leq x'$.
    Hence, $f \in \supp^+(z-x)=\{f\}$ and thus by \Mnat-concavity of $u_i$, we have $u_i(z) + u_i(x) \leq u_i(x + \chi_f) + u_i(z - \chi_f)$ or there exists an item $g \in \supp^-(z-x)$ with $u_i(z) + u_i(x) \leq u_i(x + \chi_f - \chi_g) + u_i(z - \chi_f + \chi_g)$.
	By minimality of $x$, the first case is not possible since $z - \chi_f \leq x$ with $z-\chi_f \neq x$.

    Thus, we can focus on the second case with $z - \chi_f + \chi_g \in \dset_i(p)$.
	By definition of $z$ and $g$, we know that $z-\chi_f+\chi_g \leq x$.
	Hence, by minimality of $x$ it follows that $z-\chi_f+\chi_g = x$.
	As $z(e)<x(e)$, this implies $g=e$.
	However, this yields $z(e)+1\leq x'(e)<x(e)$, a contradiction.

    Hence, $x - \chi_e + \chi_f, y + \chi_e - \chi_f \in \DM_i(p)$ holds and we have shown that $\DM_i(p)$ is an \M-convex set.

    The proof that $\dhat_i(p)$ is an M-convex set follows the same lines.
\end{proof}

\subsection{Underdemanded sets}
\label{app:underdemanded_sets}

\LemCovering*
\begin{proof}
    Observe that for any price vector $p$, we have
    \begin{equation}
        \label{eq:proof_no_underdemanded_set}
        \max_{y \in \dhat(p)} \left\{ \sum_{e \in E} \min \left\{ \sum_{i \in N} y_i(e), b(e) \right\} \right\} \leq \sum_{e \in E} b(e).
    \end{equation}
    Moreover, equality holds in \eqref{eq:proof_no_underdemanded_set} if and only if there is a covering allocation $z \in \dhat(p)$, i.e., $p$ is covering.
    Now, assume that $p$ is covering.
    Using the Min-Max Theorem for polymatroid sum (\Cref{thm:poly-union}), we get
    \begin{equation*}
        \min_{T \subseteq E} \left\{ b(T) + \sum_{i \in N} \rkmax(E\setminus T) \right\}
        = \max_{y \in \dhat(p)} \left\{ \sum_{e \in E} \min \left\{ \sum_{i \in N} y_i(e), b(e) \right\} \right\}
        = \sum_{e \in E} b(e).
    \end{equation*}
    We obtain
    \begin{align*}
        && b(T) + \sum_{i \in N} \rkmax(E\setminus T) &\geq b(E) \qquad &\text{for all } T \subseteq E\\
        \Leftrightarrow \qquad && b(E \setminus S) + \sum_{i \in N} \rkmax(S) &\geq b(E) \qquad &\text{for all } S \subseteq E\\
        \Leftrightarrow \qquad && 0 &\geq \underd^p(S) &\text{for all } S \subseteq E
    \end{align*}
    i.e., there is no underdemanded set $S$.

    Next we show that there is a covering allocation if there is no underdemanded set.
    By the same arguments as above, we get
    \begin{equation}
        \label{eq:proof_covering_allocation}
        b(E) \leq \min_{T \subseteq E} \left\{ b(T) + \sum_{i \in N} \rkmax(E\setminus T) \right\}
        = \max_{y \in \dhat} \left\{ \sum_{e \in E} \min \left\{ \sum_{i \in N} y_i(e), b(e) \right\} \right\}
        \leq \sum_{e \in E} b(e),
    \end{equation}
    where the first inequality holds as there is no underdemanded set $S$ and the next equality holds by the Min-Max Theorem for polymatroid sum (\Cref{thm:poly-union}).
    Thus, \eqref{eq:proof_covering_allocation} holds with equality everywhere, especially for the last equation.
	But this is only possible if there is a covering allocation $z \in \dhat$.
	Hence, the prices $p$ are covering if there is no underdemanded set.
\end{proof}

\LemMinMaxUnder*
\begin{proof}
    Let $S$ be a maximal underdemanded set and let $T\coloneqq E\setminus S$.
    By \Cref{lemma:max_underdemanded_iff}, \Cref{thm:poly-union} and optimality of $\bz$, we have
    \[
        \sum_{e \in E} \min \left\{\sum_{i \in N} z_i(e), \U(e)\right\} = \sum_{i \in N} \rkmax^p_i(E\setminus T) + \U(T).
    \]
    By \Cref{lem:poly-union-opt}, $S = E \setminus T$ satisfies all three properties stated in this lemma.
    Moreover, if two sets $S$ and $S'$ satisfy these three properties, then $S \cap S'$ and $S \cup S'$ also satisfy them.
    Hence, there exists a unique minimal set $S$ satisfying these three properties.
    By \Cref{lem:poly-union-opt}, $E\setminus S$ is a minimizer of $T\mapsto \sum_{i \in N} \rkmax^p_i(E\setminus T) + \U(T)$, and for every other minimizer $T'$, we have $S\subseteq E\setminus T'$.
    Hence, $S$ is the minimal maximal underdemanded set for prices $p$.
\end{proof}

\ThmCompUnderdSet*
\begin{proof}
    We first show that $R \subseteq S$.
    By \Cref{lem:minmax-under}, $S$ includes all undersold items.
    Moreover, $S$ is $i$-tight for every $i \in N$, which, by \Cref{lem:tight-sets-weights}, means that whenever $e \in S$ and $(e, f) \in A$, we have $f \in S$ as well.
    Hence, $R \subseteq S$.

    As $S$ does not include any oversold item by \Cref{lem:minmax-under}, $R$ does not include any oversold item, either.
    By definition, $R$ contains all undersold items and $R$ does not have any leaving arc in $\widehat{G}(\bz)$.
    By \Cref{lem:tight-sets-weights}, this implies that $R$ is $i$-tight for every $i \in N$.
    Hence, $R$ satisfies all three conditions listed in \Cref{lem:minmax-under}.
    Minimality of $S$, thus, allows us to conclude that $S = R$.
\end{proof}

\subsection{Maximal overdemanded [underdemanded] sets are steepest descent directions of the Lyapunov function}\label{apx:missing-proofs:max-overd-steep}
\citet{ben2017walrasian} provided two alternative definitions for gross substitutes, adding to the standard one from Definition~\ref{def:GS} and the one from discrete convexity (see \citep{fujishige2003note} and \citep{murota2013computing}) as stated in Lemma~\ref{lemma:gs_is_Mnat-convave}.

Ben-Zwi's definition is in terms of change in a buyer's indirect utility function.
We add this lemma here to have a similar definition for the multi-supply setting, which Ben-Zwi did not consider.
The lemma holds in the same way as in the unit-supply setting by using Lemma~\ref{lem:Walrasian-reduction}.
\begin{lemma}\label{lem:BenGS}
    Given a valuation function $v_i\colon \rng \to \R_+$, the following are equivalent:
    \begin{enumerate}[ref=(\arabic*)]
        \item\label{lem:BenGS:2} For all $p \in \R_+^E$ and $S \subseteq E$, $V_i(p) = V_i(p+\chi_S) + \check{\mr}_i^p(S)$.
        \item\label{lem:BenGS:1} For all $p \in \R_+^E$ and $S \subseteq E$, $V_i(p) = V_i(p-\chi_S) - \rkmax_i^p(S)$.
        \item\label{lem:BenGS:3} The function $v_i$ is non-decreasing and gross substitutes.
    \end{enumerate}
\end{lemma}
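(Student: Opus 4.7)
The plan is to reduce the multi-supply statement to its unit-supply counterpart due to \citet{ben2017walrasian}, using the copy trick from \Cref{sec:reduction}. Given the multi-supply instance with valuation $v_i$, supply $\U$, and prices $p$, I pass to the unit-supply instance on the set of copies $\tilde{E} \coloneqq \{\tilde e_1, \ldots, \tilde e_{\U(e)} : e \in E\}$ with projection $\pi\colon \tilde{E} \to E$, copy valuation $\tilde v_i(\tilde S) \coloneqq v_i(\pi(\tilde S))$, and copy prices $\tilde p(\tilde e) \coloneqq p(\pi(\tilde e))$. By \Cref{lem:GS-reduction}, $v_i$ is strong gross substitutes if and only if $\tilde v_i$ is gross substitutes, and monotonicity transfers to $\tilde v_i$ immediately from the definition. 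This reduces condition \ref{lem:BenGS:3} to the corresponding unit-supply condition.

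The next step is to verify that the remaining ingredients of the statement also translate faithfully. First, $V_i(p) = \tilde V_i(\tilde p)$: every $z \in \rng$ arises as $\pi(\tilde S)$ for some $\tilde S \subseteq \tilde{E}$ with $v_i(z) - \pr{p}{z} = \tilde v_i(\tilde S) - \tilde p(\tilde S)$, and conversely. For $S \subseteq E$, setting $\tilde S_0 \coloneqq \pi^{-1}(S)$, the price vector $\tilde p + \chi_{\tilde S_0}$ is precisely the image of $p + \chi_S$ under the reduction, so $V_i(p+\chi_S) = \tilde V_i(\tilde p + \chi_{\tilde S_0})$, and analogously for $p - \chi_S$.

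I would then establish $\check\mr_i^p(S) = \check\mr_i^{\tilde p}(\tilde S_0)$ and $\rkmax_i^p(S) = \rkmax_i^{\tilde p}(\tilde S_0)$. This rests on a natural correspondence: given $\tilde S \in \DM_i(\tilde p)$, its projection $z = \pi(\tilde S)$ lies in $\DM_i(p)$, because any $z' \leq z$ with $z' \neq z$ in $\dset_i(p)$ would allow us to delete a corresponding copy from $\tilde S$ and stay in $\dset_i(\tilde p)$, contradicting inclusion-minimality; conversely every $z \in \DM_i(p)$ lifts to some $\tilde S \in \DM_i(\tilde p)$ by selecting $z(e)$ copies of each $e$. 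Under this correspondence $|\tilde S \cap \tilde S_0| = z(S)$, yielding the identity for $\check\mr$; the analog for $\dhat_i$ and $\rkmax$ is symmetric. Once these identifications are in hand, each of the equivalences in the statement follows directly from the unit-supply result applied to $\tilde v_i$ and $\tilde p$.

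The main obstacle is the correspondence between $\DM_i(p)$ (an \M-convex subset of $\rng$ whose minimality is coordinate-wise) and $\DM_i(\tilde p)$ (an \M-convex subset of $2^{\tilde E}$ whose minimality is inclusion-wise). Although intuitively clear, this needs a short, careful argument that exploits the symmetry between copies of the same item at prices $\tilde p$: any two lifts of the same $z \in \DM_i(p)$ lie in $\DM_i(\tilde p)$, and two bundles in $\DM_i(\tilde p)$ project to the same $z$ precisely when they differ by a permutation of copies within each fibre $\pi^{-1}(e)$. Handling this bookkeeping cleanly is the only nontrivial part of the reduction; everything else is a direct translation.
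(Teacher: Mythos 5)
Your route is the same one the paper takes: the paper's entire justification for this lemma is the one-line remark that it ``holds in the same way as in the unit-supply setting'' via the copy trick of \Cref{sec:reduction}, deferring to \citet{ben2017walrasian}, so your write-up actually supplies more of the translation than the paper does. The identifications you spell out are correct: $V_i(p)=\tilde V_i(\tilde p)$ for copy-symmetric prices, $p+\chi_S$ corresponding to $\tilde p+\chi_{\pi^{-1}(S)}$, and the bijection (up to permuting copies within fibres) between coordinate-wise minimal bundles in $\DM_i(p)$ and inclusion-wise minimal preferred copy-bundles, which gives $\check{\mr}_i^p(S)$ and $\rkmax_i^p(S)$ as the corresponding unit-supply quantities on fibre-closed sets. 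This covers the directions \ref{lem:BenGS:3}$\Rightarrow$\ref{lem:BenGS:2} and \ref{lem:BenGS:3}$\Rightarrow$\ref{lem:BenGS:1}, which are the only ones the paper subsequently uses (in \Cref{lem:max-overd-is-steepest-descent} and \Cref{lem:max-underd-is-steepest-descent}).

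Where your argument does not go through ``directly'' is in the converse directions \ref{lem:BenGS:2}$\Rightarrow$\ref{lem:BenGS:3} and \ref{lem:BenGS:1}$\Rightarrow$\ref{lem:BenGS:3}. The unit-supply theorem of \citet{ben2017walrasian}, in the direction ``identity implies gross substitutes'', requires the identity for \emph{all} price vectors on $\tilde E$ and \emph{all} subsets of $\tilde E$; your multi-supply hypothesis only yields it for copy-symmetric prices and for fibre-closed sets $\pi^{-1}(S)$, so you cannot invoke the unit-supply equivalence for $\tilde v_i$ without an extra argument (either a symmetrization argument showing the restricted identities already force the general ones, or a direct multi-supply analogue of Ben-Zwi's converse at item-level prices). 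Relatedly, \Cref{lem:GS-reduction} as stated gives only the implication from $v_i$ strong gross substitutes to $\tilde v_i$ gross substitutes, not the ``if and only if'' you cite, which is exactly the direction the converse would need. To be fair, the paper's own one-line proof glosses over the same point; but since you state the full equivalence, this step needs to be filled in rather than attributed to the unit-supply result.
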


With this lemma, we can prove that the steepest descent directions coincide either with the maximal overdemanded sets or with the maximal underdemanded sets.

\begin{lemma}[c.f.\ \citep{ben2017walrasian}]\label{lem:max-overd-is-steepest-descent}
    Let $p \geq 0$ be a price vector, then
    \[
        \arg\min \{ L(p + \chi_S) \mid S \subseteq E \} = \arg\max \{\overd^p(S) \mid S \subseteq E \}.
    \]
\end{lemma}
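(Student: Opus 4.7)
The plan is to expand $L(p+\chi_S)$ via its definition and then substitute in the characterization of the indirect utility function given by Lemma~\ref{lem:BenGS}\ref{lem:BenGS:2}. Concretely, by definition of the Lyapunov function,
\[
L(p+\chi_S) = \sum_{i\in N} V_i(p+\chi_S) + \pr{p+\chi_S}{\U}
= \sum_{i\in N} V_i(p+\chi_S) + \pr{p}{\U} + \U(S).
\]

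The key step is to invoke Lemma~\ref{lem:BenGS}\ref{lem:BenGS:2}, which states that for non-decreasing strong gross substitutes valuations one has $V_i(p+\chi_S) = V_i(p) - \check{\mr}_i^p(S)$. Summing over $i\in N$ and plugging back in, I would obtain
\[
L(p+\chi_S) = \sum_{i\in N} V_i(p) + \pr{p}{\U} + \U(S) - \sum_{i\in N}\check{\mr}_i^p(S)
= L(p) - \Bigl(\sum_{i\in N}\check{\mr}_i^p(S) - \U(S)\Bigr) = L(p) - \overd^p(S).
\]

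Since $L(p)$ does not depend on $S$, minimizing $L(p+\chi_S)$ over $S\subseteq E$ is equivalent to maximizing $\overd^p(S)$, and the two argmin/argmax sets coincide. There is no real obstacle here beyond correctly invoking the Ben-Zwi-style identity for the multi-supply setting; the identity itself has been lifted to the multi-supply case via the copy reduction of Section~\ref{sec:reduction} (using Lemma~\ref{lem:Walrasian-reduction} and Lemma~\ref{lem:GS-reduction}) and is stated in Lemma~\ref{lem:BenGS}. The analogous statement for the descending case (maximal underdemanded sets being steepest descent directions of $S\mapsto L(p-\chi_S)$) follows by the symmetric calculation using Lemma~\ref{lem:BenGS}\ref{lem:BenGS:1}.
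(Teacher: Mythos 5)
Your proof is correct and follows essentially the same route as the paper: both rest on Lemma~\ref{lem:BenGS}\ref{lem:BenGS:2} together with the definition of $L$, the only cosmetic difference being that you write the identity as $L(p+\chi_S)=L(p)-\overd^p(S)$ while the paper phrases it as the pairwise difference $L(p+\chi_T)-L(p+\chi_S)=\overd^p(S)-\overd^p(T)$. Your formulation is, if anything, a slightly more direct way of reaching the same conclusion.
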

\begin{proof}[Proof of Lemma~\ref{lem:max-overd-is-steepest-descent}]
    We show that for all $S,T\subseteq E$, we have
    \begin{equation}L(p+\chi_T)-L(p+\chi_S)=\overd^p(S)-\overd^p(T)\label{eq:steepest_descent_overdemanded}.\end{equation}

    This implies the statement of the lemma because for a set $S^*\subseteq E$, we have
    \begin{align*}
    	S^*\in \arg\min \{ L(p + \chi_S) \mid S \subseteq E \} &\Leftrightarrow \forall T\subseteq E\colon L(p+\chi_T)-L(p+\chi_{S^*})\geq 0\\
    	\stackrel{\eqref{eq:steepest_descent_overdemanded}}{\Leftrightarrow} \forall T\subseteq E\colon \overd^p(S^*)-\overd^p(T)\geq 0 &\Leftrightarrow S^*\in\arg\max \{\overd^p(S) \mid S \subseteq E \}.\end{align*}

    To verify \eqref{eq:steepest_descent_overdemanded}, we calculate
    \begin{align*}
    	&\quad L(p+\chi_T) - L(p+\chi_S)\\
    	&= \sum_{i \in N} V_i(p+\chi_T) + \pr{p+\chi_T}{b} - \sum_{i \in N} V_i(p+\chi_S) - \pr{p+\chi_S}{b}\\
    	&= \sum_{i \in N} \left(V_i(p+\chi_T) - V_i(p+\chi_S)\right) + \pr{p+\chi_T}{b} - \pr{p+\chi_S}{b}\\
    	&= \sum_{i \in N} \left((V_i(p) -\check{\mr}_i^p(T)) - (V_i(p) - \check{\mr}_i^p(S))\right) + \pr{\chi_T-\chi_S}{b}\\
    	&= \sum_{i \in N} (\check{\mr}_i^p(S) - \check{\mr}_i^p(T)) + \pr{\chi_T-\chi_S}{b}\\
    	&= \left(\sum_{i \in N} \check{\mr}_i^p(S)-\pr{\chi_S}{b}\right) - \left(\sum_{i \in N} \check{\mr}_i^p(T)-\pr{\chi_T}{b}\right)\\
    	&= \overd^p(S) - \overd^p(T).\qedhere
    \end{align*}
\end{proof}

Analogously, we can also show the following result.
\begin{lemma}[c.f.\ \citep{ben2017walrasian}]\label{lem:max-underd-is-steepest-descent}
	Let $p \geq 0$ be a price vector, then
	\[
	\arg\min \{ L(p - \chi_S) \mid S \subseteq E \} = \arg\max \{\underd^p(S) \mid S \subseteq E \}.
	\]
\end{lemma}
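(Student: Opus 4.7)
The plan is to mirror the proof of \Cref{lem:max-overd-is-steepest-descent} almost verbatim, but with two changes: use sign-flipped perturbations $p-\chi_S$ instead of $p+\chi_S$, and invoke part~\ref{lem:BenGS:1} of \Cref{lem:BenGS} (which governs $V_i(p-\chi_S)$) in place of part~\ref{lem:BenGS:2}. The goal is to establish the identity
\[
L(p-\chi_T) - L(p-\chi_S) = \underd^p(S) - \underd^p(T) \quad \text{for all } S, T \subseteq E,
\]
from which the lemma follows by exactly the same logical step as before: a set $S^*$ minimises $S\mapsto L(p-\chi_S)$ if and only if $\underd^p(S^*)-\underd^p(T)\geq 0$ for every $T\subseteq E$, i.e.\ iff $S^*$ maximises $\underd^p$.

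The computation itself is a short chain of equalities. Expanding the definition of $L$, the terms $\sum_i V_i(p)$ and $V_i$-independent pieces cancel, leaving
\[
L(p-\chi_T) - L(p-\chi_S) = \sum_{i \in N}\bigl(V_i(p-\chi_T) - V_i(p-\chi_S)\bigr) + \langle \chi_S - \chi_T, b\rangle.
\]
By \Cref{lem:BenGS}\ref{lem:BenGS:1}, we have $V_i(p-\chi_R) = V_i(p) + \rkmax_i^p(R)$ for every $R\subseteq E$, so the buyer contributions telescope to $\sum_{i\in N}\bigl(\rkmax_i^p(T) - \rkmax_i^p(S)\bigr)$. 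Rearranging and recalling that $\underd^p(R) = b(R) - \sum_{i\in N}\rkmax_i^p(R)$ yields exactly $\underd^p(S) - \underd^p(T)$.

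The only subtlety worth flagging is that \Cref{lem:BenGS}\ref{lem:BenGS:1} requires the perturbed price vector to stay non-negative (or at least, that the indirect utility formula be valid there); for $p\geq 0$ this may fail on components where $p(e)=0$. This is not a real obstacle because $L$ is defined on all of $\R^E$ and the identity above is a purely algebraic consequence of the formula for $V_i$ applied symmetrically on both sides, so one can either restrict attention to the relevant range of $p$ (as is implicit in the auction context, since underdemanded steps are only taken when decreasing prices remains feasible) or extend $V_i$ by the same formula. No additional work is required: the proof is entirely parallel to that of \Cref{lem:max-overd-is-steepest-descent} and carries no genuine difficulties beyond correctly tracking signs.
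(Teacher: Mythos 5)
Your proof is correct and follows essentially the same route as the paper: establish the identity $L(p-\chi_T)-L(p-\chi_S)=\underd^p(S)-\underd^p(T)$ for all $S,T\subseteq E$ via the expansion of $L$ and the formula $V_i(p-\chi_R)=V_i(p)+\rkmax_i^p(R)$ from \Cref{lem:BenGS}\ref{lem:BenGS:1}, then conclude the argmin/argmax equality. The non-negativity caveat you flag is harmless (the paper does not even raise it), since $V_i$ and $L$ are well defined by the same formulas for arbitrary price vectors.
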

\begin{proof}[Proof of Lemma~\ref{lem:max-underd-is-steepest-descent}]
We show that for all $S,T\subseteq E$, we have
\begin{equation}L(p-\chi_T)-L(p-\chi_S)=\underd^p(S)-\underd^p(T)\label{eq:steepest_descent_underdemanded}.\end{equation}

This implies the statement of the lemma because for a set $S^*\subseteq E$, we have
\begin{align*}
	S^*\in \arg\min \{ L(p - \chi_S) \mid S \subseteq E \} &\Leftrightarrow \forall T\subseteq E\colon L(p-\chi_T)-L(p-\chi_{S^*})\geq 0\\
	 \stackrel{\eqref{eq:steepest_descent_underdemanded}}{\Leftrightarrow} \forall T\subseteq E\colon \underd^p(S^*)-\underd^p(T)\geq 0 &\Leftrightarrow S^*\in\arg\max \{\underd^p(S) \mid S \subseteq E \}.\end{align*}

To verify \eqref{eq:steepest_descent_underdemanded}, we calculate
 \begin{align*}
	&\quad L(p-\chi_T) - L(p-\chi_S)\\
	&= \sum_{i \in N} V_i(p-\chi_T) + \pr{p-\chi_T}{b} - \sum_{i \in N} V_i(p-\chi_S) - \pr{p-\chi_S}{b}\\
	&= \sum_{i \in N} \left(V_i(p-\chi_T) - V_i(p-\chi_S)\right) + \pr{p-\chi_T}{b} - \pr{p-\chi_S}{b}\\
	&= \sum_{i \in N} \left((V_i(p) +\rkmax_i^p(T)) - (V_i(p) + \rkmax_i^p(S))\right) + \pr{\chi_S-\chi_T}{b}\\
	&= \sum_{i \in N} (\rkmax_i^p(T) - \rkmax_i^p(S)) + \pr{\chi_S-\chi_T}{b}\\
	&= \left(\pr{\chi_S}{b}-\sum_{i \in N} \rkmax_i^p(S) \right) - \left(\pr{\chi_T}{b}-\sum_{i \in N} \rkmax_i^p(T)\right)\\
	&= \underd^p(S) - \underd^p(T).\qedhere
\end{align*}
\end{proof}

\end{document}